\title{Towards Communication-Efficient Peer-to-Peer Networks}
\keywords{Peer-to-Peer Networks, Overlay Construction Protocol, Expanders, Broadcast, Geometric Routing}
\author{Khalid Hourani}{Department of Computer Science, University of Houston, Houston, TX, USA}{mail@khourani.com}{https://orcid.org/0000-0002-2367-7124}{K. Hourani was supported in part by NSF grants CCF-1540512, IIS-1633720, and CCF-1717075 and BSF grant 2016419.}
\author{William K. {Moses Jr.}}{Department of Computer Science, Durham University, Durham, UK}{wkmjr3@gmail.com}{https://orcid.org/0000-0002-4533-7593}{Part of the work was done while William K. Moses Jr. was a postdoctoral fellow at the University of Houston, Houston, TX, USA. W. K. Moses Jr. was supported in part by NSF grants CCF-1540512, IIS-1633720, and CCF-1717075 and BSF grant 2016419.}
\author{Gopal Pandurangan}{Department of Computer Science, University of Houston, Houston, TX, USA}{gopal@cs.uh.edu}{https://orcid.org/0000-0001-5833-6592}{G. Pandurangan was supported in part by NSF grants CCF-1540512, IIS-1633720, and CCF-1717075 and BSF grant 2016419.}
\authorrunning{K. Hourani, W.\,K. Moses Jr., and G. Pandurangan}
\begin{document}

\maketitle

\begin{abstract}
    We focus on designing Peer-to-Peer (P2P) networks that enable  efficient communication.
Over the last two decades, there has been substantial algorithmic research on distributed protocols for building P2P
networks with various desirable properties such as high expansion, low diameter, and robustness to a large number of
deletions. A key underlying theme in all of these works is to distributively build a \emph{random graph} topology
that guarantees the above properties. Moreover, the random connectivity topology is widely deployed in many P2P systems
today, including those that implement blockchains and cryptocurrencies. However, a major drawback of using a random
graph topology for a P2P network is that the random topology does not respect the \emph{underlying} (Internet)
communication topology. This creates a large
\emph{propagation delay}, which is a major communication bottleneck in modern P2P networks.

In this paper, we work towards designing P2P networks that are communication-efficient (having small propagation delay)
with provable guarantees. Our main contribution is an efficient, decentralized protocol, $\WEAVER$, that transforms a random graph
topology embedded in an underlying Euclidean space  into a topology that also
respects the underlying metric.  We then present efficient  point-to-point routing and broadcast protocols that achieve
essentially optimal performance with respect to the underlying space. 

\end{abstract}

\thispagestyle{empty}

\setcounter{page}{0}
\newpage
\section{Introduction}
\label{sec:intro}
There has been a long line of algorithmic research on building Peer-to-Peer (P2P) networks (also called overlay networks)
with desirable properties such as connectivity, low diameter, high expansion, and robustness to adversarial deletions
\cite{PRU01,LS03,mihail-p2p,CDG07-soda,CDG07,JP12,focs2015}. 
A key underlying theme in  these works is a distributed protocol to build a \emph{random graph} that
guarantees {these
        desirable properties}. The high-level idea is for a node to connect to a small, but \emph{random}, subset of nodes. In
fact, this \emph{random connectivity} mechanism is used in real-world P2P  networks. For example, in the Bitcoin P2P
network, each node connects to 8 neighbors chosen in a random fashion~\cite{perigee}. It is well-known that a random
(bounded-degree) graph is an expander with high probability.\footnote{In this paper, by expander, we mean one with bounded degree, i.e., the
    degree of all nodes is bounded by a constant or a slow-growing function of $n$, say $\bigO{\polylog{n}}$, where $n$ is
    the network size.}\footnote{For a graph with $n$ nodes, we say that it has a property with high
    probability when the probability is at least $1 - 1/n^c$ for some $c \geq 1$.} An expander graph on $n$ nodes has
high expansion  and conductance, low diameter (logarithmic in the network size) and
robustness to adversarial deletions --- even deleting $\epsilon n$ nodes (for a sufficiently small constant $\epsilon$) leaves
a giant component of $\bigTh{n}$ size which is also an expander~\cite{Hoory,bagchi}.

{Unfortunately}, a major drawback of using a random graph as a P2P network is that the connections are made to random nodes and
do not respect the \emph{underlying} (Internet) communication topology. This causes a large \emph{propagation latency}
or \emph{delay}. Indeed, this is a crucial problem in the Bitcoin P2P network, which has delays as high as 79 seconds on average~\cite{scalingblockchains,perigee}. A main cause for the delay is that the P2P (overlay) network induced by random
connectivity can be highly sub-optimal, since it {ignores} the underlying Internet communication topology (which depends
on geographical distance, among other factors).\footnote{It also ignores differences in bandwidth, hash-{strength},
and computational {power} across peers as well as malicious peers. Addressing these issues is beyond the scope of this paper.} 
The main problem we address in this paper is to show how one can efficiently modify a given random graph topology to
build P2P networks that also have {small \emph{propagation latency}}, in addition
to other properties such as low (hop) diameter and high expansion, with provable guarantees.

\onlyLong{
    Towards this goal, following prior work (see e.g.~\cite{perigee,vivaldi}),
    we model a P2P network as a random graph embedded in an underlying Euclidean space. This model is a reasonable approximation
    to a random connectivity topology on nodes distributed on the Internet (details in Section~\ref{subsec:model}).

    The main contributions of this paper are:
    \begin{enumerate*}[label=(\arabic*)]
        \item a theoretical framework to rigorously quantify performance of P2P communication protocols;
        \item $\WEAVER$, an efficient decentralized  protocol  that converts the random graph topology into a topology that
              also respects the underlying embedding;
        \item efficient point-to-point routing and broadcast algorithms in the modified topology that achieve
                  {essentially} optimal performance. We note that broadcasting is a key application used in
              P2P networks that implement blockchain and cryptocurrencies in which a block must be quickly
              broadcast to all (or most) nodes in the network.
    \end{enumerate*}
}

\subsection{Motivation, Model, and Definitions}
\label{subsec:model}

\textbf{Motivation.}
We consider a random graph network that is used in several prior P2P network construction protocols
(e.g.~\cite{PRU01,LS03,CDG07-soda,focs2015}). As mentioned earlier, real-world P2P networks, such as Bitcoin, also seek
to achieve a random graph topology (which are expanders with high probability~\cite{MUbook,Hoory}). 
Indeed, random graphs have
been used extensively to model P2P networks (see e.g.
\cite{LS03, PRU01, mihail-p2p, CDG07-soda, Mahlmann_2006}).

Before we formally state the model that is based on prior works~\cite{perigee,vivaldi,zihu}, we explain the motivation
behind it; we refer to~\cite{perigee} for more details and give a brief discussion here. Many of today's P2P (overlay)
networks employ the random connectivity algorithm; in fact, this is widely deployed in many cryptocurrency systems.\footnote{In particular, the real-world Bitcoin P2P network, constructed by allowing each node to choose 8 random
    (outgoing) connections ({\cite{Miller2015DiscoveringB,perigee}}) is likely an expander network if the
    connections are chosen (reasonably) uniformly at random~\cite{Palmer_1985_Book}.}
In this algorithm, nodes maintain a small number of connections to other nodes chosen in a random fashion. In
such a topology, for any two nodes $u$ and $v$, any path (including the shortest path) would likely go through nodes
that are not located close to the shortest geographical route (i.e., the geodesic) connecting $u$ and $v$. Such paths
that do not respect the underlying geographical placement of nodes often lead to higher propagation delay. Indeed, it
can be shown that a random topology yields paths with propagation delays much higher than those of paths on topologies
that respect the underlying geography~\cite{perigee}.

To model the underlying propagation costs, several prior works (see e.g., the Vivaldi system~\cite{vivaldi}) have empirically shown that nodes on the Internet can be embedded on a low-dimensional metric
space (e.g., $\reals^5$) such that the  distance between any two nodes accurately captures the communication
delay between them. In fact, the Vivaldi system demonstrates that even embedding the nodes in a 2-dimensional metric
space (e.g., $\mathbb{R}^2$) and using the corresponding distances captures the communication delay quite
well.  In contrast, the paths on a random graph topology
are highly sub-optimal, since they are unlikely to follow the optimal path on the embedded metric space.

The work of Mao et al.~\cite{perigee} illustrates the above disparity using the following example motivated by the above
discussion. Consider a network embedded in the unit square $[0,1] \times [0,1]$\shortOnly{.} \longOnly{(see Figure~\ref{fig:rg}).
    \begin{figure}[H]
        \centering
        \begin{tikzpicture}[scale=5]
    \draw (0, 0) rectangle (1, 1);
    \node[point,label=below:{$u$}] at (0.1505883063288043,0.2566482354253204)(n0){};
    \node[point] at (0.9602742655353219,0.3188449339305045)(n1){};
    \node[point] at (0.8697638197573846,0.14414257408706443)(n2){};
    \node[point] at (0.9345415490626012,0.49788489082499854)(n3){};
    \node[point] at (0.053785343752909776,0.5178746556929065)(n4){};
    \node[point] at (0.4964854474921945,0.8793193449836291)(n5){};
    \node[point] at (0.615405692376732,0.2394921634168461)(n6){};
    \node[point] at (0.1358275871399648,0.34926941874308637)(n7){};
    \node[point] at (0.8684990361615856,0.570272858250284)(n8){};
    \node[point] at (0.7222944016100679,0.6653447639165281)(n9){};
    \node[point] at (0.6941886548288694,0.8707138495965842)(n10){};
    \node[point] at (0.109380742084303617,0.03749300898355845)(n11){};
    \node[point] at (0.10823569561474733,0.5492816855154502)(n12){};
    \node[point] at (0.8187693436456968,0.7522472690559413)(n13){};
    \node[point] at (0.19171454640814014,0.054643425516101307)(n14){};
    \node[point] at (0.30573384324520414,0.36054530449885724)(n15){};
    \node[point] at (0.7878458521961231,0.09702413383994457)(n16){};
    \node[point] at (0.8434101860790817,0.43902888150510866)(n17){};
    \node[point] at (0.7817328213070882,0.09092764046719048)(n18){};
    \node[point,label=above:{$v$}] at (0.805069527577571287,0.8722146626203203)(n19){};
    \draw (n0) -- (n1) -- (n3) -- (n7) -- (n13) -- (n2) -- (n16) -- (n12) -- (n19);
    \draw[thick, dashed, blue] (n0) -- (n19);
\end{tikzpicture}
        \caption{Nodes in a random graph topology embedded in the unit square. Notice that the straight-line distance from $u$ to $v$ is much shorter
            than sum of the distances in a shortest path between the two nodes.}
        \label{fig:rg}
    \end{figure}
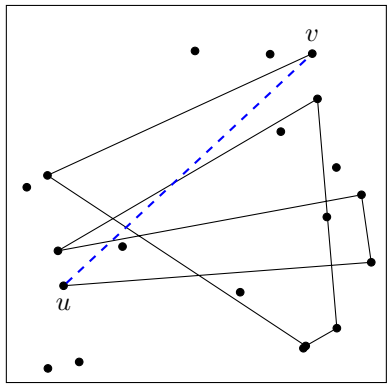
}
The set of nodes (points) $V$ {is drawn uniformly at random within the square}. The Euclidean distance, $\norm{u-v}_2$,
between any two nodes $u, v \in V$ represents the \emph{delay} or \emph{latency} of sending a message from $u$ to $v$
(or vice-versa). We construct a \emph{random graph} on $V$ by connecting each node in the unit-square randomly to a small
constant {number of} other nodes. \longOnly{Figure~\ref{fig:rg} shows the shortest path on this topology between two nodes,
    say $u$ (bottom left corner) to $v$ (top right corner).} Since the random graph does not respect the underlying geometry,
the propagation cost  between $u$ and $v$ --- defined as the sum of the Euclidean distances
of the edges in the shortest path --- is significantly greater than the point-to-point (geodesic) distance ($\norm{u - v}_2$) between them. We formally show this in Theorem~\ref{thm:gstretch}.

By comparison, consider a \emph{random geometric} graph on $V$, where the uniformly distributed
nodes are connected as follows: any two nodes $u$ and $v$ are connected by an edge if they are within a distance $\rho$ of
each other~\cite{Penrose,bincovering-journal}.\onlyLong{\footnote{Equivalently, one can connect each node to its $k$ closest nodes;
        one can show that, for appropriate values of $k$ and $r$, these two models have very similar properties
        \cite{bincovering-journal}.}} This model, called the $G(n,\rho)$ \emph{random geometric graph} model, has connections which respect the underlying geometry. In this graph, we can show that the shortest path between any two nodes $u$
and $v$ is much closer to the geodesic shortest path (the straight line path) between $u$ and $v$~\cite{bincovering-journal,friedrich}.

We note that, while the work of Mao et al.~\cite{perigee} showcases the disparity between a random graph and a random
geometric graph {(as discussed above)} it does not give any theoretical results on how to convert a random graph topology
into a random geometric graph topology.
On the other hand, it gives heuristics to transform
a P2P graph constructed on real-world data to a graph that has smaller propagation delays. The heuristics
are based on rewiring edges to favor edges between nodes that have smaller round-trip delays. It presents
experimental simulations to show that these heuristics do well in practice.  However, they do not formally analyze their algorithm and do not give any theoretical guarantees.

~\\\textbf{Network Model.}
~Motivated by the above discussion, and following prior works~\cite{perigee,vivaldi,zihu}, we model a P2P network $G$ as follows. We
assume $G$ to be a $d$-regular expander (where $d$ is a constant).\footnote{We assume a $d$-regular graph for
    convenience; we could have also assumed that the degree is bounded by some small growing function of $n$, say
    $\bigO{\log n}$.} Note that our results will also apply if $G$ has a random connectivity topology modeled by a
$d$-regular (or bounded degree) \emph{random graph} or a $G(n,p)$ random graph (with $p = \bigTh{\sfrac{\log n}{n}}$).
We note that such random graphs are expanders with high probability (see Definition~\ref{def:expander-graph}) \cite{KS11}. Our
model is quite general in the sense that we only assume that the topology is an expander; no other special properties
are assumed. \onlyLong{(Indeed, expanders have been used extensively to model P2P networks~\cite{LS03, PRU01, mihail-p2p, CDG07-soda, Mahlmann_2006, Augustine_2012, focs2015}.)}
Furthermore, we assume that the nodes of $G$ correspond to points that are distributed \emph{uniformly at random} in a
unit square $[0,1]\times[0,1]$.\footnote{Our model can be generalized to higher dimensions by embedding nodes in an $m$-dimensional hypercube $[0,1]^m$.}
Although, the assumption of nodes being uniformly distributed is strong, based on our experiments on the Bitcoin P2P network, this  appears to be a reasonable first approximation.\footnote{We embedded nodes in the Bitcoin P2P network in a 2-dimensional grid using the Vivaldi algorithm and although there were many outliers, a significant subset of nodes ended up being reasonably uniformly distributed.} Considering more general distribution models is a good direction for future work (cf. Section \ref{sec:related}).

We assume each node $u$ knows its ID and, while node $u$ \emph{need not} know its coordinates, it is able to determine its distance (which captures propagation delay)
to any node $v$ given only the ID of $v$.\footnote{Note that this assumption has to do only with implementing our protocols
    in a localized manner (which is relevant in practice) and does not affect their correctness or efficiency. In the Internet, for example, point-to-point propagation delay can be measured locally: a node can  determine the round-trip-time to another node
    using the \texttt{ping} network utility~\cite{pingbsd}.  On the other hand, it is also possible for a node
    to determine its coordinates --- as mentioned earlier, systems such as Vivaldi~\cite{vivaldi} can
    assign coordinates in a low dimensional space (even $\mathbb{R}^2$) that accurately
    capture the propagation delay between nodes.} In particular, we assume for convenience that
    a node can determine the Euclidean and the Manhattan distances (i.e., $L_2$ and $L_{\infty}$ norms respectively) between itself
    and another node if it knows the ID of that node.

An important assumption is that nodes initially have only \emph{local} knowledge, i.e., they have knowledge of only
themselves and their neighbors in $G$. In particular, they do \emph{not} have any knowledge of the global topology or of
the IDs of other nodes (except  those of their neighbors) in the network. We assume that nodes have knowledge of
the network size $n$ (or a good estimate of it).

We assume a synchronous network where computation and communication proceeds in a sequence of discrete \emph{rounds}.
Communication is via message passing on the edges of $G$. Note that $G$ is a P2P (overlay) network in the sense that a
node $u$ can communicate (directly) with another node $v$ if $u$ knows the ID of $v$. This is a typical assumption in
the context of P2P and overlay networks, where a node can establish communication with another node if it knows the
other node's IP address, and has been used in several prior works (see e.g.~\cite{focs2015,Augustine_2013_PODC,JP12,Pandurangan_2016,time-overlay}).
Note that $u$ can know the ID of $v$ either directly, because $u$ and $v$ are neighbours in $G$, or indirectly, through
received messages. In the latter case, this is equivalent to adding a \emph{``virtual''} edge between $u$ and $v$. Since
we desire efficient protocols, we require each node to send and receive messages of size at most $\polylog{(n)}$ bits in
a round. In fact, a node will also communicate with only $\polylog{(n)}$ other nodes in a round. Additionally,
the number of bits sent per edge per round is $\bigO{\polylog{(n)}}$.

\subsection{Preliminaries}
\label{sec:defns}
We need the following concepts before we formally state the problem that we address and our contributions.\\

\noindent \textbf{Embedded Graph.} We  define an \textit{embedded graph} as follows.
\begin{definition}\label{def:embedded}
    Let $G = (V, E)$ be any graph and consider a random embedding of the nodes $V$ into the unit square, i.e., a uniform and independent
    assignment of coordinates in $[0, 1] \times [0, 1]$ to each node in $V$. This graph, together with this embedding, is called an
    \emph{embedded graph}, and we induce weights on the edge set $E$, with the weight of an edge $(u, v)$ equal to the Euclidean
    distance between the coordinates assigned to $u$ and $v$, respectively.
\end{definition}

\noindent \textbf{Routing Cost.} We next define \emph{propagation cost} to capture the cost of routing along a path in an embedded graph.
\begin{definition}\label{def:prop-cost-def}
    Let $G$ be an embedded  graph. For any path $P = (v_1, v_2, \dots, v_{k - 1}, v_k)$
    the \emph{propagation cost}, also called the routing propagation cost, of the path $P$ is the weight of the path $P$ given by
    $d_G(P) = \sum_{i=1}^{k - 1}d(v_i, v_{i + 1})$,
    i.e., the sum of the weights (the Euclidean distances) of edges along the path. The
    value $k$, denoted by $\hopcount_G(P)$, is the \emph{hop count} (or hop length) of the path $P$.
    The \emph{minimum propagation cost} between the nodes $u$ and $v$  is the weight of the shortest path
    between $u$ and $v$ in the embedded graph.
\end{definition}

Note that the propagation cost between two nodes is lower bounded by the Euclidean distance between them.
Given two nodes, we would like  to route  using a path of \emph{small} propagation cost,
i.e., a path whose propagation cost is  close  to the Euclidean distance between the two nodes. In particular,
we would like the ratio between the two to be small. (We would also like the hop count to be small.)

The following theorem shows that, in a \emph{$d$-regular random graph} $G$ embedded in the unit square, the \emph{ratio}
of the propagation cost of the shortest path between two nodes $u$ and $v$ in $G$ to the Euclidean distance between those nodes  can be as high as $\Omega(\sqrt{n})$
on expectation.
Thus a P2P topology that is modeled by a random graph topology has a high propagation cost for some node pairs.

\onlyLong{
    We first prove the following lemma.

    \begin{lemma*}\label{lem:embedded-nodes}
        Let $G$ be an embedded (connected) $d$-regular graph, for any constant $d \geq 2$. There exist a pair of nodes,
        $u$ and $v$, such that $d(u, v) = \bigO{\sfrac{1}{\sqrt{n}}}$ and all shortest paths $P$ between $u$ and $v$ have $\bigOm{\log{n}}$ hops and $\expectation{d(P)}$,
        the expected value of the sum of the Euclidean distances of the edges along this path, is $\bigOm{1}$.
    \end{lemma*}

    \begin{proof}
        Fix a node $u$. Since $G$ is $d$-regular, we can show that there exists  a set $S$ of $\epsilon n$ nodes (for some fixed constant $\epsilon > 0 $ that depends on $d$) that is at least $c\log{n}$ hops away from $u$ (for a suitably small constant $c > 0$). Fix this set $S$ of
        $\epsilon n$  nodes.
        Consider the square of side-length $\sfrac{1}{\sqrt{n}}$ (and thus
        area $1/n$) centered
        at $u$ --- with probability
        \[\binom{\epsilon n}{1}\frac{1}{n}\left(1 - \frac{1}{n}\right)^{\epsilon n - 1} = \epsilon\left(1 - \frac{1}{n}\right)^{\epsilon n - 1} \approx \epsilon\exp\left(-\epsilon\frac{n-1}{n}\right) = \bigTh{1} \]
        exactly one of nodes in the set $S$  (call it $v$) falls within this square. Clearly,
        $d(u, v) = \bigO{\sfrac{1}{\sqrt{n}}}$.

        Now, consider any path from $u$ to $v$; any such path has to go through a neighbor of $u$. The probability
        that a single node has distance at least $\delta$ from $u$ is simply the area of the circle of radius
        $\delta$ centered at $u$, which is $\pi\delta^2$. Hence for a fixed constant $\delta > 0$,  with  probability
        $\left(1 - \pi\delta^2\right)^d = \Theta(1)$
        all $d$ neighbors of $u$ in $G$ (note that these $d$ neighbors are disjoint from the nodes in set $S$ that are at least $c\log{n}$ hops away) will have distance at least $\delta$ from $u$. Thus, for constant $\delta$,
        at least one edge on every path from $u$ to $v$ will have expected length $\bigOm{1}$ and, by
        the triangle inequality, every such path has distance at least $\bigOm{1}$.
    \end{proof}
}

\begin{theorem}
    \label{thm:gstretch}
    Let $G$ be a $d$-regular random  graph embedded in the unit square.
    Then, there exists a pair of nodes $u$ and $v$ in $G$ such that $\sfrac{d_G(P)}{d(u, v)}$, the ratio
    of the propagation cost of the shortest path $P$ between $u$ and $v$ to the Euclidean distance between them
    is $\bigOm{\sqrt{n}}$ on expectation.
\end{theorem}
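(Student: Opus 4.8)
The plan is to derive this theorem as a near-immediate consequence of the preceding Lemma. First I would invoke that Lemma to obtain a pair of nodes $u, v$ enjoying its two crucial conclusions: the Euclidean distance satisfies $d(u,v) = \bigO{\sfrac{1}{\sqrt{n}}}$, while the expected propagation cost of any shortest path $P$ between them satisfies $\expectation{d_G(P)} = \bigOm{1}$. Since a $d$-regular random graph is connected (indeed an expander) with high probability, the Lemma's hypotheses apply to $G$, so the desired pair exists.

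The heart of the argument is to combine these two bounds into a lower bound on the ratio $\sfrac{d_G(P)}{d(u,v)}$. Naively dividing $\expectation{d_G(P)} = \bigOm{1}$ by $d(u,v) = \bigO{\sfrac{1}{\sqrt{n}}}$ already suggests the ratio is $\bigOm{\sqrt{n}}$; however, because the theorem asks for the expectation \emph{of the ratio} rather than the ratio of the expectations, some care is needed, as these quantities differ in general. The observation that resolves this is that the upper bound on $d(u,v)$ supplied by the Lemma is in fact a \emph{sure} bound once we condition on the event used in the Lemma's construction, namely that $v$ lands in the side-$\sfrac{1}{\sqrt{n}}$ square centered at $u$. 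Conditioned on this event, $d(u,v)$ is at most the diagonal of that square, so $\sfrac{1}{d(u,v)} \geq \sfrac{\sqrt{n}}{\sqrt{2}}$ holds pointwise, not merely in expectation.

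With this deterministic lower bound on $\sfrac{1}{d(u,v)}$ in hand, and since $d_G(P) \geq 0$ always, the product $d_G(P) \cdot \sfrac{1}{d(u,v)} \geq \tfrac{\sqrt{n}}{\sqrt{2}}\, d_G(P)$ holds pointwise. Taking expectations and applying linearity then gives
\[
    \expectation{\frac{d_G(P)}{d(u,v)}} \;\geq\; \frac{\sqrt{n}}{\sqrt{2}}\,\expectation{d_G(P)} \;=\; \frac{\sqrt{n}}{\sqrt{2}} \cdot \bigOm{1} \;=\; \bigOm{\sqrt{n}},
\]
which is exactly the claimed bound.

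The only genuinely delicate point — and thus the step where I would be most careful — is precisely this interchange between the expectation of the ratio and the ratio of expectations; resolving it hinges on the fact that $d(u,v) = \bigO{\sfrac{1}{\sqrt{n}}}$ holds as a conditional sure bound rather than only on average. All of the substantive probabilistic work is already discharged inside the Lemma: establishing that a constant fraction of nodes lie $\bigOm{\log n}$ hops from $u$ (using $d$-regularity to bound ball growth by $d^k$), that one such node falls in the small square with constant probability, and that at least one incident edge has expected length $\bigOm{1}$. Consequently the theorem itself reduces to the clean division displayed above.
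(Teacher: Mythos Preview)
Your proposal is correct and follows essentially the same approach as the paper: invoke the preceding Lemma to obtain $u,v$ with $d(u,v) = \bigO{1/\sqrt{n}}$ and $\expectation{d_G(P)} = \bigOm{1}$, then divide. You are in fact more careful than the paper, which simply writes $\bigOm{1}/\bigO{1/\sqrt{n}} = \bigOm{\sqrt{n}}$ without addressing the distinction between the expectation of the ratio and the ratio of expectations; your observation that $d(u,v) \le \sqrt{2}/\sqrt{n}$ holds pointwise on the conditioning event is exactly what justifies that step.
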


\onlyLong{
    \begin{proof}
        Take $u$ and $v$ as in Lemma~\ref{lem:embedded-nodes}. Note that $d(u, v) = \bigO{\sfrac{1}{\sqrt{n}}}$ and
        that any shortest path between $u$ and $v$ has expected total distance $\bigOm{1}$. Thus, the ratio of the propagation cost of the shortest path between them to the Euclidean distance between them is simply
        \[\frac{\bigOm{1}}{\bigO{\sfrac{1}{\sqrt{n}}}} = \bigOm{\sqrt{n}}.\qedhere\] 
    \end{proof}
}

We use propagation cost to measure the performance of a routing algorithm in $G$. The goal is to construct a graph topology so that one can find  paths of small propagation costs between every pair of nodes.  Moreover, we want a routing algorithm
that routes along paths of small propagation cost while also keeping the hop length small. \\

\noindent {\bf Broadcast Performance Measures.}
Next, we quantify the performance of a broadcasting algorithm.

\begin{definition}
    Consider a broadcast algorithm $\mathcal{A}$ that broadcasts a single message from a given source to
    all other nodes in some connected embedded graph $G$. The \emph{broadcast propagation cost} of algorithm
    $\mathcal{A}$ on graph $G$ is defined as the the sum of the Euclidean distance of the edges used by
    $\mathcal{A}$ to broadcast the message.
\end{definition}

Notice that the broadcast propagation cost roughly captures the efficiency of a broadcast algorithm.
We note that the best possible broadcast propagation cost for a graph is
broadcasting by using \emph{only} the edges of the minimum spanning tree (MST) on $G$.
In particular, this yields the following lower bound for a graph whose nodes are embedded uniformly at random
in the unit square.  The proof follows from a bound on the weight of a Euclidean MST on a set
of points distributed uniformly in a unit square \cite{steele}.

\begin{theorem}\label{the:lower-bound-broadcast}[follows from \cite{steele}]
    The broadcast propagation cost of any algorithm $\mathcal{A}$ on an embedded graph $G$ whose nodes
    are distributed uniformly in  a unit square is
    $\bigOm{\sqrt{n}}$ with high probability.
\end{theorem}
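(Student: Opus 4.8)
The plan is to reduce this lower bound to the classical estimate on the weight of a Euclidean minimum spanning tree. The key observation is that any broadcast algorithm $\mathcal{A}$ that successfully delivers the message from the source to all $n$ nodes must use a set of edges $F$ whose union forms a connected subgraph spanning every node of $G$: if some node $w$ were incident to no edge of $F$, then $w$ would never receive the message, and if $F$ were disconnected, some component would be unreachable from the source. Recall that an edge used by $\mathcal{A}$ may be either an original edge of $G$ or a ``virtual'' edge (arising when a node learns another node's ID through forwarded messages); in either case the edge joins two points of the unit square and contributes its Euclidean length $d(\cdot,\cdot)$ to the broadcast propagation cost.

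First I would formalize that, since $F$ is a connected spanning subgraph over the embedded point set $V$, its total weight $\sum_{e \in F} d(e)$ is at least the weight of a minimum-weight connected spanning subgraph. Because the edge weights are Euclidean distances, all weights are nonnegative, so any connected spanning subgraph can be pruned along cycles down to a spanning tree without increasing weight; hence the cheapest connected spanning subgraph is a spanning tree, and its weight is exactly $\mathrm{MST}(V)$, the weight of the Euclidean minimum spanning tree on $V$. Since in principle any pair of points may be joined (including via virtual edges), this is the MST over the complete Euclidean graph on $V$. Consequently the broadcast propagation cost of $\mathcal{A}$ is bounded below by $\mathrm{MST}(V)$, independently of how cleverly $\mathcal{A}$ chooses its edges.

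Next I would invoke the result of Steele~\cite{steele} (building on Beardwood--Halton--Hammersley): for $n$ points drawn independently and uniformly from $[0,1]^2$, one has $\expectation{\mathrm{MST}(V)} = \bigTh{\sqrt{n}}$, and moreover the functional concentrates sharply around its mean. Combining the expectation lower bound $\expectation{\mathrm{MST}(V)} = \bigOm{\sqrt{n}}$ with the concentration estimate yields $\mathrm{MST}(V) = \bigOm{\sqrt{n}}$ with high probability, and the theorem then follows from the chain (broadcast cost) $\ge \mathrm{MST}(V) = \bigOm{\sqrt{n}}$.

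The main obstacle is obtaining the bound \emph{with high probability} rather than merely in expectation: the reduction to the MST is essentially immediate, but converting $\expectation{\mathrm{MST}(V)} = \bigTh{\sqrt{n}}$ into a w.h.p.\ lower bound requires a concentration inequality for the Euclidean MST functional. This is precisely where Steele's analysis is essential --- the MST weight is a smooth (subadditive, bounded-difference) functional of the point configuration, so standard martingale and bounded-differences arguments give exponential concentration around a mean of order $\bigTh{\sqrt{n}}$. Since the fluctuations are of strictly lower order, the $\bigOm{\sqrt{n}}$ lower bound holds except with probability polynomially small in $n$, as required.
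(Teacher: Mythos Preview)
Your proposal is correct and follows essentially the same approach as the paper: the paper remarks that the best possible broadcast propagation cost is achieved by using only the edges of the Euclidean MST, and then defers the $\bigOm{\sqrt{n}}$ estimate (and its high-probability form) to Steele's result. Your write-up simply makes explicit the two ingredients the paper leaves implicit --- namely, why the edge set used by any broadcast must be a connected spanning subgraph (hence of weight at least $\mathrm{MST}(V)$), and why concentration of the MST functional upgrades the expectation bound to a w.h.p.\ one.
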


On the other hand, we show  that the broadcast propagation cost of the standard flooding algorithm \cite{peleg-book} on a random graph embedded in a unit square
is high compared to the above lower bound.

\begin{theorem}
    \label{thm:gbroadcast}
    Let $G$ be a $d$-regular random graph embedded in the unit square.
    The standard flooding algorithm on $G$ has $\bigTh{n}$ \emph{expected}
    broadcast propagation cost.
\end{theorem}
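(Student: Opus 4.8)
The plan is to reduce the broadcast propagation cost of flooding to the total edge weight of the embedded graph $G$, and then to evaluate the expected total edge weight directly by linearity of expectation.

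First I would pin down which edges flooding actually uses. In the standard flooding algorithm, a node forwards the message to all of its neighbors the first time it receives it. Since $G$ is connected, every node eventually receives the message, and every edge $(u,v) \in E$ carries the message at least once: whichever of $u,v$ receives it first forwards it across that edge. Each edge is used at most twice (once in each direction). Writing $d(e)$ for the Euclidean length (weight) of an edge $e$, it follows that the broadcast propagation cost is sandwiched between $\sum_{e \in E} d(e)$ and $2\sum_{e \in E} d(e)$, and hence equals $\bigTh{\sum_{e \in E} d(e)}$.

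Next I would evaluate $\expectation{\sum_{e \in E} d(e)}$. Since $G$ is $d$-regular, $|E| = nd/2$. For any single edge $(u,v) \in E$, the embedding (Definition~\ref{def:embedded}) assigns $u$ and $v$ independent, uniformly random coordinates in $[0,1]^2$, so the expected length of that edge equals the universal constant $c := \expectation{\norm{X - Y}_2}$, where $X$ and $Y$ are independent uniform points in the unit square. One checks that $0 < c \le \sqrt{2}$: the upper bound is the diameter of the square, and the lower bound holds because two uniform points lie at distance at least some fixed $\delta > 0$ with constant probability. The key observation is that although lengths of edges sharing an endpoint are correlated, linearity of expectation requires no independence; thus $\expectation{\sum_{e \in E} d(e)} = \frac{nd}{2}\, c$, and since this holds for every $d$-regular graph it also holds in expectation over the random choice of $G$.

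Combining the two steps gives an expected broadcast propagation cost of $\bigTh{\frac{nd}{2}\,c} = \bigTh{n}$, since $d$ and $c$ are constants. There is no serious obstacle here; the single point to state carefully is that the dependence among edge lengths is irrelevant, because only each edge's marginal expected length enters the calculation via linearity of expectation. The finiteness of $c$ yields the $\bigO{n}$ upper bound, and its strict positivity yields the matching $\bigOm{n}$ lower bound.
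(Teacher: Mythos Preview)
Your proposal is correct and follows essentially the same approach as the paper: reduce the broadcast propagation cost of flooding to the total Euclidean edge weight of $G$, observe there are $dn/2$ edges, and use linearity of expectation together with the fact that each edge has constant expected length. If anything, your version is slightly more careful than the paper's, since you explicitly bound the cost from above (each edge used at most twice) and remark that the dependence among edge lengths is immaterial for linearity of expectation.
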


\onlyLong{
    \begin{proof}
        The message will be sent across every edge at least once. Thus,
        $\sum_{(u, v) \in E}w(u, v)$
        is a lower bound on the propagation cost, where $w(u,v)$ is
        the Euclidean distance between two nodes $u$ and $v$. 
        We use the principle of deferred decisions to bound the expected value of the weight (Euclidean distance) of an edge.
        Fix an edge $e=(u,v)$ in the graph and consider its expected length.
        Since $u$ and $v$ are chosen uniformly at random in the unit square, it is easy to show that the $\expectation{w(u,v)} = \Theta(1)$.
        Hence  by linearity of expectation
        \[\sum_{(u, v) \in E}w(u, v) = \frac{dn}{2} \cdot \bigTh{1} = \bigTh{n}.\qedhere\]
    \end{proof}
}

We also use  other metrics to measure the quality of a broadcast algorithm $\mathcal{A}$.
The \emph{broadcast completion cost} and \emph{broadcast completion time} measure, respectively, the propagation cost
and the number of hops needed to reach any other node $v$ from a given source $u$.

\begin{definition}
    Consider a broadcast algorithm $\mathcal{A}$ that broadcasts a single message from some source node $z$ to all
    other nodes in some connected graph $G(V,E)$. The \emph{broadcast completion cost of $\mathcal{A}$ on $G$} is the
    maximum value of the minimum propagation cost between the source node $s$ and any node $u$ considering paths taken
    by the message in $\mathcal{A}$, taken over all nodes $u \in V$ and all possible source nodes $s \in V$. 
    More precisely, let $\Prop_{\mathcal{A}}(s,u)$ be the minimum propagation cost for a message sent from the node $s$
    to reach node $u$ using broadcast algorithm $\mathcal{A}$ {and define} $\Prop_{\mathcal{A}}(s) = \max_{u\in V} \Prop_{\mathcal{A}}(s,u)$.
    Then, broadcast completion cost is $\max_{s \in V} \Prop_{\mathcal{A}}(s)$. The \emph{broadcast completion time} of
    $\mathcal{A}$ on $G$ is simply the number of rounds before the message from the source node reaches all nodes.
\end{definition}

\noindent \textbf{Conductance and Expanders.}
We recall the notions of conductance of a graph and that of an expander graph.
\begin{definition}[Conductance]
    The conductance $\phi(G)$ of a graph $G= (V,E)$ is defined as:
    $\phi(G) = \min_{S \subseteq V} \frac{\card{E(S,\comp{S})}}{\min\set{\vol{S}, \vol\comp{S}}}$
    where, for any set $S$, $E(S,\comp{S})$ denotes the set of all edges with one vertex in $S$ and one vertex in
    $\comp{S} = V-S$, and
        {$\vol(S)$, called the
            \emph{volume} of $S$, is the sum of the degrees of all nodes in $S$}.
\end{definition}

\begin{definition}[Expander Graph]\label{def:expander-graph}
    A family of graphs $G_n$ on $n$ nodes is an \emph{expander family} if,
    for some constant $\alpha$ with $0 < \alpha < 1$, the conductance
    $\phi_n = \phi(G_n)$ satisfies $\phi_n \geq \alpha$ for all
    $n \geq n_0$ for some $n_0 \in \mathbb{N}$.
\end{definition}

\noindent {\bf Random Geometric Graph.}
\begin{definition}[Random Geometric Graph]\label{def:rgg}
    A random geometric graph, $G(n,\rho)=(V,E)$, is a graph
    of $n$ points, independently and uniformly at random placed
    within $[0,1] \times [0, 1]$ (the unit square). These points
    form the node set $V$, and for two nodes $u$ and $v$, $(u,v)\in E$
    if and only if the distance $d(u,v)$ is at most $\rho$, for parameter
    $0 < \rho = f(n)  \leq 1$.
\end{definition}

We note that the distance between points is the standard Euclidean distance. The $G(n, \rho)$ graph exhibits the threshold phenomenon for many
properties, such as connectivity, coverage, presence of a giant component,
etc.~\cite{Penrose,bincovering-journal}. For example, the threshold for connectivity is $\rho = \bigTh{\sqrt{\sfrac{\log n}{n}}}$, i.e., if the value of $\rho$ is $\bigOm{\sqrt{\sfrac{\log n}{n}}}$, the graph $G(n,\rho)$ is connected with high probability;
on the other hand, if $\rho = \littleO{\sqrt{\sfrac{\log n}{n}}}$, then the graph
is likely to be disconnected. It is also known~\cite{G18} that the diameter of $G(n,\rho)$ (above the connectivity threshold) is $\tilde{\Theta}(1/\rho)$ with high probability.\footnote{Throughout, the $\tilde{O}$
    notation hides a $\operatorname{polylog} n$  factor and $\tilde{\Omega}$ hides a $1/(\operatorname{polylog} n)$ factor.}

\subsection{Problems Addressed and Our Contributions}
\label{sec:contributions}

As shown in Theorems~\ref{thm:gstretch} and~\ref{thm:gbroadcast}, routing (even via the shortest path) and the
standard flooding broadcast protocol in an embedded \emph{random} graph $G$ have a relatively large
point-to-point routing propagation cost and broadcast propagation cost, respectively.

Given a P2P network modeled as a random graph $G$ embedded on a unit square, the goal is to design an efficient distributed protocol to
transform  $G$ into a network $G^*$ that admits efficient communication primitives for the fundamental tasks of routing and broadcast, in particular, those that have essentially optimal routing and broadcast propagation costs.
Furthermore, we want
to design optimal routing and broadcast protocols on $G^*$. (Broadcasting is a key application used in
              P2P networks that implement blockchain and cryptocurrencies in which a block must be quickly
              broadcast to all (or most) nodes in the network.)

\bigskip

\noindent Our contributions are as follows:
\begin{enumerate}[label=\textbf{\arabic*.}]
    \item We develop a theoretical framework to model and analyze P2P network protocols, specifically point-to-point routing and broadcast (see Section~\ref{subsec:model}).

    \item We present an efficient distributed P2P topology construction protocol, $\WEAVER$, that takes a
          P2P expander network $G$  and improves it  into a  topology $G^*$
          that admits essentially optimal routing and broadcast primitives (see Section~\ref{sec:reconfig-alg}).
          Our protocol uses only local knowledge and is fast, using only $O(\polylog{n})$ rounds.  $\WEAVER$
          is based on random walks which makes it quite lightweight (small local computation overhead) and inherently decentralized and robust (no single point of action, no construction of tree structure, etc).
          It is also scalable in the sense
          that each node sends and receives only $O(\polylog{n})$ bits per round and communicates with only $O(\polylog{n})$
          nodes at any round. We assume only that the given topology $G$ is an expander graph; in particular,
          $G$ can be random graph (modeling a random connectivity topology, see Section~\ref{subsec:model}).

    \item To show the efficiency of $G^*$, we develop a distributed routing protocol $\GREEDYROUTING$ as well as broadcast protocols $\FLOODING$ and $\BROADCAST$
          that have essentially optimal routing and broadcast propagation costs, respectively (see Section \ref{sec:applications-graph}).
\end{enumerate}

\shortOnly{For lack of space, we refer to the full paper (in appendix) for proofs, additional details, and figures.}

\longOnly{
\subsection{Technical Overview}
\subsubsection{\texorpdfstring{$\WEAVER$}{Weaver} protocol}
~~~~The high-level idea behind  $\WEAVER$ is as follows. Starting from an expander graph, the goal is to construct
a topology that (i) contains a random geometric graph and (ii) contains a series of graphs such that each graph is an expander with edges that respect a maximum upper bound on Euclidean distance, for various distance values. This topology will then be used
to design efficient communication protocols (Section~\ref{sec:applications-graph}). To construct a random geometric graph, nodes must discover other nodes that
are close to them in the Euclidean space; in particular, each node $u$ needs to connect to all
nodes within distance $\rho = \Theta(\sqrt{\log n/n})$ to form a connected random geometric graph (see Definition~\ref{def:rgg}). The challenge is, in the given expander graph, nodes \emph{do not} have knowledge of the IDs of other nodes (except their neighbors in the original graph) in the network. The protocol allows each node to find nodes that are
progressively closer in distance to itself. From $G$, which is an expander in the  unit square, we construct several expander graphs in squares of smaller side-length. The expander graph is constructed by each node connecting to a small number of random nodes in the appropriate square. This creates a random graph which we show to be an expander with high probability (Lemma~\ref{lem:graph-in-each-phase-expander}).
Connecting to random nodes is accomplished by performing \emph{lazy random walks}  which mix fast (i.e., reach
the uniform stationary distribution) due to the expander graph
property \cite{Hoory}. We show a key technical lemma (Lemma \ref{lem:graph-in-each-phase-expander})
that proves the expansion property of the expanders created by random walks by analyzing the conductance
of the graph. By constructing expander graphs around each node  in progressively smaller areas, the protocol finally
is able to locate all nearby (within distance $\rho = \Theta(\sqrt{\log n/n})$) nodes with high probability. Then each node
forms connections to these nodes, which guarantees that a random geometric graph is included as a subgraph.
The final constructed graph $G^*$ includes all the edges that were created by the protocol, in addition
to the edges in the original graph $G$. Thus, $G^*$ is an expander (with degree bounded by $O(\log^2 n)$) and also includes
a random geometric graph as a subgraph (among other  edges added by nodes to random nodes at varying distances).

\subsubsection{P2P Routing Protocol}~~We note that the original graph $G$ does not admit an efficient point-to-point routing protocol as it is
a random graph and is not addressable.  Note
that even if one uses shortest path routing (assuming shortest paths have been constructed a priori),
the propagation cost can be as high as $\Omega(\sqrt{n})$ (see Theorem~\ref{thm:gstretch}).

We present a P2P routing protocol, called $\GREEDYROUTING$, with near-optimal
propagation cost in $G^*$. The key benefit of our routing protocol is that it is \emph{fully-localized}, i.e.,
a node $u$ needs only the ID of the destination node $v$ and the distances of its neighbors to $v$ to determine to which neighbor
it should forward the message. In particular, we show that a simple greedy protocol that always forwards the message to a neighbor closest
to the destination correctly and efficiently routes the message. Routing protocols that assume that each node knows its own position and that of its neighbors and that the position of the destination is known to the source are sometimes referred to as geometric routing and greedy approaches to such routing have been explored in the literature (e.g,~\cite{KSU99},~\cite{KK00},~\cite{KWZY03} and the references therein). We show that $\GREEDYROUTING$ takes $\bigO{\log n}$ hops to reach the destination and, more importantly,
that the propagation cost is close to the optimal propagation cost  needed to route between the two nodes.
Our protocol carefully exploits the geometric structure of the constructed edges created in $G^*$ to show
the desired guarantees.

\subsubsection{P2P Broadcast Protocols}~~We see that any broadcast algorithm that runs on the type of input graphs we consider takes at least $\Omega(\sqrt{n})$ broadcast propagation cost with high probability (Theorem~\ref{the:lower-bound-broadcast}). However, if we ran a simple flooding algorithm on the original graph, we could only achieve a broadcast propagation cost of $\Theta(n)$ on expectation (see Theorem~\ref{thm:gbroadcast}). In contrast, we develop a broadcast algorithm that leverages the structure of $G^*$ to achieve broadcast with broadcast propagation cost $O(\sqrt{n \log^3 n})$, which is asymptotically optimal up to polylog $n$ factors. The challenge is to carefully select edges in $G^*$ to send the message over.

If we simply flood the message over the edges of the random geometric graph contained within $G^*$, as we do in $\FLOODING$, we would obtain the desired broadcast propagation cost and an optimal (up to polylog $n$ factors) broadcast completion cost of $\tilde{O}(1)$, however the broadcast completion time, i.e., the number of rounds (hops) until the message reaches all nodes, is $\tilde{\Omega}(\sqrt{n})$, which is large. Thus, we turn to a more delicate algorithm, $\BROADCAST$, that results in the optimal (up to $\polylog{n}$ factors) broadcast propagation cost, optimal broadcast completion cost of $O(1)$, and an optimal (up to polylog $n$ factors) broadcast completion time of $O(\log n)$. 
However, in order to achieve such guarantees, we make use of the stronger assumption that nodes know their coordinates in the unit square, instead of merely knowing the distances between themselves and other nodes.

$\BROADCAST$ is described in more detail below. Consider a partition, call it $H_i$, of the unit grid into a $1/r^i$ by $1/r^i$ grid of $1/r^{2i}$ equal size squares where $0 < r < 1$ is a constant. By carefully sending the message to one node per square in $H_{\Theta(\log n - \log \log n)}$, and then performing simple flooding over the random geometric graph in $G^*$, we can achieve broadcast with the desired values for all three metrics. That is, the broadcast propagation cost is $O(\sqrt{n \log^3 n})$ and the broadcast completion time is $O(\log n)$, both of which are optimal up to polylog $n$ factors, and the broadcast completion cost is $O(1)$, which is an optimal bound. 
}

\subsection{Other Related Work}
\label{sec:related}

There are several works (see e.g., \cite{angluin,deconstructor,time-overlay})  that begin with an arbitrary graph and reconfigure it to be an expander (among other topologies). The expander topology constructed does not deal with the underlying (distance) metric. Our work, on the other hand, starts with an arbitrary expander topology (and here one can use algorithms such as the one in these papers to construct an expander overlay to begin with) and reconfigures it into an expander that also optimizes the propagation delay with respect to the underlying geometry. Thus our work can be considered as orthogonal to the above works.

There has been significant amount of work  on a  related problem, namely, constructing distributed hash tables (DHTs) and  associated search protocols that respect the underlying metric~\cite{RD01,prr,karger-ruhl,land,dgs16,satish}. In this line of work, nodes store data items and they can also search for these items. The cost of the search, i.e., the path a request takes from the requesting node to the destination node, is  measured with respect to  an underlying metric. The goal is to build an overlay network and a search algorithm such that the cost of all paths is close to the metric distance.
Our work is broadly in same spirit as these works, with a key difference. While the previous works build an overlay network
while assuming global knowledge of costs between all pairs of nodes, our work assumes that we start with a sparse (expander) topology with
only local knowledge of costs (between neighbors only), which is more realistic in a P2P network.
Furthermore, in these works, the underlying metric is assumed to be \emph{growth-restricted} which is more general than the 2-dimensional
plane assumed here. In a growth-restricted metric, the ball of radius $2r$ around
a point $x$ contains at most a (fixed) constant fraction of points more than the ball of radius $r$ around $x$. This is more general than
the uniform distribution in a 2-dimensional plane assumed here (which  is a special case of growth-restricted) since, in a growth-restricted
metric, points need not be uniformly dense everywhere. An interesting direction of future work is extending our protocols to work in general growth-restricted metrics.

Routing protocols (that are similar in spirit to ours) that assume that each node knows its position and that of its neighbors and that the position of the destination is known to the source are sometimes referred to as geometric routing and greedy approaches to such routing have been explored extensively in the literature (e.g,~\cite{KSU99},~\cite{KK00},~\cite{KWZY03} and the references therein).

\section{\texorpdfstring{$\WEAVER$}{Weaver}: A P2P Topology Construction Protocol}
\label{sec:reconfig-alg}
We show how to convert a given $d$-regular ($d$ is a constant) expander graph embedded in
the Euclidean plane (Definition~\ref{def:embedded})  into a graph that, in addition to having the desired properties of an expander, also allows more efficient
routing and broadcasting with essentially optimal propagation cost.\footnote{As mentioned earlier,
    our protocol will also work for $d$-regular random graphs which are expanders with high probability. Also, the graph need not be regular; it is enough if the degree is bounded, say $O(\polylog n)$, to get the desired performance bounds.} The main result of this section is the $\WEAVER$
protocol, running in $\polylog{n}$ rounds, that yields a network with $\bigO{\log^2 n}$ degree and contains a
\emph{random geometric graph} as a subgraph.

\subsection{The Protocol}
\label{subsec:reconfig-alg}

\noindent \textbf{Brief Description.}
Starting with an embedded, $d$-regular expander $G=(V,E)$, the algorithm constructs a
series of expander graphs, one per phase, such that in each phase $i$, each node $u$
connects to some $\bigO{\log n}$ random neighbors located in a square (box) of side-length $r^i$ centered at $u$ (that intersects the unit square\onlyLong{ --- see Figure~\ref{fig:box-def}}),
where $0 < r < 1$ is a fixed constant (we  can fix $r = 1/4$ due to technical considerations in Section~\ref{subsec:reconfig-analysis-broadcast}).  
In the final phase, $\NUMPHASES$, each node $u$
connects to all $\bigO{\log n}$ neighbors contained in the square of side length $r^{\NUMPHASES}$
at its center. In this manner, we construct a final graph, which is the union of the original graph
and all graphs constructed in each phase, which has low degree ($\bigO{\log^2 n}$) and low
diameter ($\bigO{\log n}$). We note that we require $r^{2\NUMPHASES}n = \bigTh{\log{n}}$,
and hence $\NUMPHASES = \Theta(\log{n} - \log\log{n})/(\log{\sfrac{1}{r}})$.

Our protocol makes extensive use of random walks and
the following lemma is useful in bounding the rounds needed to perform many random walks in parallel under the bandwidth constraints ($\polylog{n}$ bits per edge per round).

\begin{lemma}[Adapted from Lemma 3.2 in~\cite{DNPT13}]\label{lem:random-walk-time}
    Let $G=(V,E)$ be an undirected graph and let each node $v\in V$, with degree $\deg(v)$, initiate  $\eta \deg(v)$ random walks, each
    of length $\lambda$. Then all walks finish their respective $\lambda$  steps in $O(\eta \lambda \log n)$ rounds with high probability.
\end{lemma}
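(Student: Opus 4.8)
The plan is to realize the walks as indistinguishable tokens that all advance one step per \emph{phase}, and to bound the total number of rounds by the accumulated edge congestion. Concretely, I would run the $\lambda$ steps of every walk synchronously: in the $t$-th phase, each token currently residing at a node $u$ is forwarded to the random neighbor it selects for its $t$-th step. Insisting that an edge carry only a bounded number of tokens per round is consistent with the $\polylog n$ bandwidth budget, since each token is an $\bigO{\log n}$-bit identifier (walk ID plus remaining length). Then phase $t$ takes a number of rounds proportional to the maximum, over all edges $e$, of the number of tokens crossing $e$ in that phase --- call this the congestion $C_t$. Hence the total number of rounds is $\bigO{\sum_{t=1}^{\lambda} C_t}$, and it suffices to show that $C_t = \bigO{\eta \log n}$ for every $t$ with high probability.

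The crux is a \emph{degree-proportional invariant} that holds at every step. Because node $v$ launches exactly $\eta\deg(v)$ walks, the total number of tokens is $\sum_v \eta\deg(v) = 2\eta m$ (where $m = \card{E}$), and the starting positions are distributed proportionally to degree, i.e., proportionally to the stationary distribution $\pi_u = \deg(u)/2m$ of the (lazy) simple random walk. I would then use reversibility, $\deg(u)\,p^{(t)}_{u,s} = \deg(s)\,p^{(t)}_{s,u}$ with $p^{(t)}$ the $t$-step transition probability, to show that the expected number of tokens at any node $u$ after $t$ steps equals $\eta\deg(u)$, uniformly in $t$: indeed $\sum_s \eta\deg(s)\,p^{(t)}_{s,u} = \eta\deg(u)\sum_s p^{(t)}_{u,s} = \eta\deg(u)$. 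Consequently the expected number of tokens traversing a fixed directed edge $(u,v)$ during any phase equals $\eta\deg(u)\cdot\frac{1}{\deg(u)} = \eta$ (or $\eta/2$ for the lazy walk), a bound that does not degrade as $t$ grows.

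With the expectation pinned at $\eta$ for every edge and every step, concentration is routine. Fix a directed edge $(u,v)$ and a phase $t$. The number of tokens crossing it is a sum of independent $\{0,1\}$ indicators --- one per token, independent because the walks are mutually independent --- with mean $\eta$. A Chernoff bound then gives that this crossing count exceeds $c\,\eta\log n$ with probability at most $n^{-3}$ for a suitable constant $c$, and a union bound over the at most $2m\lambda \le n^2\lambda$ pairs of (directed edge, phase) shows that $C_t = \bigO{\eta\log n}$ simultaneously for all $t$ with high probability. Summing over the $\lambda$ phases yields the claimed $\bigO{\eta\lambda\log n}$ rounds.

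I expect the main obstacle to be the \emph{uniform-in-$t$} control of the expected congestion: the naive observation that the first step has expected per-edge congestion $\eta$ must be upgraded to a statement holding for all $t$ up to $\lambda$, and this is exactly what the stationarity of the degree-proportional initialization (together with reversibility) buys --- without it, token mass could in principle pile up in low-degree regions and spike the congestion on some edge at a later step. The bandwidth constraint itself is a minor point: each edge transmits $\bigO{\eta\log n}$ tokens of $\bigO{\log n}$ bits over $\bigO{\eta\log n}$ rounds, so $\bigO{\log n} = \polylog n$ bits per round suffice.
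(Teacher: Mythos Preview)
The paper does not supply its own proof of this lemma; it merely cites it as adapted from Lemma~3.2 of~\cite{DNPT13}. Your argument is correct and is essentially the proof given in that reference: the degree-proportional initialization is stationary for the simple (or lazy) random walk, so by reversibility the expected occupancy at every node---and hence the expected per-edge congestion---is preserved at every step; Chernoff plus a union bound over edges and steps then controls the maximum congestion. One small caveat: your Chernoff step implicitly needs $\eta = \Omega(1)$ (otherwise the tail probability $n^{-\Theta(\eta)}$ is not polynomially small), but this is the regime in which the lemma is applied in the paper.
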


\noindent \textbf{Detailed Description.}
Let $B_u(\ell)$ denote  the \emph{intersection} of the unit square (recall that the Euclidean plane is constrained to a square grid of side length $1$) and the square of side-length
$\ell$ centered at node $u$. Note that if $u$ is located at least distance $\sfrac{\ell}{2}$ from
every edge of the grid, then $B_u(\ell)$ is merely the square with side-length $\ell$ centered
on $u$\onlyLong{ (see Figure~\ref{fig:box-def})}.
\onlyLong{
    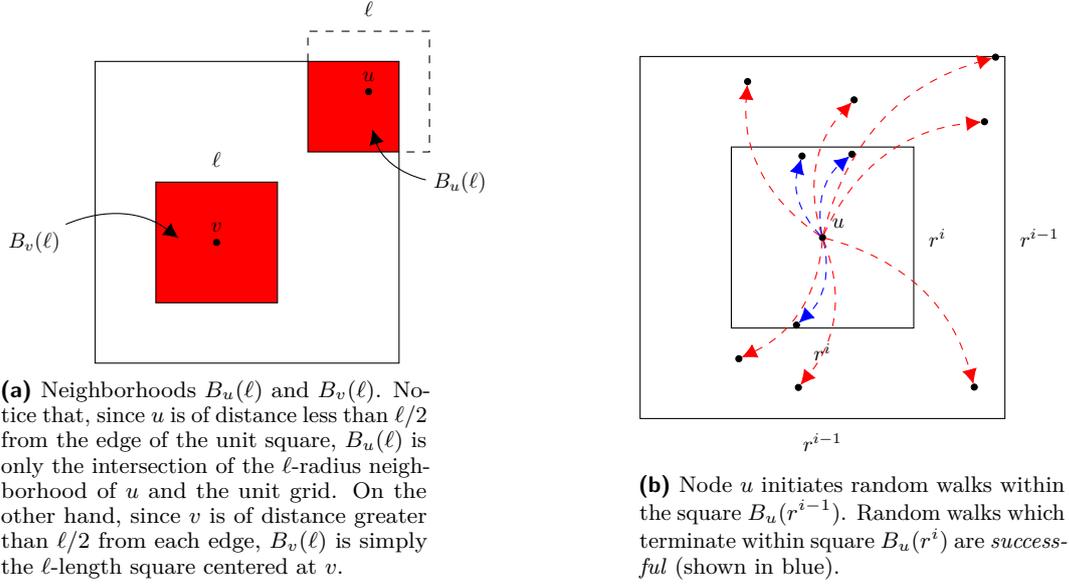
\begin{figure}[t]
        \centering
        \begin{subfigure}{0.4\textwidth}
            \begin{tikzpicture}[scale=.8,transform shape]
                \begin{scope}
                    \clip (0, 0) rectangle (5, 5);
                    \fill[red] (3.5, 3.5) rectangle (5.5, 5.5);
                \end{scope}
                \draw (0, 0) rectangle (5, 5);
                \path (3.5, 3.5) rectangle (5.5, 5.5) node[draw,point,pos=0.5,label=above:{$u$}] {};
                \draw (3.5, 5) |- (5, 3.5);
                \draw[dashed] (3.5, 5) |- (5.5, 5.5) |- (5, 3.5);
                \draw[fill=red] (3, 3) rectangle (1, 1) node[draw,point,fill=black,pos=0.5,label=above:{$v$}] {};

                \path (3.5, 5.5) -- (5.5, 5.5) node[pos=0.5, label={$\ell$}] {};
                \path (3, 3) -- (1, 3) node[pos=0.5, label={$\ell$}] {};

                \node at (4.5, 4) (ulab1) {};
                \node at (6, 3) (ulab0) {$B_u(\ell)$};
                \draw[->] (ulab0) to[bend left] (ulab1);

                \node at (1.5, 2) (vlab1) {};
                \node at (-1, 2) (vlab0) {$B_v(\ell)$};
                \draw[->] (vlab0) to[bend left] (vlab1);
            \end{tikzpicture}
            \caption{ Neighborhoods $B_u(\ell)$ and $B_v(\ell)$. Notice that, since $u$ is of distance less
                than $\sfrac{\ell}{2}$ from the edge of the unit square, $B_u(\ell)$ is only the intersection of the
                $\ell$-radius neighborhood of $u$ and the unit grid. On the other hand, since $v$ is of distance greater
                than $\sfrac{\ell}{2}$ from each edge, $B_v(\ell)$ is simply the $\ell$-length square centered at $v$.}
            \label{fig:box-def}
        \end{subfigure}\hfill
        \begin{subfigure}{0.4\textwidth}
            \begin{tikzpicture}[scale=.8,transform shape]
                \node[draw, point,label=above right:{$u$}] at (0, 0) (u) {};
                \draw (-3, -3) rectangle (3, 3);
                \draw (-1.5, -1.5) rectangle (1.5, 1.5);
                \node[label=below:{$r^{i-1}$}] at (0, -3) {};
                \node[label=below:{$r^i$}] at (0, -1.5) {};
                \node[label=right:{$r^{i-1}$}] at (3, 0) {};
                \node[label=right:{$r^i$}] at (1.5, 0) {};
                \node[draw, point] at (-1.232431637059517, 2.5850316923166483) (rw1) {};
                \node[draw, point] at (-0.42821656730776425, -1.450846794953826) (rw2) {};
                \node[draw, point] at (2.8456353367024887, 2.989201048245599) (rw3) {};
                \node[draw, point] at (0.5194123519714071, 2.2820223265440713) (rw4) {};
                \node[draw, point] at (0.48372687852193774, 1.3811558701202967) (rw5) {};
                \node[draw, point] at (-0.33664236429590666, 1.3509717248940863) (rw6) {};
                \node[draw, point] at (-1.3767613598697954, -2.008930062111596) (rw7) {};
                \node[draw, point] at (-0.3968989831784304, -2.4844671694431995) (rw8) {};
                \node[draw, point] at (2.6636689880523576, 1.919421263856587) (rw9) {};
                \node[draw, point] at (2.4957814747878633, -2.47882020220099) (rw10) {};

                \draw[->, red, dashed] (u) to[bend left] (rw1);
                \draw[->, blue, dashed] (u) to[bend left] (rw2);
                \draw[->, red, dashed] (u) to[bend left] (rw3);
                \draw[->, red, dashed] (u) to[bend left] (rw4);
                \draw[->, blue, dashed] (u) to[bend left] (rw5);
                \draw[->, blue, dashed] (u) to[bend left] (rw6);
                \draw[->, red, dashed] (u) to[bend left] (rw7);
                \draw[->, red, dashed] (u) to[bend left] (rw8);
                \draw[->, red, dashed] (u) to[bend left] (rw9);
                \draw[->, red, dashed] (u) to[bend left] (rw10);
            \end{tikzpicture}
            \caption{Node $u$ initiates random walks within the square $B_u(r^{i-1})$. Random walks which
                terminate within square $B_u(r^i)$ are \emph{successful} (shown in blue).}
            \label{fig:rw-success-failure}
        \end{subfigure}
        \caption{Explanation of $B_u(\ell)$ and demonstration of successful vs. unsuccessful random walks in $B_u(r^i)$.}
        \label{fig:boxexamples}
    \end{figure}
}
Run the following algorithm for $\NUMPHASES = c\log{n}$ phases, for appropriately chosen
constant $c$, starting from phase $1$. The first $\NUMPHASES - 1$ phases are described below and the final phase is described subsequently.

In each phase $1 \leq i \leq \NUMPHASES - 1$, we associate a  graph  with each node $u$ that contains
all nodes and their associated edges inside $B_u(r^i)$  created in  phase $i$   --- which we denote
by $G_u(i)$.   Denote the initial graph for a node $u$ by $G_u(0)$ (note that $G_u(0) \equiv G$). Define $G(i) = \cup_{u \in V} G_u(i)$, i.e., the union of
these graphs across all nodes (note that $G(0) \equiv G$).

\noindent \textbf{First $\NUMPHASES - 1$ phases:} Each phase $i \in \set{1, 2, \dots, \NUMPHASES - 1}$, consists of two major steps outlined below: (Note that we assume at the beginning of phase $i$, graphs $G_u(i-1)$ have been
constructed for all $u$.) In phase $i$, we construct $G_u(i)$ for all $u$ using lazy random walks.

\begin{enumerate}[label=(\textbf{\arabic*}),leftmargin=*]
    \item For each node $u$, perform $\bigTh{\log{n}}$ \emph{lazy random walks}  of length $2\mixingtime$, 
          where $\mixingtime = a\log{n}$ (for a constant $a$ sufficiently large to guarantee rapid mixing, i.e., reaching close
          to the stationary distribution), in $G_u(i-1)$, which is assumed to be an expander (this invariant will be maintained for all $i$).

          A lazy random walk is similar to a normal random walk except that, in each step, the walk
          stays at the current node $u$ with probability $1-\sfrac{\deg(u)}{(\Delta + 1)}$, otherwise it travels to a random neighbor
          of $u$ (in $G_u(i-1)$, i.e., in box $B_u(r^{i-1})$). Here, $\deg(u)$ is the degree of $u$ and $\Delta$ is an upper bound on the maximum degree, which is $O(\log n)$ in $G_u(i-1)$ (by protocol design\onlyLong{ and Lemma~\ref{lem:each-phase-not-too-many-neighbors}}). We maintain the ratio $\deg(u)/(\Delta +1) = O(1)$ in every phase (by protocol design\onlyLong{, Corollary~\ref{cor:lowerboundnumneighbors}, and Lemma~\ref{lem:each-phase-not-too-many-neighbors}}), hence the slowdown of the lazy random walk (compared to the normal random walk) is at most a constant factor.
          It is known  that the stationary distribution of a lazy random
          walk is \emph{uniform} and such a walk,  beginning at a node
          $u$, will arrive at a fixed node $v$ in $G_u(i-1)$ with probability $\sfrac{1}{n} \pm \sfrac{1}{n^3}$ after
          $\mixingtime$ number of steps~\cite{randomwalk}. 
          Thus, a lazy random walk from $u$ gives  a way to sample a node \emph{nearly uniformly} at random from
          the graph $G_u(i-1)$. Each lazy random walk starting from $u$ is represented by a token containing the ID
          of $u$, the current phase number, and the number of steps remaining in the lazy random walk; in phase $i$, this token is passed from node to node to simulate a random walk within $B_u(r^{i-1})$.\footnote{As assumed in Section~\ref{subsec:model}, 
              a node on the random walk path can check whether it is within  the box $B_u(r^{i-1})$ centered at the source node $u$, since it knows the ID Of $u$ and hence the $L_{\infty}$ distance from $u$.}

          Note that each node $v$ that receives the token only considers the subset of its neighbors that are within
          $B_u(r^{i-1})$ when considering nodes to pass the token to. After $2\mixingtime$ steps, if the token lands within
          $B_u(r^i)$\onlyLong{ (see Figure~\ref{fig:rw-success-failure})}, the random walk is \emph{successful}. By Lemma~\ref{lem:random-walk-time}, all walks
          will finish in $O(\log^3 n)$ rounds in the first phase and $O(\log^2 n)$ rounds in subsequent phases with high probability. Note that, to maintain synchronicity, all nodes participate and wait for $O(\log^3 n)$ rounds to finish in the first phase and $O(\log^2{n})$ rounds in subsequent phases.

    \item The graph $G_u(i) = (V_u(i),E_u(i))$ is constructed as follows: its node set $V_u(i)$ is the set
          of all nodes in the box $B_u(r^i)$. Edges from  nodes in $V_u(i)$ are determined  as follows.
          Suppose a lazy random walk from a node $x \in V_u(i)$ successfully ends at $y$, i.e., $y$ is within the box of $B_x(r^i)$ (note that,
          unless $u = x$, this box is different from $B_u(r^i)$, but does overlap with at least \sfrac{1}{4} of $B_u(r^i)$\onlyLong{ --- see Figure~\ref{fig:boxesoverlap})}.
          Node $y$ will send a message to $x$ informing it that its random walk successfully terminated at $y$.
          Among all such nodes  that notify  $x$, $x$ will sample (without replacement) a subset of $b\log{n}$ nodes (for a fixed constant $b$) and add  undirected edges to these sampled nodes.
          The edge set $E_u(i)$ of $G_u(i)$  consists only of edges  between nodes in $V_u(i)$.

\end{enumerate}

\onlyLong{
    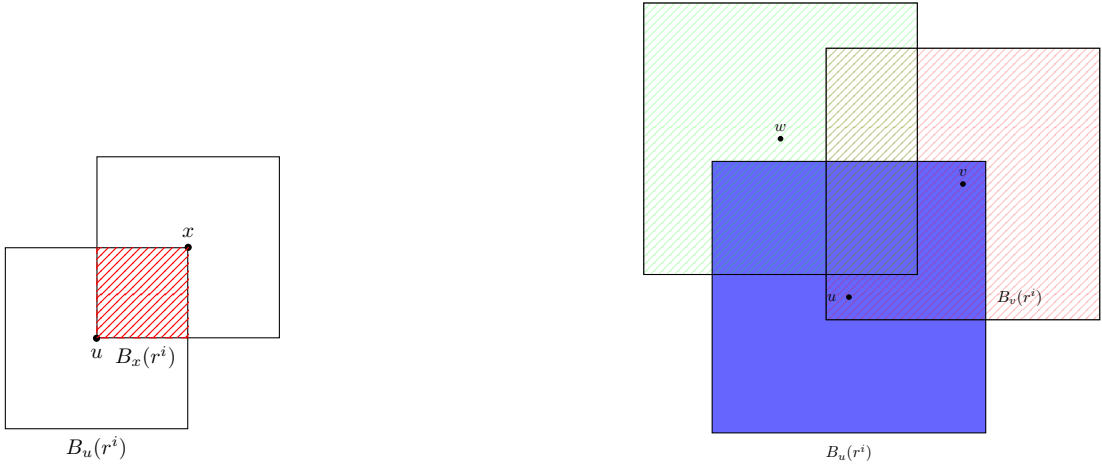
\begin{figure}
        \centering
        \begin{subfigure}[t]{0.4\textwidth}
            \begin{tikzpicture}[scale=0.8,transform shape]
                \node[draw,point,label=below:{$u$}] (u) {};
                \node[draw,rectangle,minimum size=3cm,label=below:$B_u(r^i)$] (bru) at (u) {};
                \node[draw,point,label=above:{$x$}] at (bru.north east) (x) {};
                \node[draw,rectangle,minimum size=3cm,label={[xshift=-0.7cm]below:$B_x(r^i)$}] (brx) at (x) {};
                \begin{scope}
                    \clip (bru.north east) rectangle (bru.south west);
                    \node[draw,fill,dashed,red,pattern=north east lines,pattern color=red,rectangle,minimum size=3cm] () at (x) {};
                \end{scope}
            \end{tikzpicture}
            \caption{Regions $B_u(r^i)$ and $B_x(r^i)$. Notice that, even in the worst case, the overlap is at least \sfrac{1}{4} of the area of the box so long as $x$ falls
                within $B_u(r^i)$.}
            \label{fig:boxesoverlap}
        \end{subfigure}
        \hfill
        \begin{subfigure}[t]{0.4\textwidth}
            \centering
            \begin{tikzpicture}[scale=0.6,transform shape]
                \fill[blue,opacity=0.6] (-3, -3) rectangle (3, 3);
                \draw[pattern=north east lines, pattern color=red,opacity=0.5] (-0.5, -0.5) rectangle (5.5, 5.5);
                \draw[pattern=north east lines, pattern color=green,opacity=0.5] (-4.5, 0.5) rectangle (1.5, 6.5);
                \draw (-3, -3) rectangle (3, 3) node[pos=0.5,point,draw,label={[xshift=-0.1cm]left:$u$}] {};
                \node[label=below:{$B_u(r^i)$}] at (0, -3) {};
                \node[label=right:{$B_v(r^i)$}] at (3, 0) {};

                \draw (-0.5, -0.5) rectangle (5.5, 5.5) node[pos=0.5,point,draw,label=above:{$v$}] (v) {};
                \draw (-4.5, 0.5) rectangle (1.5, 6.5) node[pos=0.5,point,draw,label=above:{$w$}] (w) {};
            \end{tikzpicture}
            \caption{Rectangles of side-length $r^i$ centered at nodes $u$, $v$, and $w$. Notice that,
                since $v$ falls within $B_u(r^i)$, a random walk from $v$ can connect to $u$. On the other hand,
                a random walk from $w$ cannot connect to $u$.}
            \label{fig:in-edges}
        \end{subfigure}
        \caption{Demonstrating the overlap of boxes centered at multiple nodes.}
    \end{figure}
}

\noindent \textbf{Last phase:}
The final phase is similar, except that each node $u$ initiates a larger number of random walks, so that with high probability
\emph{all nodes} within the box $B_u(r^{\NUMPHASES})$ (note that $r^{\NUMPHASES} = \Theta(\sqrt{\log n/n})$)  are sampled and thus $u$ is able to form connections
to all  nodes  in $B_u(r^{\NUMPHASES})$ (which contains $\Theta(\log n)$ nodes). This will ensure that $G(\NUMPHASES) = \cup_{u \in V} G_u(\NUMPHASES)$ contains a random geometric graph $G(n,\rho)$ with $\rho = \Theta(\sqrt{\log n/n})$.
More precisely, in the final phase (phase $\NUMPHASES$), each node $u$ runs $\Theta(\log^2 n)$ random walks on $G(\NUMPHASES-1)$ 
to all nodes within $B_u(r^{\NUMPHASES})$ to form graph $G(\NUMPHASES)$.

The final graph $G^*$ is the union of the graphs $G(i)$, $0 \leq i \leq \NUMPHASES$.
Algorithm~\ref{alg:reconfig} gives a high-level summary of the protocol.

\begin{algorithm}[h]
    \footnotesize
    \caption{$\WEAVER$ Construction Protocol}
    \label{alg:reconfig}
    \begin{algorithmic}[1]
    \ForEach{node $u$ and phase $i$ in $\set{1, 2, \dots, \NUMPHASES - 1}$}
        \State $u$ performs $\Theta(\log{n})$ random walks of length $2\mixingtime = \Theta(\log{n})$ in $G_u(i - 1)$
        \State $u$ connects to $\Theta(\log{n})$ nodes where random walks are successful
    \EndForEach
    \ForEach{node $u$}
        \State $u$ initiates $\Theta(\log^2{n})$ lazy random walks in $B_u(r^{\NUMPHASES})$ and connects to nodes where walk ends successfully
    \EndForEach
    \end{algorithmic}
\end{algorithm}

\subsection{Protocol Analysis}
\label{subsec:reconfig-analysis-properties}

We  prove that, with high probability, the  protocol takes $O(\log^3 n)$ rounds and constructs a graph $G^{*}$  that  has maximum degree $O(\log^2 n)$ and contains a random geometric graph as a subgraph (besides being an expander).

To argue that the constructed graph $G^*$ contains a random geometric graph, we show that the series of graph constructions proceeds correctly in each phase, resulting in the last phase constructing $G({\NUMPHASES})$, the desired random geometric graph. Each phase $i$ crucially relies on the fact that the subgraph induced by a given node $u$ in phase $i-1$, $G_u(i-1)$, is an expander\onlyLong{, and this is shown in Lemma~\ref{lem:graph-in-each-phase-expander}}. In each phase $i$, we perform several lazy random walks starting from each node $u$ on $G_u(i-1)$. Since the lengths of lazy random walks starting at $u$ performed on $G_u(i-1)$ are $\Omega(\log n)$, we see that they run longer than the mixing time of
$G_u(i-1)$, resulting in the final destination of the walk, i.e, the neighbor of $u$ in that phase resulting from the random walk, being chosen uniformly at random from the vertices of $G_u(i-1)$. This property is useful \onlyShort{in the analysis}\onlyLong{when proving Lemma~\ref{lem:graph-in-each-phase-expander}}.
The random walks performed
by each node $u$ in phase $i$ result in at least $\bigOm{\log n}$ neighbors that
can be used by $u$ to construct its part of the graph $G(i)$ \onlyLong{(Lemma~\ref{lem:each-phase-at-least-enough-neighbors})}.
Finally, with high probability, the subgraph induced by edges of length less than $r^{\NUMPHASES}$ forms a random geometric graph(\onlyShort{see full version}\onlyLong{Lemma~\ref{lem:all-nodes-connected-final-phase}}).

To argue about the maximum degree of the graph, notice first that we construct at most $\NUMPHASES = O(\log n)$ subgraphs, one per phase. By showing that each of these subgraphs has a maximum degree of $O(\log n)$ with high probability, we show that the maximum degree of the graph is $O(\log^2 n)$ with high probability.
\onlyLong{Lemma~\ref{lem:each-phase-not-too-many-neighbors} shows that the}\onlyShort{The} degree of any node
in $G(i)$ does not exceed $O(\log n)$, for all phases $i$ excluding the final
phase $\NUMPHASES$ \onlyShort{(see full paper)}. \onlyLong{Lemma~\ref{lem:min-max-bounds-nodes-space} is a general lemma that bounds}\onlyShort{In our analysis, we bound} the number of nodes in a box surrounding a given node, and in particular \onlyLong{it shows}\onlyShort{show} that the degree of each node in the final phase  does
not exceed $\bigO{\log n}$ with high probability, i.e., the maximum degree of the graph
$G(\NUMPHASES)$ is $\bigO{\log n}$.

All these properties of the final graph $G^*$ are captured by Theorem~\ref{thm:graph-conversion-alg-works}. We argue about the run time directly in the proof of Theorem~\ref{thm:graph-conversion-alg-works}.

The key lemma is showing  that each graph $G(i)$ formed at the end of each phase $i$ is an expander \onlyLong{(Lemma~\ref{lem:graph-in-each-phase-expander})}.
It can be proved by induction on $i$. The base case is given, since $G_u(0) \equiv G$ and $G$ is an expander.
For the induction hypothesis, we assume that $G_u(i-1)$ is an expander and prove that $G_u(i)$ is an expander as well. 
The main technical idea behind the proof is to show that, with high probability, every subset of nodes (that is of size at most half the size of $V_u(i)$)  has a conductance that is at least some fixed constant.
The protocol initiates random-walks by each node in each phase of the algorithm to construct an expander, and the random walks occur over different subgraphs (regions). This make it  non-trivial to show that the constructed subgraph around each node is an expander at each phase. Since each node does random walks in a local region around itself, the expansion proof has to be done carefully. \onlyShort{To save space, we leave the required lemmas and proof of Theorem~\ref{thm:graph-conversion-alg-works} for the full version of the paper.}
\onlyLong{%
    \begin{lemma}\label{lem:graph-in-each-phase-expander}
        Each $G_u(i)$, for all $u$, and for all $0\leq i \leq \NUMPHASES-1$, is an expander with high probability.
    \end{lemma}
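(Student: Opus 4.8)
The plan is to prove the statement by induction on the phase index $i$, using the conductance characterization of expansion (Definition~\ref{def:expander-graph}) together with a union bound over vertex subsets. The base case $i = 0$ holds since $G_u(0) \equiv G$ is an expander by hypothesis. For the inductive step, I assume every $G_u(i-1)$ is an expander; this is exactly what guarantees that the lazy random walks of length $2\mixingtime$ run in phase $i$ mix within $\mixingtime = a\log n$ steps, so that each successful walk from a node $x$ terminates at a node distributed nearly uniformly (within $\pm 1/n^3$) over the vertices of $B_x(r^i)$. Thus $G_u(i)$ is, up to negligible error, the random graph in which each node $x \in V_u(i)$ draws $\bigTh{\log n}$ independent nearly-uniform targets inside $B_x(r^i)$, and an edge is retained in $G_u(i)$ precisely when the target also lies in $B_u(r^i)$.

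First I would pin down the degrees. Because $x \in B_u(r^i)$, the boxes $B_x(r^i)$ and $B_u(r^i)$ (both of side $r^i$) overlap in area at least $\tfrac14 r^{2i}$ (Figure~\ref{fig:boxesoverlap}), and since the points are uniform this means a constant fraction of the $\bigTh{\log n}$ candidate targets of $x$ land in $V_u(i)$. A Chernoff bound then shows that every node retains $\bigTh{\log n}$ incident edges in $G_u(i)$ with high probability; combined with the $\bigO{\log n}$ bound on in-degree (Lemma~\ref{lem:each-phase-not-too-many-neighbors}), every node has degree $\bigTh{\log n}$, so $\vol(S) = \bigTh{\card{S}\log n}$ for every $S$, and the two normalizations $\vol(S) \le \vol(V_u(i))/2$ and $\card{S} \le N/2$ (with $N = \card{V_u(i)}$) agree up to constants.

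The core step is the conductance bound. I fix a subset $S \subseteq V_u(i)$ with $\card{S} \le N/2$ and lower-bound $\card{E(S,\comp{S})}$ by the number of retained connections directed from $S$ into $\comp{S}$. Since the targets are chosen independently, this count is a sum of independent indicators; writing $d = \bigTh{\log n}$ for the per-node connection budget, its mean is $\sum_{x \in S} d \cdot \frac{\card{\comp{S} \cap B_x(r^i) \cap V_u(i)}}{\card{B_x(r^i)\cap V_u(i)}}$. The geometric heart of the argument is a lemma asserting that this mean is at least $\alpha_0 d\card{S}$ for an absolute constant $\alpha_0 > 0$ and every such $S$: because each node connects uniformly across a box whose side equals the side of the whole region $B_u(r^i)$, the connections are ``long range'' relative to the domain, so any cut of $V_u(i)$ is crossed by a constant fraction of them (a representative worst case being a straight bisection of the box, for which the crossing fraction works out to $1 - \ln 2 > 0$). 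Granting this, a Chernoff bound gives $\Pr[\card{E(S,\comp{S})} < \tfrac{\alpha_0}{2} d\card{S}] \le \exp(-\bigOm{d\card{S}}) = n^{-\bigOm{b\card{S}}}$, where $b$ is the tunable constant in the $b\log n$ connections per node.

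Finally I would take a union bound. The number of subsets of size $s$ is at most $\binom{N}{s} \le n^{s}$, so summing $n^{s}\cdot n^{-\bigOm{bs}}$ over $1 \le s \le N/2$ is $n^{-\bigOm{1}}$ once $b$ is chosen large enough; absorbing the extra factors of $n$ (over all centers $u$) and $\bigO{\log n}$ (over all phases $i$) into a slightly larger $b$ keeps the overall failure probability at $n^{-\bigOm{1}}$. On the complementary event every cut satisfies $\card{E(S,\comp{S})} \ge \tfrac{\alpha_0}{2}d\card{S} = \bigOm{\vol(S)}$, so $\phi(G_u(i)) \ge \alpha$ for a fixed constant $\alpha$ and $G_u(i)$ is an expander, completing the induction. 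The main obstacle is precisely the geometric expansion lemma: establishing the constant crossing fraction uniformly over all (possibly irregular, non-convex) subsets $S$, not merely tidy box-shaped cuts, and doing so while accounting for the fact that each node's target distribution lives on its own displaced box $B_x(r^i)$ rather than on $B_u(r^i)$ and is only approximately uniform.
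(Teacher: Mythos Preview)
Your proposal is correct and follows essentially the same route as the paper: induction on the phase, the conductance characterization, a per-subset tail bound combined with a union bound over all $S$ of size at most $|V_u(i)|/2$, and a geometric box-overlap observation to control where the random-walk endpoints land. The only stylistic difference is in how the per-subset bound is executed: you lower-bound $\card{E(S,\comp S)}$ directly via Chernoff on the sum of independent indicators (relying on the mean bound you call the geometric expansion lemma), whereas the paper upper-bounds the complementary event $\card{E(S,S)} \ge \gamma\card{S}$ by the first-moment count $\binom{ds}{\gamma s}(s/n_{i-1})^{\gamma s}$ and then splits into the two regimes $s = o(n_{i-1})$ and $s = \Theta(n_{i-1})$ using, respectively, the inequality $\binom{n}{q}\le (en/q)^q$ and an entropy/Stirling estimate. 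Your Chernoff route is a bit cleaner in that it avoids this case split. The obstacle you single out---showing a uniform constant crossing fraction for \emph{every} subset $S$ despite each node sampling from its own displaced box $B_x(r^i)$---is exactly what the paper packages as the ``good nodes'' claim (a constant fraction of $x\in S$ have a constant fraction of their walks ending in $\comp S \cap B_u(r^i)$), and the paper likewise treats this step as a geometric observation rather than a fully worked-out lemma.
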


    \begin{proof}
        We proceed by induction on $i$. The base case is given, since $G_u(0) \equiv G$  and $G$ is an expander.
        Fix a node $u$. For the induction hypothesis, we assume $G_u(i-1)$ is an expander and show that $G_u(i)$ is also an expander with high probability. (By union bound, this will hold for all $u$ with high probability.)

        We show that, with high probability,
        every subset $S$ of $V_u(i)$ of size at most $\sfrac{|V_u(i)|}{2}$ has $\bigOm{1}$ conductance. Recall that $V_u(i)$ denotes the set of all nodes in $B_u(r^i)$. The proof uses a similar approach to that of Lemma 1 in \cite{focs2015}, however, the proof here is more involved since each node initiates random walks in a different underlying graph.

        From Lemma~\ref{lem:min-max-bounds-nodes-space}, for all $u$ and $i$, the number of nodes in $B_u(r^i)$ is $\bigTh{r^{2i}n\log{n}}$ with high probability (hence $|V_u(i)| = \bigTh{r^{2i}n\log{n}}$ w.h.p.).

        Let $n_u(i) = \card{V_u(i)}$ (i.e., $G_u(i)$ has $n_u(i)$ nodes) and let $S$ be any subset of $V_u(i)$ of size at most $\sfrac{n_u(i)}{2}$. Let $|S| = s$.  Recall that $E(S,\comp{S})$ is the set of edges with one endpoint in $S$ and the other
        in $\comp{S}$. We wish to show that
        \[\card{E(S, \comp{S})} \geq c_1\card{S}\log n\]
        for some constant $c_1$ (defined later in the analysis). Recall that the $\WEAVER$ protocol maintains a degree upper bound of
        $\bigO{\log n}$ for all nodes in $G_u(i)$ and for all $i$ (see Lemma~\ref{lem:each-phase-not-too-many-neighbors}). Thus
        \[\vol(S) = \sum_{x \in S}\deg(v) \leq \card{S}\max\limits_{x\in S}\left(\deg(v)\right) \leq c_2\card{S}\log{n}\]
        for some constant $c_2$, and we have
        \[\frac{1}{\vol(S)} \geq \frac{1}{c_2\card{S}\log{n}}.\]
        Thus, if $\card{E(S, \comp{S})} \geq c_1\card{S}\log n$, then $G_u(i)$ has conductance
        \[\frac{\card{E(S, \comp{S})}}{\vol(S)} \geq \frac{c_1\card{S}\log n}{c_2\card{S}\log{n}} = \frac{c_1}{c_2} = \Theta(1).\]

        Now, to show that $\card{E(S, \comp{S})} \geq c_1\card{S}\log n$ with high probability, we consider the event that
        $\card{E(S,\overline{S})}$ is greater than or equal to $c_1\card{S}\log{n}$. The $\WEAVER$ protocol ensures
        that every node $x \in S$ establishes at most $d = c\log n$ edges, with high probability, to other
        nodes using lazy random walks starting at $x$  for some constant
        $c > 0$ (Lemma~\ref{lem:each-phase-at-least-enough-neighbors}).\footnote{Note that other nodes might connect to $x$ via random walks, we don't consider these edges here, as these are at most a constant factor of the edges of the overall degree of $x$ (by Lemma \ref{lem:each-phase-not-too-many-neighbors}). These edges will affect
            only the conductance by a constant factor.} Note that the random walks initiated from each node $x$ will be within the box $B_x(r^i)$.

        Consider a node $x \in S$.
        Let $E(x)$ denote the set of edges incident on node $x$ that are formed using lazy random walks from $x$. Let
        \[E'(S) = \bigcup_{x \in S}E(x).\]
        Notice that $E'(S) \subseteq E(S, S) \cup E(S, \comp{S})$.  We will show that sufficiently many of the edges in $E'(S)$ belong
        to the set $E(S,\comp{S})$ with high probability. To prove this, we upper bound the (complementary) probability that a large fraction of $E'(S)$
        belong to $E(S,S)$.

        For a node $x \in S$, since the edges in $E(x)$   are established via (lazy) random walks in  $V_x(i-1)$ (which are the nodes in the box $B_x(r^{i-1})$), the probability that a random walk ends in $S$ (which is a subset of $V_u(i)$ and lies within box $B_u(r^i)$) depends on the intersection of $S$ with $V_x(i)$. It can be shown by geometric
        arguments that for at least a constant fraction of nodes in $S$ (we call these  {\em good} nodes), at least a constant fraction of its (respective)
        random walks will end outside $S$ {\em and} in $B_u(r^i)$ (see Figure~\ref{fig:overlappingsubregions}). This is because of the following reasons: (1) for any $x \in S$,  $B_x(r^i)$ and $B_u(r^i)$ overlap each other in a constant fraction of their areas; (2)  the number of nodes in both these boxes is (concentrated at)
        $\bigTh{n_{i-1}}$ with high probability (by Lemma~\ref{lem:min-max-bounds-nodes-space});  and (3) the random walks
        are nearly uniform in $V_x(i-1)$.
        Let the set of good nodes $x \in S$ be denoted by $\operatorname{Good}(S)$. Note that $|\operatorname{Good}(S)| = \Theta(|S|) = \Theta(s)$.
        Hence for all  nodes $x \in \operatorname{Good}(S)$, a specific edge $e \in E(x)$ has a particular endpoint in $S$ (which is a subset of $V_u(i)$ and lies within
        box $B_u(r^i)$),
        say $z$, is simply the probability that
        the corresponding random walk terminated at $z$, which is $\Theta(\sfrac{1}{|V_x(i-1)|}) = \Theta(\sfrac{1}{n_{i-1}})$
        where, $n_{i-1} = \Theta(r^{2(i-1)}n\log{n})$.

        Choose a sufficiently small constant $c_1 < c$, and set $q = c_1 \log n$ and $\gamma = d - q = (c - c_1)\log n$.
        Letting $X_j$ denote the indicator random variable that $e_j \in E'(\operatorname{Good}(S))$ has both
        end points in $S$, we have
        $\prob{X_j = 1} \leq \Theta(\frac{s}{n_{i-1}})$
        by union bound across all good nodes in $S$.
        Observe then that
        $\prob{\land_{j=1}^q X_j = 1} = (\sfrac{s}{n_{i - 1}})^q$.

        Now, the probability that a particular set $S$ of size $s$ does not satisfy $E(S, \comp{S}) \geq c_1\log{n}\card{S}$
        is upper bounded by  the probability that  $\card{E(S, S)} \geq \gamma\card{S}$.
        There are $\binom{ds}{\gamma s}$ such ways to choose $\gamma s$ edges and $\binom{n_{i - 1}}{s}$
        sets $S$ of size $s$, hence
        \begin{align*}\prob{\exists S \suchthat \card{S} = s\text{ and }\card{E(S, S)} \geq \gamma\card{S}}
             & \leq \binom{n_{i - 1}}{s}\binom{ds}{\gamma s}\left(\frac{s}{n_{i - 1}}\right)^{\gamma s}.
        \end{align*}
        We upper bound the above probability and show that it is small for all subsets $S$  of size at most $n_i/2$.
        We consider two cases, when $s=|S| = \littleO{n_{i - 1}}$ and $s=|S| = \bigOm{n_{i - 1}}$.
        \begin{enumerate}[itemindent=2em,label=Case \roman*:]
            \item $s = \littleO{n_{i-1}}$

                  Apply the inequality $\binom{n}{q} \leq \left(\frac{en}{q}\right)^q$. The probability
                  that there exists a set $S$ of size $s$ such that $\card{E(S, S)} \geq \gamma \card{S}$
                  is bounded above by
                  \begin{align*}\binom{n_{i-1}}{s}\binom{ds}{\gamma s}\left(\frac{s}{n_{i-1}}\right)^{s\gamma}
                       & \leq \left(\frac{en_{i - 1}}{s}\right)^s\left(\frac{eds}{\gamma s}\right)^{\gamma s}\left(\frac{s}{n_{i-1}}\right)^{\gamma s} \\
                       & =\left(\frac{e}{s}\right)^s s^{\gamma s} \left(\frac{de}{\gamma}\right)^{\gamma s}n_{i-1}^{s(1-\gamma)}                       \\
                       & =e^s s^{s(\gamma - 1)}n^{s(1-\gamma)}\left(\frac{ed}{\gamma}\right)^{\gamma s}
                  \end{align*}
                  Setting $\beta = \ln\left(\frac{de}{\gamma}\right)$, this is
                  \begin{align*}
                       & \exp\left(s + s(\gamma - 1)\ln{s} + s(1 - \gamma)\ln{n_{i - 1}} + \beta\gamma s\right) \\
                       & =\exp\left(s\left(1-(\gamma - 1)\ln\frac{n_{i-1}}{s} + \beta\gamma\right)\right)       \\
                       & \leq \frac{1}{n_{i-1}^{\Omega(\log n)}} \leq 1/n^3
                  \end{align*}
                  since $s = \littleO{n_{i-1}}$ and $\gamma = \Theta(\log n)$, and $n_{i-1} = \Omega(\log n)$.
            \item $s \geq \zeta n_{i-1}$ for some $\zeta > 0$

                  For sufficiently large $n$, we can apply Stirling's Formula to deduce:
                  $\binom{n}{q} \leq \exp\left(nH(\sfrac{q}{n})\right)$.
                  where $H(\varepsilon) = -\varepsilon\ln\varepsilon - (1-\varepsilon)\ln(1-\varepsilon)$ is the
                  entropy of $\varepsilon$. In particular, since $\zeta n_{i-1} \leq s \leq \sfrac{1}{2}n_{i-1}$, for some $\zeta > 0$,
                  the probability that there exists a set $S$ of size $s$ such that $\card{E(S, S)} \geq \gamma \card{S}$
                  is bounded above by
                  \[\binom{n_{i-1}}{s}\binom{ds}{\gamma s}\left(\frac{s}{n_{i-1}}\right)^{\gamma s}\]
                  Setting $\beta = \frac{s}{n_{i-1}}$, this is                                                     \\
                  \[\exp\left(n_{i-1}H\left(\frac{s}{n_{i-1}}\right)+dsH\left(\frac{\gamma s}{ds}\right)+\ln(\beta)\gamma s\right)\]
                  which is bounded above by
                  \[\exp\left(n_{i-1}\left(H(\zeta) + \zeta dH\left(\frac{\gamma}{d}\right) + \ln(\beta)\gamma\zeta\right)\right)\]
                  Now, the expression $H(\zeta) + \zeta dH\left(\frac{\gamma}{d}\right) + \ln(\beta)\gamma\zeta$
                  is $-\bigTh{\log{n}}$ since $n_{i-1} = \bigOm{\log{n}}$, and, since $d$ and $\gamma$ are $\bigOm{\log{n}}$, the previous expression is upper bounded by
                  \[\exp\left(-\bigTh{n_{i-1}}\right) \leq \frac{1}{n^3}.\]
        \end{enumerate}
        Thus we have shown that in $G_u(i)$, for all subsets $S$  of size at most $n_i/2$, $\Pr(E(S,\overline{S})) \geq (d-q)\log n |S|$ with probability at least $1-1/n^3$. Hence  $G_u(i)$ is an expander with probability at least $1 -1/n^3$. Hence by union bound, each $G_u(i)$ is an expander
        for all $u \in V$ and all $1 \leq i \leq \NUMPHASES-1$.
    \end{proof}

    \begin{figure}
        \centering
        \begin{tikzpicture}[scale=1.2,transform shape]
            \node[draw,point] (u) {};
            \begin{scope}
                \node[draw,rectangle,minimum size=3cm,label=below:$B_u(r^i)$] (bru) at (u) {};
                \clip (bru.north east) rectangle (bru.south west);
                \node[draw,point,above right=1.2584221194346479cm and 1.2126608834590886cm of u,blue] (s1) {};
                \node[draw,point,above right=0.8291039217772134cm and 0.5594547915183995cm of u,red] (s2) {};
                \node[draw,point,above right=1.0873286600807703cm and 1.1721755690999265cm of u,blue] (s3) {};
                \node[draw,point,above right=0.3863358750316113cm and 0.52768544298164cm of u,red] (s4) {};
                \node[draw,point,above right=0.7804370957569715cm and -1.0574765355969202cm of u,red] (s5) {};
                \node[draw,point,above right=0.03288786189035349cm and 0.8946699534577767cm of u,red] (s6) {};
                \node[draw,point,above right=1.233456444220374cm and 0.2391640660989146cm of u,red] (s7) {};
                \node[draw,point,above right=1.0704500887423762cm and -0.5589391477142992cm of u,black] (s8) {};
                \node[draw,point,above right=0.5741510659059312cm and -1.2029840332827666cm of u,black] (s9) {};
                \node[draw,point,above right=-0.3698352475793866cm and -1.2929600716092091cm of u,black] (s10) {};
                \node[draw,point,above right=-0.6991223702920697cm and -0.4450625068017335cm of u,black] (s11) {};
                \node[draw,point,above right=-1.2727845262665987cm and -0.363648399675642cm of u,black] (s12) {};
                \node[draw,point,above right=-0.8692044835399281cm and 1.0986334737285351cm of u,black] (s13) {};
                \node[draw,point,above right=-1.0161788588167657cm and -1.00981438861632cm of u,black] (s14) {};
                \node[draw,point,above right=-0.6240602229223468cm and -1.1487821006284962cm of u,black] (s15) {};
                \node[draw,point,above right=-1.2258256054651735cm and 1.220158172719026cm of u,black] (s16) {};
                \node[draw,point,above right=0.7703341825842305cm and -0.7375706933157724cm of u,black] (s17) {};
                \node[draw,point,above right=1.2949718351809087cm and -1.1097079593424495cm of u,black] (s18) {};
                \node[draw,point,above right=-0.5435264645034769cm and 1.0208062323809583cm of u,black] (s19) {};
                \node[draw,point,above right=-0.399038283246017cm and -1.1634733293097979cm of u,black] (s20) {};
                \node[draw,point,above right=1.1553070031631087cm and 0.902675341865695cm of u,black] (s21) {};
                \node[draw,point,above right=-0.048277813603857016cm and 0.2690418244061282cm of u,black] (s22) {};
                \node[draw,point,above right=-1.111307949210607cm and 0.7138311350716008cm of u,black] (s23) {};
                \node[draw,point,above right=0.27398811624899355cm and 0.47896216545140485cm of u,red] (s24) {};
                \node[draw,point,above right=1.0341865036892306cm and -0.27831434894056534cm of u,red] (s25) {};

                \node[draw,dashed,rectangle,minimum size=3cm,blue] (brs1) at (s1) {};
                \node[draw,dashed,rectangle,minimum size=3cm,blue] (brs1) at (s3) {};

                \draw[->,blue,dashed] (s1) to[bend right] (s21);
                \draw[->,blue,dashed] (s1) to[bend right] (s22);
                \draw[->,blue,dashed] (s1) to[bend right] (u);
                \draw[->,blue,dashed] (s3) to[bend left] (s21);
                \draw[->,blue,dashed] (s3) to[bend left] (s22);
                \draw[->,blue,dashed] (s3) to[bend left] (u);

                \draw[->,red,dashed] (s4) to[bend right] (s8);
                \draw[->,red,dashed] (s4) to[bend right] (s11);
                \draw[->,red,dashed] (s4) to[bend right] (s13);
                \draw[->,red,dashed] (s4) to[bend right] (s17);
                \draw[->,red,dashed] (s4) to[bend right] (s19);
                \draw[->,red,dashed] (s4) to[bend right] (s21);
                \draw[->,red,dashed] (s4) to[bend right] (s22);
                \draw[->,red,dashed] (s4) to[bend right] (u);
                \node[draw,dashed,rectangle,minimum size=3cm,red] (brs1) at (s4) {};

            \end{scope}
        \end{tikzpicture}
        \caption{Blue and red nodes denote nodes in $S$. Notice that, while the blue nodes only have 3 edges leaving $S$, the red nodes have many edges
            leaving $S$. These red nodes correspond to the constant factor of $S$ contributing to the high expansion of $G_u(i)$.}
        \label{fig:overlappingsubregions}
    \end{figure}
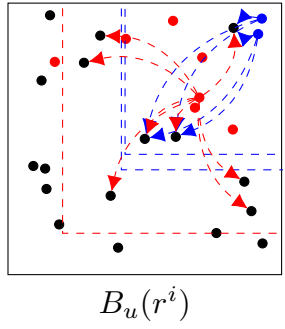

    Throughout this analysis, we say that a random walk from a node $u$ in phase $i$ is \emph{successful} if
    the random walk finishes inside of the inner box, i.e., in $B_u(r^i)$.

    \begin{lemma}\label{lem:each-phase-at-least-enough-neighbors}
        In each phase $i$, for each node $u$, at least $72\log{n}$ of the random
        walks lead to nodes within $B_u(r^i)$ with probability at least $1 - \sfrac{1}{(n^2\NUMPHASES)}$.
    \end{lemma}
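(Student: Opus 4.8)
The plan is to reduce the statement to a single-walk success probability and then apply a Chernoff bound across the $\bigTh{\log n}$ independent walks that $u$ initiates. First I would fix a node $u$ and condition on the (inductive) high-probability event from Lemma~\ref{lem:graph-in-each-phase-expander} that $G_u(i-1)$ is an expander, so that a lazy random walk of length $2\mixingtime \ge \mixingtime$ started at $u$ has mixed. By the mixing property stated in the protocol, such a walk lands at each node of $V_u(i-1)$ (the nodes inside $B_u(r^{i-1})$) with probability $\bigl(1\pm o(1)\bigr)/\card{V_u(i-1)}$. Hence the probability that one walk is \emph{successful}, i.e.\ ends inside the inner box $B_u(r^i)\subseteq B_u(r^{i-1})$, equals, up to an additive error of $O(1/n^2)$, the fraction $\card{V_u(i)}/\card{V_u(i-1)}$ of the outer-box nodes that lie in the inner box.

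Next I would show this fraction is a constant $p_0 > 0$ bounded away from zero, uniformly over the location of $u$. Conditioning on the concentration event of Lemma~\ref{lem:min-max-bounds-nodes-space}, the number of nodes in $B_u(r^i)$ is $\bigTh{r^{2i}n\log n}$ and in $B_u(r^{i-1})$ is $\bigTh{r^{2(i-1)}n\log n}$, so their ratio is $\bigTh{r^2}=\bigTh{1}$ since $r$ is a fixed constant. The delicate point is the boundary: when $u$ is close to an edge or corner of the unit square, both $B_u(r^i)$ and $B_u(r^{i-1})$ are clipped, but since both are centered at $u$ and differ only by the scale factor $r$, the clipped inner box always retains at least a constant fraction of the clipped outer box (the extreme case being a corner, where each box keeps exactly one quadrant and the area ratio is still $r^2$). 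Thus the per-walk success probability is at least an absolute constant $p_0 > 0$, and the $O(1/n^2)$ near-uniformity error is negligible against $p_0$.

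Finally, distinct walk tokens use independent random bits, so conditioned on the realized (good) graph the success indicators $X_1,\dots,X_m$ of the $m=\bigTh{\log n}$ walks are mutually independent Bernoulli variables with $\prob{X_j=1}\ge p_0$. I would choose the constant number of walks per node large enough that $\expectation{\sum_j X_j}\ge p_0 m \ge 144\log n$, and apply the multiplicative Chernoff lower-tail bound with $\delta=\tfrac12$:
\[
\prob{\sum_{j} X_j < 72\log n} \le \exp\!\left(-\tfrac{\delta^2}{2}\cdot 144\log n\right) = \exp(-18\log n) = n^{-18}.
\]
Since $\NUMPHASES=\bigTh{\log n}$, this is far below $1/(n^2\NUMPHASES)$, and folding back the $O(1/n^3)$ failure probabilities of the two conditioning events (expander property and node-count concentration) keeps the total failure probability under $1/(n^2\NUMPHASES)$.

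I expect the main obstacle to be the geometric argument of the second paragraph: establishing that the single-walk success probability is bounded below by a constant \emph{uniformly in the position of $u$}, which requires tracking how clipping $B_u(r^i)$ and $B_u(r^{i-1})$ against the boundary of the unit square affects the node-count ratio. Once this constant lower bound $p_0$ is secured and the number of walks is chosen to dominate the target $72\log n$ with slack, the remaining ingredients (mixing, independence, and Chernoff) are routine.
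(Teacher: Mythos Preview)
Your proposal is correct and follows essentially the same approach as the paper: establish that a single walk succeeds (lands in $B_u(r^i)$) with constant probability $\Theta(r^2)$ by mixing plus the area/node-count ratio, then apply a multiplicative Chernoff lower-tail bound across the $\Theta(\log n)$ independent walks. If anything, your version is more careful than the paper's, which simply asserts the per-walk success probability is exactly $r^2$ without discussing boundary clipping or explicitly conditioning on the expander and concentration events.
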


    \begin{proof}
        By construction, each random walk must stop at a node in $B_u(r^{i-1})$. In
        particular, then, the probability that a walk starting at $u$ finishes in $B_u(r^i)$ is
        \[\frac{r^i \times r^i}{r^{i-1} \times r^{i-1}} = r^2\]
        since the nodes are distributed uniformly at random within the unit square and the random walks are run for the mixing time of the underlying induced previous phase graph. Note that since we fix the length of the random walk at $O(\log n)$ and the previous phase graphs are expanders from Lemma~\ref{lem:graph-in-each-phase-expander}, we see that these walks are run for mixing time.

        Now, set
        \[X_j = \begin{cases}1 & \hbox{ walk $j$ lands in $B_u(r^i)$} \\
             0 & \hbox{ otherwise}\end{cases}\]
        and
        \[X = \sum_{j=1}^d X_j\]
        where $d$ is the number of random walks emanating from $u$.
        Notice that $\expectation{X} = dr^2$. Since $d = \bigO{\log{n}}$, say $d \leq c \log{n}$ for
        sufficiently large $c$, by a simple application of Chernoff bounds, we have
        \begin{align*}\prob{X < (1 - \delta)dr^2}
             & \leq \prob{X < (1 - \delta)2cr^2\ln{n}}                                                                                     \\
             & \leq \exp\left(-\frac{\delta^2 2cr^2\ln{n}}{2}\right)                                                                       \\
             & = n^{-\delta^2cr^2}                                                                                                         \\
             & < \frac{1}{n^{2 + \varepsilon}}\hbox{ for $\delta > \frac{1}{r}\sqrt{\frac{2 + \varepsilon}{c}}$ and any $\varepsilon > 0$} \\
             & < \frac{1}{n^2\NUMPHASES}\hbox{ since $\NUMPHASES = \bigO{\log{n}}$}
        \end{align*}
        In particular, taking \[\delta = 1 - \frac{36}{cr^2\ln{2}}\] yields 
        \[\prob{X < 72\log{n}} < \frac{1}{n^2\NUMPHASES}\] as desired.
    \end{proof}

    \begin{corollary}\label{cor:lowerboundnumneighbors}
        With probability at least $1 - \sfrac{1}{n} - \sfrac{1}{n^c}$ for some (arbitrary but fixed)
        $c > 0$, every node initiates $\bigOm{\log{n}}$ successful random walks in each phase.
    \end{corollary}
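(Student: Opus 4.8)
The plan is to lift the single-node, single-phase guarantee of Lemma~\ref{lem:each-phase-at-least-enough-neighbors} to a statement holding simultaneously for \emph{every} node and \emph{every} phase by a union bound, while separately accounting for the high-probability event that all the phase graphs are expanders (which Lemma~\ref{lem:each-phase-at-least-enough-neighbors} silently presupposes when it computes the per-walk success probability $r^2$). First I would fix an arbitrary node $u$ and phase $i$: conditioned on $G_u(i-1)$ being an expander, so that the lazy random walks mix and each lands in $B_u(r^i)$ with probability $r^2$, Lemma~\ref{lem:each-phase-at-least-enough-neighbors} gives that at least $72\log n = \bigOm{\log n}$ of the walks from $u$ are successful except with probability at most $\sfrac{1}{(n^2\NUMPHASES)}$.

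Next I would take a union bound over all $n$ nodes and all $\NUMPHASES$ phases. Since there are exactly $n\NUMPHASES$ node--phase pairs, the probability that any one of them fails to produce $\bigOm{\log n}$ successful walks is at most
\[
n\NUMPHASES \cdot \frac{1}{n^2\NUMPHASES} = \frac{1}{n},
\]
which is precisely the $\sfrac{1}{n}$ term in the statement. (Note that the bound $\sfrac{1}{(n^2\NUMPHASES)}$ was engineered in Lemma~\ref{lem:each-phase-at-least-enough-neighbors} so that this product telescopes to $\sfrac{1}{n}$.)

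The remaining term $\sfrac{1}{n^c}$ accounts for the expander assumption. By Lemma~\ref{lem:graph-in-each-phase-expander} each $G_u(i)$ is an expander with probability at least $1 - \sfrac{1}{n^3}$, so a union bound over all $n\NUMPHASES$ graphs shows that every $G_u(i)$ is simultaneously an expander with probability at least $1 - \sfrac{\NUMPHASES}{n^2} \geq 1 - \sfrac{1}{n^c}$ for a suitable fixed $c > 0$, using $\NUMPHASES = \Theta(\log n)$. Letting $\mathcal{E}$ denote this event, the walk-count union bound of the previous paragraph applies conditioned on $\mathcal{E}$, so by the law of total probability the overall failure probability is at most $\prob{\comp{\mathcal{E}}} + \sfrac{1}{n} \leq \sfrac{1}{n^c} + \sfrac{1}{n}$, giving the claimed bound.

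The main subtlety --- rather than a genuine obstacle --- is keeping these two sources of randomness cleanly separated. Because Lemma~\ref{lem:each-phase-at-least-enough-neighbors} presupposes the expander structure when it derives the success probability $r^2$, one cannot naively union-bound its guarantee in isolation; I would first condition on $\mathcal{E}$, apply the walk-count union bound inside that conditioning, and only then fold in $\prob{\comp{\mathcal{E}}}$. This two-stage accounting is exactly what produces the two additive error terms $\sfrac{1}{n}$ and $\sfrac{1}{n^c}$ as stated.
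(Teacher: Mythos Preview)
Your proposal is correct and follows essentially the same approach as the paper, namely a union bound over all $n$ nodes and $\NUMPHASES$ phases applied to Lemma~\ref{lem:each-phase-at-least-enough-neighbors}. You are in fact more explicit than the paper: the paper's proof simply asserts that the union bound yields $\sfrac{1}{n} + \sfrac{1}{n^c}$ without isolating the source of the second term, whereas you correctly trace it to the expander precondition from Lemma~\ref{lem:graph-in-each-phase-expander} that Lemma~\ref{lem:each-phase-at-least-enough-neighbors} relies on.
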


    \begin{proof}
        By a union bound across all $n$ nodes and $\NUMPHASES$ phases, the probability
        that any node in any phase completes $\littleO{\log{n}}$ successful random walks
        is bounded above by $\sfrac{1}{n} + \sfrac{1}{n^c}$. 
        Thus, with probability at least
        $1 - \sfrac{1}{n} - \sfrac{1}{n^c}$, every node initiates $\bigOm{\log{n}}$ successful
        random walks in each phase.
    \end{proof}

    In order to bound the degree of each node in each phase of the algorithm in Lemma~\ref{lem:each-phase-not-too-many-neighbors}, we make use of the following bound on the number of nodes that occupy a given region in Lemma~\ref{lem:node-concentration}.

    \begin{lemma}\label{lem:node-concentration}
        Any region $U$ with area $A \geq \sfrac{c\ln{n}}{n}$, for a sufficiently large constant $c$, contains $\bigTh{nA}$ nodes with high probability.
    \end{lemma}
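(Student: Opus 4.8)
The plan is to observe that, since the $n$ nodes are placed independently and uniformly at random in the unit square, the number of nodes landing in a fixed region $U \subseteq [0,1]^2$ of area $A$ is a binomial random variable $X \sim \mathrm{Bin}(n, A)$: each node falls in $U$ independently with probability exactly $A$, because under the uniform distribution the hitting probability of a region equals its area. (This $U \subseteq [0,1]^2$ case is the relevant one, since the boxes $B_u(\ell)$ are defined as intersections with the unit square.) Hence $\expectation{X} = nA$, and the whole statement reduces to a standard concentration bound showing that $X$ stays within a constant factor of its mean.

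First I would fix a deviation parameter, say $\delta = \sfrac{1}{2}$, and apply the multiplicative Chernoff bounds in both directions: the upper tail gives $\prob{X > (1+\delta)nA} \leq \exp(-\delta^2 nA/3)$ and the lower tail gives $\prob{X < (1-\delta)nA} \leq \exp(-\delta^2 nA/2)$. Using the hypothesis $A \geq \sfrac{c\ln n}{n}$, the mean satisfies $nA \geq c\ln n$, so each tail is at most $\exp(-\delta^2 c \ln n / 3) = n^{-\delta^2 c/3}$. A union bound over the two tails then yields
\[
\prob{(1-\delta)nA \leq X \leq (1+\delta)nA} \geq 1 - 2\,n^{-\delta^2 c/3},
\]
so on this event $X = \bigTh{nA}$. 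Choosing the constant $c$ in the area hypothesis sufficiently large (for $\delta = \sfrac{1}{2}$, e.g.\ $c \geq 24$ drives the failure probability to $\bigO{n^{-2}}$) makes this a high-probability statement, which is exactly the claim.

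I do not expect a genuine obstacle here: this is a textbook Chernoff estimate, and the only thing to get right is the bookkeeping of constants. The one point worth emphasizing is the role of the threshold $A = \bigTh{\sfrac{\ln n}{n}}$, which is precisely the smallest area for which the mean $nA$ is $\bigOm{\log n}$, i.e.\ large enough for the Chernoff exponent to beat $\log n$ and deliver a polynomially small failure probability; for smaller areas the count would no longer concentrate. A second point, relevant when the lemma is invoked for every box $B_u(r^i)$ simultaneously (as in Lemma~\ref{lem:min-max-bounds-nodes-space}), is that one must then union-bound over all $\bigO{n\log n}$ such regions; since each failure probability is at most $n^{-\delta^2 c/3}$, taking $c$ a fixed large constant keeps the total failure probability polynomially small, but this union bound belongs to the point of use rather than to the present single-region lemma.
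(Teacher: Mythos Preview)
Your argument is correct and matches the paper's own proof essentially line for line: define the count as a sum of indicator variables with mean $nA$, apply the multiplicative Chernoff bounds in both directions, use $nA \geq c\ln n$ to turn the exponents into $n^{-\Theta(c)}$, and union-bound the two tails. The only cosmetic difference is that you fix $\delta = \sfrac{1}{2}$ whereas the paper leaves $\delta$ generic, and your added remarks about the $\log n/n$ threshold and the union bound across all boxes are accurate but belong (as you note) to the point of use.
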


    \begin{proof}
        Let $X_v$ denote the indicator random variable that is 1 when $v$ falls in $U$. Clearly,
        $\expectation{X_v} = A$ and
        \[X = \sum_{v\in V} X_v\]
        has expectation $nA$. For any $0 \leq \delta \leq 1$, we have
        \begin{align*}\prob{X \leq (1 - \delta)nA}
             & \leq \exp\left(\sfrac{-\delta^2nA}{2}\right)                                               \\
             & \leq \exp\left(\sfrac{-c\delta^2\ln{n}}{2}\right)\hbox{ since $A \geq \sfrac{c\ln{n}}{n}$} \\
             & =n^{\sfrac{-c\delta^2}{2}}
        \end{align*}
        Similarly, we have
        \begin{align*}\prob{X \geq (1 + \delta)nA}
             & \leq \exp\left(\sfrac{-\delta^2nA}{3}\right)                                               \\
             & \leq \exp\left(\sfrac{-c\delta^2\ln{n}}{3}\right)\hbox{ since $A \geq \sfrac{c\ln{n}}{n}$} \\
             & =n^{\sfrac{-c\delta^2}{3}}
        \end{align*}
        Thus, we have
        \begin{align*}\prob{(1 - \delta)nA \leq X \leq (1 + \delta)nA}
            =\prob{X \leq (1 - \delta)nA \text{ or } X \geq (1 + \delta)nA}                         \\
            \leq \prob{X \leq (1 - \delta)nA} + \prob{X \geq (1 + \delta)nA} \hbox{ by union bound} \\
            \leq n^{\sfrac{-c\delta^2}{2}} + n^{\sfrac{-c\delta^2}{3}}
        \end{align*}
        as desired.
    \end{proof}
 
    The following lemma establishes that a node
    will not receive too many ``incoming'' connections, i.e., that, with high
    probability, the number of nodes $v$ whose random walks successfully end at
    node $u$ is $\bigO{\log{n}}$ across all nodes $u$. (Figure~\ref{fig:in-edges} shows an example of a node which is close enough to $u$ so that its walks can (possibly) connect to $u$ and a node which is too far away.)

    \begin{lemma}\label{lem:each-phase-not-too-many-neighbors}
        Fix a node $u$. During phase $i$, the number of nodes that
        connect to $u$ is $\bigO{\log{n}}$ with probability $1 - \sfrac{1}{n^3}$.
    \end{lemma}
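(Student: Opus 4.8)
The plan is to bound the number of \emph{incoming} connections at $u$ by first restricting attention to the nodes that could possibly connect to $u$, then bounding the probability that each such candidate in fact connects, and finally applying a concentration argument to the (independent) sum.

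First I would observe that a node $x$ can connect to $u$ in phase $i$ only if some lazy random walk from $x$ \emph{successfully} ends at $u$, which requires $u \in B_x(r^i)$. Since the box $B_x(r^i)$ is symmetric about its center, $u \in B_x(r^i)$ is equivalent to $x \in B_u(r^i)$ (up to boundary effects with the unit square, which only shrink the candidate set). Hence the set of candidate nodes is contained in $B_u(r^i)$, and by Lemma~\ref{lem:node-concentration} we have $\card{B_u(r^i)} = \bigTh{r^{2i}n\log n} =: \bigTh{n_i}$ with high probability; I would condition on this event together with $\card{V_x(i-1)} = \bigTh{n_{i-1}}$ for all relevant $x$, where $n_{i-1} = \bigTh{r^{2(i-1)}n\log n}$.

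Next, for a fixed candidate $x$, I would bound $\prob{x \text{ connects to } u}$. Since $G_x(i-1)$ is an expander (Lemma~\ref{lem:graph-in-each-phase-expander}) and each walk runs for at least the mixing time, its endpoint is nearly uniform over $V_x(i-1)$, so a single walk from $x$ lands at $u$ with probability $\bigTh{\sfrac{1}{n_{i-1}}}$. As $x$ initiates only $\bigO{\log n}$ walks, and connects to $u$ only if some walk lands at $u$ (the subsequent sampling without replacement can only \emph{decrease} this probability), a union bound over the walks gives $\prob{x \text{ connects to } u} = \bigO{\sfrac{\log n}{n_{i-1}}}$. Writing $Y = \sum_{x \in B_u(r^i)} Y_x$ for the indicators $Y_x$ that $x$ connects to $u$, linearity of expectation yields
\[
    \expectation{Y} \leq \card{B_u(r^i)} \cdot \bigO{\frac{\log n}{n_{i-1}}} = \bigTh{r^{2i}n\log n}\cdot \bigO{\frac{\log n}{r^{2(i-1)}n\log n}} = \bigO{r^2\log n} = \bigO{\log n}.
\]

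Finally, I would establish concentration. Conditioned on the node placement and on the phase-$(\le i-1)$ construction (so that each $G_x(i-1)$ is fixed), the phase-$i$ walks use fresh random bits and are mutually independent across distinct source nodes $x$; since $Y_x$ depends only on the walks initiated by $x$, the indicators $\set{Y_x}$ are independent. A Chernoff bound on the independent sum $Y$, whose mean is $\bigO{\log n}$, then gives $\prob{Y \geq c\log n} \leq \sfrac{1}{n^3}$ for a sufficiently large constant $c$, which is the claim. The main obstacle I anticipate is the conditioning and independence bookkeeping: one must argue carefully that, given the earlier phases, the phase-$i$ walks from different nodes are genuinely independent (so the in-degree is a sum of independent indicators), and that the success-and-sampling step only ever lowers each connection probability, so that the simple ``some walk of $x$ lands at $u$'' upper bound remains valid.
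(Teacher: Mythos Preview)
Your proposal is correct and follows essentially the same approach as the paper's proof: restrict to candidates $x \in B_u(r^i)$, upper-bound the per-candidate connection probability by (number of walks)$\times \Theta(1/n_{i-1})$, sum to get expectation $\bigO{\log n}$, and apply a Chernoff bound. If anything, your treatment is slightly more careful than the paper's in making the conditioning and independence explicit and in noting that the sampling-without-replacement step can only lower each connection probability.
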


    \begin{proof}
        Let $B_u$ denote the neighborhood of $u$ during this phase. Write
        \[X_i^v = \begin{cases}1 & \hbox{ the $i$-th random walk from $v$ lands in $u$}\\0 & \hbox{ otherwise}\end{cases}\]
        and observe that $X^v = \sum_{i = 1}^d X_i^v$, where $d = \bigO{\log{n}}$ is the number of random walks initiated by $u$,
        upper bounds the indicator random variable of whether $v$ connects
        to $u$. Then
        \[\expectation{X_i} \leq \frac{1}{N_v} + \frac{1}{N_v^c} \hbox{ for an arbitrary, but fixed, constant $c \geq 1$}\] 
        where $N_v$ is the number of nodes in the neighborhood around $v$. Thus, we have $\expectation{X} \leq
            \sfrac{d}{N_v} + \sfrac{d}{N_v^c}$. Similarly, the random variable
        \[X = \sum_{v \neq u}X^v\]
        upper bounds the number of incoming connections to $u$. Note that if $v \notin B_u$, then
        $X^v$ is identically 0. Thus, by linearity of expectation,
        \begin{align*}\expectation{X}
             & = \sum_{\genfrac{}{}{0pt}{1}{v \neq u}{v \in B_u}}\left(\frac{d}{N_v} + \frac{d}{N_v^c}\right)                                                                                    \\
             & = \sum_{\genfrac{}{}{0pt}{1}{v \neq u}{v \in B_u}}\left(\frac{d}{\bigTh{r^{2i}n}} + \frac{d}{\bigTh{r^{c\left(2i\right)}n^c}}\right)\hbox{ by Lemma~\ref{lem:node-concentration}} \\
             & = \bigTh{r^{2i}n}\left(\frac{d}{\bigTh{r^{2i}n}} + \frac{d}{\bigTh{r^{c\left(2i\right)}n^c}}\right)\hbox{ again, by Lemma~\ref{lem:node-concentration}}                           \\
             & = \bigO{d}
        \end{align*}

        Now, applying Chernoff Bounds, noting that $d = \bigO{\log{n}}$,
        \begin{align*}\prob{X \geq (1 + \delta)\expectation{X}}
             & \leq \exp\left(\frac{-\delta^2\expectation{X}}{3}\right)                   \\
             & \leq \exp\left(\frac{-\delta^2k\log{n}}{3}\right) \hbox{ for some $k > 0$} \\
             & \leq \frac{1}{n^3}\hbox{ for appropriate choice of $\delta$}
        \end{align*}
        Thus, $X = \bigOm{\log{n}}$ with probability at least $1 - \sfrac{1}{n^3}$.
    \end{proof}

    We now bound the number of nodes in the neighborhood $B_u(r^i)$. The proof of the following lemma follows from Lemma~\ref{lem:node-concentration}.

    \begin{lemma}\label{lem:min-max-bounds-nodes-space}
        With high probability, for all $0 \leq i \leq \NUMPHASES$ and all nodes $u$
        in $G$, the number of nodes in $B_u(r^i)$ is $\bigTh{r^{2i}n\log{n}}$ which is $\bigOm{\log{n}}$ and
        $\bigO{r^{2i}n\log{n}}$.
    \end{lemma}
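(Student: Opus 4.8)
The plan is to derive this as a direct corollary of the concentration estimate in Lemma~\ref{lem:node-concentration}, applied to the region $U = B_u(r^i)$ for a single fixed node $u$ and a single phase $i$, and then to lift the statement to all nodes and all phases by a union bound. The only probabilistic input required is the placement model: the points of $V$ are dropped uniformly and independently into the unit square, so the number landing inside any fixed measurable region $U$ is a sum of i.i.d.\ indicators, each succeeding with probability $\operatorname{area}(U)$. This is precisely the setting of Lemma~\ref{lem:node-concentration}, so the graph structure of $G$ plays no role here.

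First I would pin down $\operatorname{area}(B_u(r^i))$. Recall $B_u(r^i)$ is the intersection of the unit square with the axis-aligned square of side $r^i$ centered at $u$ (measured in the $L_\infty$ metric). When $u$ lies at $L_\infty$-distance at least $r^i/2$ from every edge of the grid the area is exactly $r^{2i}$; near the boundary it can only shrink, and even when $u$ sits in a corner a full quadrant survives, giving $\operatorname{area}(B_u(r^i)) \geq r^{2i}/4$ in every case. Hence $\operatorname{area}(B_u(r^i)) = \bigTh{r^{2i}}$ uniformly over the placement of $u$, and the expected count is $\bigTh{r^{2i} n}$.

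Next I would check the hypothesis $A \geq c\ln n / n$ of Lemma~\ref{lem:node-concentration}. Since $0 < r < 1$ and $i \leq \NUMPHASES$ we have $r^{2i} \geq r^{2\NUMPHASES}$, and by the protocol's calibration $r^{2\NUMPHASES} n = \bigTh{\log n}$; therefore $\operatorname{area}(B_u(r^i)) \geq r^{2i}/4 \geq r^{2\NUMPHASES}/4 = \bigOm{\log n / n}$, and enlarging the constant in the side-length relation makes the threshold hold simultaneously for every $i \in \set{0,1,\dots,\NUMPHASES}$. Lemma~\ref{lem:node-concentration} then yields $\card{B_u(r^i)\cap V} = \bigTh{n\cdot\operatorname{area}(B_u(r^i))} = \bigTh{r^{2i} n}$; in particular this count is $\bigOm{\log n}$ (because $r^{2i} n \geq r^{2\NUMPHASES} n = \bigOm{\log n}$) and $\bigO{r^{2i} n\log n}$, which are exactly the explicit one-sided bounds asserted in the statement, holding with probability at least $1 - n^{-\kappa}$ for a constant $\kappa$ that can be pushed arbitrarily high by fixing the deviation parameter $\delta$ and growing the area constant.

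Finally I would close with a union bound over the $n$ choices of $u$ and the $\NUMPHASES + 1 = \bigO{\log n}$ choices of $i$, i.e.\ $\bigO{n\log n}$ events; taking $\kappa \geq 3$ keeps the aggregate failure probability at $\bigO{\log n / n^{2}} = n^{-\bigOm{1}}$, so the bounds hold for all $u$ and all $i$ at once with high probability. The main obstacle I anticipate is not the Chernoff estimate but the \emph{uniformity} of the argument: I must guarantee that the area lower bound $r^{2i}/4$ is genuinely placement-independent despite boundary clipping, and that the threshold of Lemma~\ref{lem:node-concentration} is still met in the final phase where the box is smallest and its area sits right at the connectivity scale $\bigTh{\log n / n}$, all while retaining enough slack in the deviation probability to absorb the $\bigO{n\log n}$-fold union bound.
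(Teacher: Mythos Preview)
Your proposal is correct and follows essentially the same approach as the paper: verify that the area of $B_u(r^i)$ is bounded below by $r^{2\NUMPHASES} = \bigTh{\log n / n}$ so that the hypothesis of Lemma~\ref{lem:node-concentration} is met, then read off the concentration bound. You are simply more explicit than the paper's terse proof about boundary clipping (the $r^{2i}/4$ area lower bound) and about the union bound over all $n$ nodes and $\NUMPHASES+1$ phases, both of which the paper leaves implicit.
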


    \begin{proof}
        Notice that the area of the region $B_u(r^i)$, $r^{2i}$, is bounded below
        by $r^{2\NUMPHASES} = \bigTh{\sfrac{\log{n}}{n}}$. The result therefore
        follows directly from Lemma~\ref{lem:node-concentration}: the number of nodes
        in $B_u(r^i)$ is $\bigTh{r^{2i}n\log{n}}$ with high probability, which is $\bigO{r^{2i}n\log{n}}$
        and $\bigOm{\log{n}}$.
    \end{proof}

    The proof of the following lemma can be seen from using Lemma~\ref{lem:min-max-bounds-nodes-space} and basic probability arguments.
    \begin{lemma}\label{lem:all-nodes-connected-final-phase}
        At the end of Algorithm~\ref{alg:reconfig}, $u$ is connected to all nodes in the
        graph $G_u(\NUMPHASES)$ with probability $1 - \sfrac{1}{n^c}$ for some $c >
            0$.
    \end{lemma}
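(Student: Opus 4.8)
The plan is to fix a single node $u$ and argue that, with probability at least $1 - \sfrac{1}{n^c}$, \emph{every} one of the target nodes lying in $B_u(r^{\NUMPHASES})$ is reached by at least one successful random walk launched from $u$ in the final phase; the claim then follows at once, since $u$ forms an edge to each node at which one of its walks terminates. I would first invoke Lemma~\ref{lem:graph-in-each-phase-expander} to condition on the high-probability event that $G_u(\NUMPHASES-1)$ is an expander, and Lemmas~\ref{lem:node-concentration} and~\ref{lem:min-max-bounds-nodes-space} to condition on the node-count event, under which both $B_u(r^{\NUMPHASES-1})$ and $B_u(r^{\NUMPHASES})$ contain $\bigTh{\log n}$ nodes (their areas differ only by the constant factor $r^{-2}$, and $r^{2\NUMPHASES}n = \bigTh{\log n}$ by design), so that $\card{V_u(\NUMPHASES-1)} = \bigTh{\log n}$ and $\card{V_u(\NUMPHASES)} = \bigTh{\log n}$. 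All subsequent probabilities are computed conditional on these two events, whose complements contribute only an additive $\sfrac{1}{n^{\bigOm{1}}}$ to the final failure bound.

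Given this conditioning, each of the $\bigTh{\log^2 n}$ lazy walks that $u$ runs on the expander $G_u(\NUMPHASES-1)$ has length at least the mixing time of that graph, so its endpoint distribution is within $\pm\sfrac{1}{n^3}$ (pointwise) of the uniform distribution on $V_u(\NUMPHASES-1)$. Hence, fixing a target $v \in B_u(r^{\NUMPHASES}) \subseteq B_u(r^{\NUMPHASES-1})$, a single walk terminates exactly at $v$ with probability $p_v = \sfrac{1}{\card{V_u(\NUMPHASES-1)}} \pm \sfrac{1}{n^3} = \bigTh{\sfrac{1}{\log n}}$, where the additive error is negligible since $\sfrac{1}{\card{V_u(\NUMPHASES-1)}} = \bigTh{\sfrac{1}{\log n}}$ dominates $\sfrac{1}{n^3}$. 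Letting $Y_v$ denote the number of walks from $u$ ending at $v$, and using that distinct walks are mutually independent, $Y_v$ is a sum of independent indicators with $\expectation{Y_v} = \bigTh{\log^2 n}\cdot\bigTh{\sfrac{1}{\log n}} = \bigTh{\log n}$.

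A Chernoff bound then gives $\prob{Y_v = 0} \leq \exp(-\bigOm{\expectation{Y_v}}) = \exp(-\bigOm{\log n}) = \sfrac{1}{n^{\bigOm{1}}}$, and by taking the leading constant in the $\bigTh{\log^2 n}$ walk count large enough this can be pushed below $\sfrac{1}{n^{c+2}}$ for any desired constant $c$. A union bound over the $\bigTh{\log n}$ targets $v \in V_u(\NUMPHASES)$ shows that $u$ connects to all of them with probability at least $1 - \sfrac{1}{n^{c+1}}$, and folding back the inverse-polynomial probabilities of the two conditioning events yields the stated $1 - \sfrac{1}{n^c}$ bound for the fixed node $u$; a further union bound over all $n$ nodes extends this to all of $V$ if desired.

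The step I expect to be the main obstacle is the quantitative margin rather than any one inequality. Because a mixed lazy walk on $G_u(\NUMPHASES-1)$ hits a \emph{specific} nearby node only with probability $\bigTh{\sfrac{1}{\log n}}$, running merely $\bigTh{\log n}$ walks would give only a constant expected number of hits per target, and hence a constant per-target miss probability that cannot survive a union bound over $\bigTh{\log n}$ targets. The crucial point is that the extra logarithmic factor in the $\bigTh{\log^2 n}$ walks of the final phase is exactly what lifts $\expectation{Y_v}$ to $\bigTh{\log n}$ and thereby makes the per-target miss probability inverse-polynomial. Care is also needed to confirm that $\card{V_u(\NUMPHASES-1)}$ exceeds $\card{V_u(\NUMPHASES)}$ by only a constant factor (so that walks ending in the smaller box $B_u(r^{\NUMPHASES})$ are not too rare) and that the $\sfrac{1}{n^3}$ mixing error is dominated by the $\bigTh{\sfrac{1}{\log n}}$ main term.
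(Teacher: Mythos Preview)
Your proposal is correct and follows essentially the same approach as the paper: use that $B_u(r^{\NUMPHASES-1})$ contains $\bigTh{\log n}$ nodes so each mixed walk hits a fixed target $v$ with probability $\bigTh{\sfrac{1}{\log n}}$, then show $\bigTh{\log^2 n}$ walks suffice to hit every target, and finish with a union bound over the $\bigO{\log n}$ targets. The paper's version is slightly terser---it bounds the miss probability for a single $v$ directly as $\left(1-\sfrac{1}{\log n}\right)^{\bigTh{\log^2 n}} \leq \sfrac{1}{n^c}$ rather than going through a Chernoff bound on $Y_v$---but the argument is the same.
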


    \begin{proof}
        We will upper bound the probability that a particular node $v$ in $B_v(r^{\NUMPHASES})$
        is not a neighbor of $u$. By Lemma~\ref{lem:min-max-bounds-nodes-space}, there are $\bigO{r^{2\NUMPHASES}n\log{n}} =
            \bigO{\log{n}}$ nodes in this square with high probability and, since $u$
        does $\bigO{\log^2{n}}$ random walks, the probability that $u$ does not
        connect to $v$ is
        \[\left(1 - \frac{1}{\log{n}}\right)^{\bigO{\log^2{n}}} \leq \frac{1}{n^c}\hbox{ for some $c > 0$}\] 
        Taking a union bound across all $O(\log{n})$ nodes completes the proof.
    \end{proof}

}

\begin{theorem}\label{thm:graph-conversion-alg-works}
    The $\WEAVER$ protocol (Algorithm \ref{alg:reconfig}) takes an embedded $d$-regular expander graph and constructs a graph in $O(\log^3 n)$ rounds such that:
    \begin{enumerate}[label=(\roman*)]
        \item its degree is $O(\log^2 n)$ with high probability \label{item:graphconv-deg}
        \item and it contains a random geometric graph $G(n,\rho)$ (where $\rho = \Theta(\sqrt{(\log n)/n}$) with high probability. \label{item:graphconv-gnr}
    \end{enumerate}

\end{theorem}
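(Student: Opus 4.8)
The plan is to assemble the three conclusions---the $\bigO{\log^3 n}$ running time, the $\bigO{\log^2 n}$ degree bound, and the containment of a random geometric graph---from the per-phase lemmas already established for $\WEAVER$. Each conclusion reduces to a single-phase estimate followed by a union bound over the $\NUMPHASES = \bigTh{\log n}$ phases and the $n$ nodes, so the work is careful accounting rather than new probabilistic machinery. For the running time I would apply Lemma~\ref{lem:random-walk-time} phase by phase, separating three regimes. In the first phase the base graph is $d$-regular with $d=\bigO{1}$, so launching $\bigTh{\log n}$ walks forces $\eta = \bigTh{\log n}$; with length $\lambda = 2\mixingtime = \bigTh{\log n}$ this gives $\bigO{\eta\lambda\log n} = \bigO{\log^3 n}$ rounds. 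Each intermediate phase runs on a graph of degree $\bigO{\log n}$ (by the degree bound below), so $\eta = \bigO{1}$ suffices, yielding $\bigO{\log^2 n}$ rounds per phase and $\bigO{\log^3 n}$ across the $\bigO{\log n}$ intermediate phases. The final phase runs $\bigTh{\log^2 n}$ walks on an $\bigO{\log n}$-degree graph, i.e.\ $\eta = \bigTh{\log n}$ and $\lambda = \bigTh{\log n}$, again $\bigO{\log^3 n}$. Summing the three regimes gives $\bigO{\log^3 n}$.

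For the degree bound~\ref{item:graphconv-deg}, I would show that a node accrues only $\bigO{\log n}$ edges per phase. In any phase $1 \le i \le \NUMPHASES-1$, node $u$ adds at most $b\log n = \bigO{\log n}$ outgoing edges by the protocol's sampling step, and Lemma~\ref{lem:each-phase-not-too-many-neighbors} bounds its incoming edges by $\bigO{\log n}$ with probability $1 - \sfrac{1}{n^3}$. In the final phase, Lemma~\ref{lem:min-max-bounds-nodes-space} gives $\card{B_u(r^{\NUMPHASES})} = \bigTh{r^{2\NUMPHASES}n\log n} = \bigTh{\log n}$, so connecting to the entire box contributes only $\bigO{\log n}$. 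Adding the $\bigO{1}$ base degree and summing over $\NUMPHASES = \bigO{\log n}$ phases yields $\bigO{\log^2 n}$; a union bound over all $n$ nodes and all phases preserves this with high probability.

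For containment of a random geometric graph~\ref{item:graphconv-gnr}, I would invoke Lemma~\ref{lem:all-nodes-connected-final-phase}, which guarantees (w.h.p.) that each node $u$ is joined to every node in the box $B_u(r^{\NUMPHASES})$. Setting $\rho := \sfrac{r^{\NUMPHASES}}{2}$ and using that the $L_\infty$ distance never exceeds the $L_2$ (Euclidean) distance, any node within Euclidean distance $\rho$ of $u$ lies within $L_\infty$-distance $\sfrac{r^{\NUMPHASES}}{2}$ of $u$, hence inside $B_u(r^{\NUMPHASES})$, and is therefore a neighbor of $u$ in $G^*$. Thus every edge of $G(n,\rho)$ appears in $G^*$, and since the protocol's requirement $r^{2\NUMPHASES}n = \bigTh{\log n}$ forces $r^{\NUMPHASES} = \bigTh{\sqrt{(\log n)/n}}$, we obtain $\rho = \bigTh{\sqrt{(\log n)/n}}$; a final union bound over all nodes secures the high-probability guarantee.

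The main obstacle, I expect, is not any single estimate but keeping the running-time accounting honest: one must recognize that the constant-degree first phase and the $\bigTh{\log^2 n}$-walk final phase are the two $\bigO{\log^3 n}$ bottlenecks while the intermediate phases are individually cheaper, and then confirm that the high-probability guarantees of all the component lemmas survive a union bound over $\bigTh{\log n}$ phases and $n$ nodes without degrading below $1 - \sfrac{1}{n^{\Omega(1)}}$.
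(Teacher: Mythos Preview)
Your proposal is correct and follows essentially the same approach as the paper's proof: both establish the three claims by applying the per-phase lemmas (Lemma~\ref{lem:each-phase-not-too-many-neighbors} for the degree, Lemma~\ref{lem:all-nodes-connected-final-phase} for the random geometric subgraph, and Lemma~\ref{lem:random-walk-time} for the running time), with the same three-regime accounting (constant-degree first phase, $\bigTh{\log n}$-degree intermediate phases, and the $\bigTh{\log^2 n}$-walk final phase) for the $\bigO{\log^3 n}$ bound. Your treatment is in fact slightly more careful than the paper's in making explicit the $L_\infty$ versus $L_2$ comparison that turns ``$u$ connects to every node in $B_u(r^{\NUMPHASES})$'' into containment of the Euclidean $G(n,\rho)$; the paper simply asserts this follows from Lemma~\ref{lem:all-nodes-connected-final-phase}. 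One small omission: the paper briefly justifies that the \emph{lazy} random walk incurs only a constant-factor slowdown (since $\deg(u)/(\Delta+1) = \bigTh{1}$ is maintained across phases), which you implicitly absorb into the $\lambda = \bigTh{\log n}$ estimate but do not state.
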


\onlyLong{
    \begin{proof}
        By the construction of our algorithm and by Lemma~\ref{lem:each-phase-not-too-many-neighbors}, the number of connections established per
        phase is $\bigO{\log{n}}$ with high probability. Since there are $\NUMPHASES = \bigO{\log{n}}$ phases, item~\ref{item:graphconv-deg} directly follows. Item~\ref{item:graphconv-gnr}
        directly follows from Lemma~\ref{lem:all-nodes-connected-final-phase}.

        To argue about run time, notice that there are $\NUMPHASES = \bigO{\log n}$ phases. In each of the first $\NUMPHASES - 1$ phases, each node performs a $\bigO{\log n}$ lazy random walks simultaneously, each of length $\bigO{\log n}$.

        We first argue that the time it takes to perform lazy random walk in each phase is asymptotically equal to the time it takes to perform a random walk.  Recall that in phase $i$, the lazy random walk starting at node $u$ is run on graph $G_u(i-1)$. Let $d_u$ be the degree of node $u$ in $G_u(i-1)$ and let $\Delta$ be (an upper bound on) the maximum degree of any node in $G_u(i-1)$. Since we maintain the ratio $d_u/(\Delta +1) = O(1)$ in every phase (by algorithm construction, Corollary~\ref{cor:lowerboundnumneighbors}, and Lemma~\ref{lem:each-phase-not-too-many-neighbors}), the slowdown of the lazy random walk compared to the normal random walk is at most a constant factor.

        We now look at the effect of congestion on the time to complete random walks.
        By Lemma~\ref{lem:random-walk-time}, we see that when the degree is a constant, it takes $O(\log^3 n)$ rounds with high probability to finish running all the random walks walks in parallel and when the degree is $\Theta(\log n)$, it takes $O(\log^2 n)$ rounds with high probability to finish. In the first phase, we run these walks on a graph of constant degree and in subsequent phases, we run these walks on a graph of $\Theta(\log n)$ degree.
        In phase $\NUMPHASES$, each node performs $O(\log^2 n)$ lazy random walks, each of length $O(\log n)$, taking $O(\log^3 n)$ rounds with high probability. 
        Thus, the total run time of the algorithm is $\bigO{\log^3 n}$ rounds with high probability.
    \end{proof}
}

\section{Efficient Communication Protocols}
\label{sec:applications-graph}
In this section, we present efficient routing and broadcast algorithms for the graph $G^*$ that was constructed using the P2P protocol $\WEAVER$ in Section~\ref{sec:reconfig-alg}. 
Since the properties of $G^*$ hold with high probability,  the correctness of the protocols and the associated bounds
in the theorems hold with high probability.

\subsection{Efficient Broadcasting Protocols}
\label{subsec:reconfig-analysis-broadcast}
Let us assume that we are given a source node $source$ with a message that must be broadcast to every node in the graph. In this section, we design broadcast algorithms to be run on the graph $G^*$ that is constructed by the P2P construction protocol in Section~\ref{sec:reconfig-alg}. In order to argue about the efficiency of broadcast, we use \textit{broadcast propagation cost}, \textit{broadcast completion cost}, and \textit{broadcast completion time} (see Section~\ref{sec:defns}).

First,  we present a simple flooding-based broadcast algorithm called $\FLOODING$, in Section~\ref{subsubsec:simple-broadcast}, that has optimal broadcast propagation cost (up to polylog $n$ factors) and optimal broadcast completion cost (up to polylog $n$ factors) but at the expense of a very bad broadcast completion time. In particular, the broadcast propagation cost is $\Tilde{O}(\sqrt{n})$, the broadcast completion cost is $\tilde{O}(1)$, and the broadcast completion time is $\tilde{O}(\sqrt{n}$). From Theorem~\ref{the:lower-bound-broadcast}, we see that this broadcast propagation cost is asymptotically optimal up to polylog $n$ factors for any broadcast algorithm run by nodes uniformly distributed in Euclidean space. 

In order to obtain optimal bounds (up to polylog $n$ factors) for all three metrics, we design a  more sophisticated algorithm called $\BROADCAST$, in Section~\ref{subsubsec:broadcast-opt}, that requires that each node knows its own coordinates (instead of merely the distance between itself and some other node). $\BROADCAST$ has broadcast propagation cost $\tilde{O}(\sqrt{n})$, broadcast completion cost $O(1)$, and broadcast completion time $\tilde{O}(1)$. 

\subsubsection{Algorithm~\texorpdfstring{$\FLOODING$}{Geometric-Flooding}}\label{subsubsec:simple-broadcast}
~\\
\textbf{Brief Description.} The algorithm consists of each node participating in flooding over $G(\NUMPHASES)$. Initially, the source node sends the message to all its neighbors in $G(\NUMPHASES)$. Subsequently, each node, once it receives the message for the first time, transmits that message over each of its edges in $G(\NUMPHASES)$. 
~\\
\noindent \textbf{Analysis.} 
In $G(\NUMPHASES)$, each node has $O(\log n)$ neighbors and the weight of each edge is $O(r^\NUMPHASES)$. So, the sum of the edge weights in the graph, i.e., the broadcast propagation cost, is $O(n \cdot \log n \cdot r^\NUMPHASES) = O(n \cdot \log n \cdot \sqrt{\log n/n} ) = O(\sqrt{n \log^3 n})$.

The broadcast completion time corresponds to the diameter of the random geometric graph $G(\NUMPHASES)$. From~\cite{G18}, we see that the diameter of a random geometric graph $G(n,\rho)$ embedded in a unit grid is $\tilde{\Theta}(1/\rho)$. For the graph $G(\NUMPHASES)$, 
$\rho = \Theta(\sqrt{\log n/n})$. 
So the broadcast completion time is $\tilde{\Theta}(\sqrt{n})$. 

The broadcast completion cost is upper bounded by the product of diameter and edge weight, so it is $\tilde{O}(1)$.

The following theorem captures the relevant properties of the algorithm.

\begin{theorem}
    Algorithm~$\FLOODING$, when run by all nodes on $G^*$, results in a message being sent from a source node $source$ to all nodes in $\tilde{O}(\sqrt{n})$ broadcast completion time with broadcast completion cost $\tilde{O}(1)$ and broadcast propagation cost $O(\sqrt{n \log^3 n})$, which are all asymptotically optimal up to polylog $n$ factors.
\end{theorem}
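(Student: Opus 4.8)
The plan is to read off all three quantities from three structural facts about $G(\NUMPHASES)$, the random-geometric-graph layer of $G^*$ over which $\FLOODING$ operates. By Theorem~\ref{thm:graph-conversion-alg-works}, $G(\NUMPHASES)$ contains a connected random geometric graph $G(n,\rho)$ with $\rho = \Theta(\sqrt{\log n / n})$. First I would isolate the two local facts I need: (i) every edge of $G(\NUMPHASES)$ has weight $O(r^{\NUMPHASES}) = O(\sqrt{\log n/n})$, since both endpoints lie in a box of side length $r^{\NUMPHASES}$; and (ii) every node has $O(\log n)$ neighbors in $G(\NUMPHASES)$, because by Lemma~\ref{lem:min-max-bounds-nodes-space} the box $B_u(r^{\NUMPHASES})$ holds $\Theta(\log n)$ nodes with high probability. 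Each of the three measures then follows by an elementary argument from (i), (ii), and the diameter of $G(n,\rho)$.

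For the broadcast propagation cost I would observe that flooding transmits the message across each edge of $G(\NUMPHASES)$ at most a constant number of times, so the cost is at most a constant times the total edge weight. Bounding the number of edges by $\frac{1}{2}\sum_v \deg(v) = O(n\log n)$ via (ii) and multiplying by the maximal weight from (i) gives $O(n\log n)\cdot O(\sqrt{\log n/n}) = O(\sqrt{n\log^3 n})$. Theorem~\ref{the:lower-bound-broadcast} supplies the matching $\Omega(\sqrt n)$ lower bound for any algorithm on points distributed uniformly in the unit square, so this cost is optimal up to polylog $n$ factors.

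For the broadcast completion time and completion cost I would use the diameter of the random geometric graph. The completion time of flooding equals the hop diameter of $G(\NUMPHASES)$; since $G(\NUMPHASES)$ contains $G(n,\rho)$ and the remaining edges only shrink distances, this is at most the diameter of $G(n,\rho)$, which by~\cite{G18} is $\tilde{\Theta}(1/\rho) = \tilde{\Theta}(\sqrt{n/\log n}) = \tilde{\Theta}(\sqrt n)$, yielding completion time $\tilde{O}(\sqrt n)$. Because edges span only $O(\rho)$ while the square has diameter $\Theta(1)$, at least $\Omega(1/\rho)$ hops are required, so this bound is in fact tight for flooding on this graph. For the completion cost, the message reaches any node along a path of at most $\tilde{O}(\sqrt n)$ hops, each of weight $O(\sqrt{\log n/n})$ by (i), so by the triangle inequality the propagation cost of that path is at most $\tilde{O}(\sqrt n)\cdot O(\sqrt{\log n/n}) = \tilde{O}(1)$; this is optimal up to polylog factors since the minimum propagation cost between the two farthest nodes is at least their Euclidean distance, which is $\Omega(1)$ with high probability.

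The argument is largely routine counting together with a triangle-inequality estimate; the only nonelementary ingredient is the $\tilde{\Theta}(1/\rho)$ diameter bound for $G(n,\rho)$, which I would import wholesale from~\cite{G18}. The one point requiring care --- and the main place a sloppy argument could go wrong --- is ensuring that the geometric features specific to $\WEAVER$ do not perturb these asymptotics: the neighborhoods are axis-aligned squares rather than Euclidean balls, and $G(\NUMPHASES)$ carries extra edges beyond the pure random geometric graph. Both only help (a square of side $r^{\NUMPHASES}$ contains a ball of radius $r^{\NUMPHASES}/2$, and extra edges can only decrease the diameter), so the random-geometric-graph diameter remains a valid $\tilde{O}(\sqrt n)$ upper bound while the edge-weight and degree bounds of (i)--(ii) remain intact; I would state this reduction explicitly before invoking~\cite{G18}.
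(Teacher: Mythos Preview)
Your proposal is correct and follows essentially the same approach as the paper's own proof: bound the propagation cost by (number of nodes) $\times$ (degree bound) $\times$ (edge weight), invoke the $\tilde{\Theta}(1/\rho)$ diameter of $G(n,\rho)$ from~\cite{G18} for completion time, and multiply diameter by edge weight for completion cost. Your write-up is actually more careful than the paper's --- you explicitly address optimality via lower bounds and the subtlety that $G(\NUMPHASES)$ uses square neighborhoods and may carry extra edges --- but the underlying argument is identical.
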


\subsubsection{Algorithm~\texorpdfstring{$\BROADCAST$}{CompassCast}}\label{subsubsec:broadcast-opt}
Note that for this section, due to technical considerations, we assume that the parameter $r$ in the P2P construction protocol is chosen so that  $r \leq 0.25$ and $1/r$ is an integer. We additionally  assume that nodes know their own coordinates.

In order to describe the algorithm, we make use of the following notation for ease of explanation. Let $H_i$ represent the partition of the unit grid into a $1/r^i$ by $1/r^i$ grid of $1/r^{2i}$ equal size squares.

\longOnly{
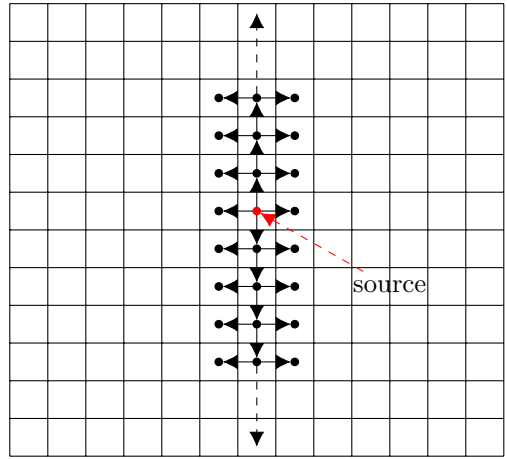
\begin{figure}
    \centering
    \begin{tikzpicture}[scale=1.0]
        \draw[step=0.5] (-1, -1) grid (5.5, 5);
        \node at (2.25, 5) (u5) {};
        \node[point] at (2.25, 3.75) (u4) {};
        \node[point] at (2.25, 3.25) (u3) {};
        \node[point] at (2.25, 2.75) (u2) {};
        \node[color=red,point] at (2.25, 2.25) (u1) {};
        \node[point] at (2.25, 1.75) (d1) {};
        \node[point] at (2.25, 1.25) (d2) {};
        \node[point] at (2.25, 0.75) (d3) {};
        \node[point] at (2.25, 0.25) (d4) {};
        \node at (2.25, -1) (d5) {};
        \draw[->] (u1) -- (u2);
        \draw[->] (u2) -- (u3);
        \draw[->] (u3) -- (u4);
        \draw[->, dashed] (u4) -- (u5);
        \draw[->] (u1) -- (d1);
        \draw[->] (d1) -- (d2);
        \draw[->] (d2) -- (d3);
        \draw[->] (d3) -- (d4);
        \draw[->, dashed] (d4) -- (d5);

        \node at (4, 1.25) (source) {source};
        \draw[dashed, red, ->] (source) -- (u1);

        \node[point] at (2.75, 3.75) (ur4) {};
        \node[point] at (2.75, 3.25) (ur3) {};
        \node[point] at (2.75, 2.75) (ur2) {};
        \node[point] at (2.75, 2.25) (ur1) {};
        \node[point] at (2.75, 1.75) (dr1) {};
        \node[point] at (2.75, 1.25) (dr2) {};
        \node[point] at (2.75, 0.75) (dr3) {};
        \node[point] at (2.75, 0.25) (dr4) {};
        \node[point] at (1.75, 3.75) (ul4) {};
        \node[point] at (1.75, 3.25) (ul3) {};
        \node[point] at (1.75, 2.75) (ul2) {};
        \node[point] at (1.75, 2.25) (ul1) {};
        \node[point] at (1.75, 1.75) (dl1) {};
        \node[point] at (1.75, 1.25) (dl2) {};
        \node[point] at (1.75, 0.75) (dl3) {};
        \node[point] at (1.75, 0.25) (dl4) {};

        \draw[->] (u1) -- (ur1);
        \draw[->] (u1) -- (ul1);
        \draw[->] (u2) -- (ur2);
        \draw[->] (u2) -- (ul2);
        \draw[->] (u3) -- (ur3);
        \draw[->] (u3) -- (ul3);
        \draw[->] (u4) -- (ur4);
        \draw[->] (u4) -- (ul4);
        \draw[->] (d1) -- (dr1);
        \draw[->] (d1) -- (dl1);
        \draw[->] (d2) -- (dr2);
        \draw[->] (d2) -- (dl2);
        \draw[->] (d3) -- (dr3);
        \draw[->] (d3) -- (dl3);
        \draw[->] (d4) -- (dr4);
        \draw[->] (d4) -- (dl4);
    \end{tikzpicture}
    \caption{\textbf{Phase one of $\BROADCAST$.} $H_2$ graph when $r=0.25$. In phase one, a message originating from $source$ is propagated to a node in the square North of it and a node in the square South of it. These nodes continue propagating the message North and South, respectively. Subsequently, each node that received a message and $S$ propagate the message to nodes in squares to the East and West of them. These nodes in turn continue propagating the message East and West, respectively.}
    \label{fig:broadcast-phase-one}
\end{figure}

\begin{figure}
    \begin{subfigure}[t]{0.4\textwidth}
    \begin{tikzpicture}[scale=1.0]


        \node[point,label=below:{$u$}] at (0.5, 0.5) (s) {};
        \node[point] at (0.8, 1.75) (10) {};
        \node[point] at (0.3, 2.45) (20) {};
        \node[point] at (0.6, 3.32) (30) {};

        \node[point] at (1.5, 0.5) (01) {};
        \node[point] at (1.8, 1.75) (11) {};
        \node[point] at (1.3, 2.45) (21) {};
        \node[point] at (1.6, 3.32) (31) {};

        \node[point] at (2.5, 0.5) (02) {};
        \node[point] at (2.8, 1.75) (12) {};
        \node[point] at (2.3, 2.45) (22) {};
        \node[point] at (2.6, 3.32) (32) {};

        \node[point] at (3.5, 0.5) (03) {};
        \node[point] at (3.8, 1.75) (13) {};
        \node[point] at (3.3, 2.45) (23) {};
        \node[point] at (3.6, 3.32) (33) {};
        \draw[step=1] (0, 0) grid (4, 4);

        \draw[->] (s) to[bend right] (10);
        \draw[->] (s) to[bend right] (20);
        \draw[->] (s) to[bend right] (30);
        \draw[->] (s) to[bend right] (01);
        \draw[->] (s) to[bend right] (11);
        \draw[->] (s) to[bend right] (21);
        \draw[->] (s) to[bend right] (31);
        \draw[->] (s) to[bend right] (02);
        \draw[->] (s) to[bend right] (12);
        \draw[->] (s) to[bend right] (22);
        \draw[->] (s) to[bend right] (32);
        \draw[->] (s) to[bend right] (03);
        \draw[->] (s) to[bend right] (13);
        \draw[->] (s) to[bend right] (23);
        \draw[->] (s) to[bend right] (33);
    \end{tikzpicture}
    \caption{\textbf{Phase two of $\BROADCAST$.} Cross section of one square of $H_i$ divided into smaller boxes corresponding to $H_{i+1}$ when $r=0.25$. In one step of phase two, node $u$ in a given square of $H_i$ identifies one node from each square of $H_{i+1}$ that intersects its square in $H_i$ and sends the message to it.} \label{fig:broadcast-phase-two}
    \end{subfigure}\hfill
    \begin{subfigure}[t]{0.4\textwidth}
    \begin{tikzpicture}[scale=1.0]
        \draw (0, 0) rectangle (4, 4);
        \node[point, label=below right:{$u$}] at (2, 2) (u) {};
        \node[point] at (3.6429502708368915, 1.8859988242670598) (0){};
        \node[point] at (1.4758298730955905, 0.0902896526206538) (1){};
        \node[point] at (0.35225768824200854, 3.4571967793850917) (2){};
        \node[point] at (0.9091502590336007, 1.1365111476572913) (3){};
        \node[point] at (1.9570096108112232, 2.854887212617063) (4){};
        \node[point] at (1.2262780445020405, 1.7890091396600067) (5){};
        \node[point] at (2.8820985883699883, 0.16481167801237984) (6){};
        \node[point] at (0.3277532282330893, 0.02005833312161931) (7){};
        \node[point] at (1.132350402090403, 2.278057460736687) (8){};
        \node[point] at (2.8720333095463086, 3.348034370701356) (9){};

        \draw[->] (u) -- (0);
        \draw[->] (u) -- (1);
        \draw[->] (u) -- (2);
        \draw[->] (u) -- (3);
        \draw[->] (u) -- (4);
        \draw[->] (u) -- (5);
        \draw[->] (u) -- (6);
        \draw[->] (u) -- (7);
        \draw[->] (u) -- (8);
        \draw[->] (u) -- (9);
    \end{tikzpicture}
    \caption{\textbf{Phase three of $\BROADCAST$.} Cross section of $B_u(r^{\NUMPHASES})$ centered on node $u$ shown. In phase three, each node $u$ with a message sends that message to all its neighbors in $B_u(r^{\NUMPHASES})$.}
    \label{fig:broadcast-phase-three}
    \end{subfigure}
    \caption{Phases two and three of $\BROADCAST$.}
\end{figure}
}

\noindent \textbf{Brief Description.}
The efficient broadcast of a message can be done in three phases.
In phase one, the message is propagated to exactly one node in each square of $H_2$ using $G(1)$.
Phase two is used to propagate the message to exactly one node in each square in $H_{\NUMPHASES}$ in a recursive manner as follows. Each node that received a message at the end of phase one takes ``ownership'' of all square of $H_3$ that lie within its square of $H_2$ and sends the message to exactly one node in each such square of $H_3$. In this manner, each node $u$ with the message in a square in $H_i$, $i \leq 2 < \NUMPHASES$, chooses one node per square of $H_{i+1}$ that lies within $u$'s square of $H_i$ and sends the message to them. Finally, exactly one node in each square of $H_{\NUMPHASES}$ will have the message. 
Phase three is used to propagate the message to every node in $G^*$ by having each node in the proceedings phase transmit the message to all its neighbors in $G^*$. Subsequently, each node that received the message further transmits it to all its neighbors in $G^*$.

\longOnly{%
\noindent \textbf{Detailed Description.}
We now describe in detail the three phases of algorithm~$\BROADCAST$.

In phase one, the message is propagated to exactly one node in each square of $H_2$ using $G(1)$. In this phase, when a node $u$ transmits the message to its neighbor $v$ in $G(1)$, in addition to the message, a subset of directions $N$, $S$, $E$, and $W$ denoting North, South, East, and West respectively are appended to the message. The source node $source$ lies in some square $s \in H_2$. Let $s_1 \in H_2$ ($s_2 \in H_2$) denote the square immediately North (South) of $s$ in $H_2$, assuming such a square exists. $source$ identifies one of its neighbors in $G(1)$ that lies in $s_1$ ($s_2$) and sends the message appended with the directions $N$, $W$, and $E$ ($S$, $W$, and $E$) to that node. Let $s_3 \in H_2$ ($s_4 \in H_2$) denote the square immediately West (East) of $s$ in $H_2$, assuming such squares exist. $source$ identifies one of its neighbors in $G(1)$ that lies in $s_3$ ($s_4$) and sends the message appended with the direction $W$ ($E$) to that node. Now, any node that receives the message, forwards it to its neighbors according to the following rules (note that for the following rules, if no square exists in the given direction, then the message is not forwarded):
\begin{enumerate}
    \item
          If the appended directions were $N$, $W$, and $E$, then the message is forwarded to a node in each square immediately to the North, West, and East of the given node's square with directions ($N$, $W$, and $E$), ($W$), and ($E$) respectively. 
    \item
          If the appended directions were $S$, $W$, and $E$, then the message is forwarded to a node in each square immediately to the South, West, and East of the given node's square with directions ($S$, $W$, and $E$), ($W$), and ($E$) respectively.
    \item
          If the appended direction was $W$, then the message is forwarded to a node in the square immediately to the West of the given node's square with direction $W$.
    \item
          If the appended direction was $E$, then the message is forwarded to a node in the square immediately to the East of the given node's square with direction $E$.
\end{enumerate}
See Figure~\ref{fig:broadcast-phase-one} for an example of phase one being run.

Phase two is used to propagate the message to exactly one node in each square in $H_{\NUMPHASES}$. Phase two consists of $\NUMPHASES-2$ stages.
In stage one, each node $u$ (that lies in square $y$) that received the message in phase one chooses one of $u$'s neighbors $v$  in $G(2)$ (that lies in square $y'$) for each square $y' \in H_3 \bigcap y$, and sends the message appended with the current stage number to $v$.\footnote{We use $H_3 \bigcap y$ to mean all those squares in $H_3$ that lie within the square $y$.}
More generally, in stage $i$, $1 \leq i \leq \NUMPHASES-2$, each node $u$ (that lies in square $y$) that received the message in stage $i-1$ chooses one of $u$'s neighbors in $G(i+1)$, say node $v$ (that lies in square $y'$), for each square $y' \in H_{i+2} \bigcap y$, and sends the message appended with the current stage number to $v$. 
See Figure~\ref{fig:broadcast-phase-two} for an example of phase two being run.

Phase three is used to propagate the message to every node in $G^*$. It consists of each node that received the message at the end of the last stage of phase two sending that message to each of its neighbors in $G(\NUMPHASES)$. Subsequently, each node that received the message subsequently transmits it to each of its neighbors. See Figure~\ref{fig:broadcast-phase-three} for an example of phase three being run.
}

\onlyShort{To save space, the analysis of algorithm~$\BROADCAST$, which yields Theorem~\ref{thm:broadcast-performance}, is left for the full version.}

\begin{theorem}\label{thm:broadcast-performance}
    Algorithm~$\BROADCAST$, when run by all nodes on $G^*$, results in a message being sent from a source node $source$ to all nodes with broadcast completion cost $O(1)$,  and $O(\log n)$ broadcast completion time  and broadcast propagation cost $O(\sqrt{n \log^3 n})$, which are asymptotically optimal up to polylog $n$ factors.
\end{theorem}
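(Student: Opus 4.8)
My plan is to follow the message through the three phases of $\BROADCAST$, first establishing that every forwarding step is realizable in $G^*$ with high probability, and then bounding each of the three metrics as a sum of per-phase contributions. The heart of the correctness argument is \emph{edge existence}: in phase one each representative needs a $G(1)$-neighbor in every adjacent square of $H_2$, and in stage $i$ of phase two the representative of an $H_{i+1}$ square needs a neighbor in every child square of $H_{i+2}$ nested inside it. Two structural facts enable this. First, because $1/r$ is an integer the grids $H_2,\dots,H_{\NUMPHASES}$ nest perfectly, so each child square lies in a unique parent and is claimed by exactly one representative; this keeps exactly one node per square and prevents any blow-up in the number of messages. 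Second, because $r \le 1/4$, the box from which a representative samples has side at least $1/r \ge 4$ times the side of the square it forwards within, so that box contains the entire square regardless of where in it the representative sits, putting every target sub-square inside the box. Since each target sub-square has area $\bigOm{(\log n)/n}$, Lemma~\ref{lem:node-concentration} (via Lemma~\ref{lem:min-max-bounds-nodes-space}) gives it $\bigOm{\log n}$ nodes w.h.p., and since the $\bigTh{\log n}$ random walks from the representative are nearly uniform over its box, a Chernoff bound shows $\bigOm{1}$ of them land in each target sub-square w.h.p. A union bound over the $\bigO{n\log n}$ (square, stage) pairs yields all required edges simultaneously w.h.p. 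Phase three then reaches \emph{every} node: each square of $H_{\NUMPHASES}$ holds a representative and, since such a square has side $\rho$ with $\bigTh{\log n}$ nodes while $G(\NUMPHASES)$ has radius $\rho$, the intersection of the $\rho$-balls of any two nodes in a square has area $\bigTh{\rho^2}$ and hence (again by Lemma~\ref{lem:node-concentration}) contains a common neighbor, so every node is within $2$ hops of its square's representative.

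For the \emph{broadcast propagation cost} I would sum the phases. Phase one uses $\bigO{1}$ edges (one per $H_2$ square), each of length $\bigO{1}$, contributing $\bigO{1}$. In stage $i$ of phase two there is one incoming edge per square of $H_{i+2}$, i.e.\ $\Theta(r^{-2(i+2)})$ edges, each of Euclidean length $\bigO{r^{i+1}}$ (both endpoints lie in one $H_{i+1}$ square), so stage $i$ costs $\bigO{r^{-i-3}}$; this geometric series over $i=1,\dots,\NUMPHASES-2$ is dominated by its last term $\bigO{r^{-\NUMPHASES}} = \bigO{\sqrt{n/\log n}}$. Phase three is exactly flooding over $G(\NUMPHASES)$, so the $\FLOODING$ analysis gives $\bigO{\sqrt{n\log^3 n}}$, which dominates; hence the total is $\bigO{\sqrt{n\log^3 n}}$, optimal up to polylog factors by Theorem~\ref{the:lower-bound-broadcast}.

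For \emph{completion time}: phase one traverses the $\bigO{1}$-diameter grid $H_2$ in $\bigO{1}$ rounds; phase two has $\NUMPHASES-2 = \bigO{\log n}$ stages of $\bigO{1}$ rounds each (each node forwards to $1/r^2 = \bigO{1}$ sub-squares); and phase three finishes in $\bigO{1}$ rounds by the $2$-hop argument above, for a total of $\bigO{\log n}$, optimal up to a $\log\log n$ factor since a graph of maximum degree $\bigO{\log^2 n}$ has hop-diameter $\bigOm{(\log n)/\log\log n}$. For \emph{completion cost} I would bound the actual path the message takes to an arbitrary node $u$: the phase-one segment to $u$'s $H_2$ representative is a monotone staircase across the unit square of total length $\bigO{1}$; the phase-two segment telescopes, contributing $\sum_i \bigO{r^{i+1}} = \bigO{1}$ since $r<1$; and the phase-three segment lies within one square of side $\rho$, contributing $\bigO{\rho} = \littleO{1}$. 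Thus $\Prop_{\BROADCAST}(source,u) = \bigO{1}$ for every $u$, giving completion cost $\bigO{1}$, which is optimal because the farthest node can be $\bigOm{1}$ away in Euclidean distance.

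The main obstacle will be the correctness/edge-existence step rather than the metric bounds: the forwarding is only well defined if each representative can reach one node in every sub-square it is responsible for, and this is exactly where the geometry ($r \le 1/4$, so the sampling box covers the current square, and $1/r$ integral, so the grids nest and squares are claimed uniquely) must be made to interact cleanly with the probabilistic guarantees inherited from $\WEAVER$ (near-uniform sampling together with the node-concentration bounds of Lemmas~\ref{lem:node-concentration} and~\ref{lem:min-max-bounds-nodes-space}), all surviving a union bound over the $\bigO{\log n}$ stages. Once existence is secured, every metric reduces to summing a geometric series plus invoking the $\FLOODING$ bound for phase three.
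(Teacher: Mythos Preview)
Your proposal is correct and follows essentially the same approach as the paper's proof. Both establish correctness phase by phase via the box-coverage argument (using $r\le 1/4$ and $1/r$ integral) together with near-uniform sampling and node concentration, and both bound each metric by summing per-phase contributions---$O(1)$ for phase one, a geometric series $O(r^{-\NUMPHASES})=O(\sqrt{n/\log n})$ for phase two, and the $\FLOODING$ bound $O(\sqrt{n\log^3 n})$ for phase three; your explicit union bound over $(\text{square},\text{stage})$ pairs and the two-hop common-neighbor argument for phase three are, if anything, slightly more careful than the paper's own presentation.
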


\onlyLong{
\noindent \textbf{Analysis.}
We need to show two things: (i) the algorithm is correct and (ii) the algorithm is efficient. First, let us focus on correctness. We must show that the given algorithm correctly achieves broadcast, i.e., at the end of the algorithm, all nodes receive the message. This is shown below.

Let us examine the algorithm phase by phase. Initially, only the source node $source$ has the message. For any square $y \in H_2$, define the column of $y$ as the set of all squares in $H_2$ that lie above and below $y$ within the same interval on the x-axis and define the row of $s$ as the set of all squares in $H_2$ that lie to the left and right of $y$ within the same interval on the y-axis.

$source$ lies in some square in $H_2$, say $s$. It is easy to see from the algorithm description that phase one causes the message to be passed recursively to each square in the column of $s$. Now, each square $s'$ that belongs to this column passes the message to each square in the row of $s'$. It is easy to see that no square receives the message twice.

To see that for each square in $H_2$, at least one node in that square receives the message, notice the following three observations. First, since $r/2 \geq 2r^2$, for each node $u$ that lies in some square $y \in H_2$, the area of the squares that lie to $y$'s North, South, East, and West are entirely encased within $B_u(r)$. Furthermore, each such square takes up $r^4/r^2 = r^2$ fraction of $B_u(r)$. From Observation~\ref{obs:nodes-nbrs-uniformly-distributed}, we see that since $\Omega(\log n)$ neighbors of $u$ are taken uniformly at random from $B_u(r)$, there is at least one neighbor of $u$ in $G(1)$ located in each of the four squares. Thus, we arrive at the following lemma.

\begin{lemma}\label{lem:broadcast-phase-one-correctness}
    Phase one of algorithm~$\BROADCAST$ results in exactly one node in each square of $H_2$ receiving the message.
\end{lemma}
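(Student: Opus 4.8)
The plan is to split Lemma~\ref{lem:broadcast-phase-one-correctness} into two independent claims and verify each: (a) a \emph{combinatorial} claim that the direction-tagged propagation pattern delivers the message to each square of $H_2$ exactly once, and (b) a \emph{geometric/probabilistic} claim that every hop the pattern demands is actually realizable, i.e., the forwarding node always has a neighbor in $G(1)$ lying inside the targeted adjacent square. Recall that $H_2$ partitions the unit square into $1/r^4$ cells of side length $r^2$, and that in $G(1)$ each node connects to $\Omega(\log n)$ nodes drawn (nearly) uniformly at random from its box $B_u(r)$.

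For (a), I would argue via a column-then-row decomposition induced by the tags. The tokens carrying a vertical tag ($N,W,E$ or $S,W,E$) are the only ones that ever move vertically, and by the forwarding rules they march monotonically up (resp.\ down) the single column of $H_2$ containing the source's square $s$, reaching each square of that column exactly once (with $s$ itself as the origin). Each column square, including $s$, then emits purely horizontal $W$- and $E$-tokens, which by rules 3 and 4 sweep monotonically left and right along that square's row, reaching each square of the row exactly once. Distinct column squares have disjoint rows, and the only square shared by $s'$'s row and the source's column is $s'$ itself (which does not forward to itself), so there are no collisions. Since every cell of $H_2$ lies in exactly one row whose unique ``seed'' is the column square sharing its vertical strip, each cell is reached by exactly one token; boundary cells (where no square exists in the indicated direction, so no token is sent) are handled automatically. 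This establishes coverage and the absence of duplicates, so at most one delivery is made per square.

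For (b), fix a node $u$ in a square $y \in H_2$ and a cardinally adjacent in-grid square $y'$. Because $r \le 1/4$ we have $2r^2 \le r/2$, so every point of $y'$ has horizontal offset at most $r^2$ and vertical offset at most $2r^2$ from $u$, i.e.\ $L_\infty$-distance at most $2r^2 \le r/2$; hence $y' \subseteq B_u(r)$ (and this inclusion survives any clipping of $B_u(r)$ at the unit-square boundary, since $y'$ already lies in the grid). As $\mathrm{area}(y') = r^4$ and $\mathrm{area}(B_u(r)) \le r^2$, the cell $y'$ occupies at least an $r^2 = \Theta(1)$ fraction of $B_u(r)$. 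Since, by Observation~\ref{obs:nodes-nbrs-uniformly-distributed}, $u$'s $\Omega(\log n)$ neighbors in $G(1)$ are (nearly) uniform in $B_u(r)$, the expected number landing in $y'$ is $\Omega(\log n)$, and a Chernoff bound (exactly as in Lemma~\ref{lem:each-phase-at-least-enough-neighbors}) shows this count is at least $1$ with probability $1 - 1/n^{c}$ for a suitable constant $c$. A union bound over all $n$ nodes and the at most four cardinal directions then guarantees, with high probability, that every node has a $G(1)$-neighbor in each in-grid adjacent square, so every hop the pattern requires can be executed.

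Combining (a) and (b): with high probability the pattern is realizable and delivers the message to exactly one node per square of $H_2$, which proves the lemma. I expect the main obstacle to be (b) rather than (a): the crux is pinning down the containment $y' \subseteq B_u(r)$, which is precisely what forces the constraint $r \le 1/4$ stated at the top of the section, and then arguing the concentration uniformly over all nodes. The combinatorial part (a) is essentially careful bookkeeping once the forwarding rules are read with attention to the grid boundary.
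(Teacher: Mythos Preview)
Your proposal is correct and follows essentially the same approach as the paper: the paper likewise argues the column-then-row coverage pattern from the direction tags to get exactly-once delivery, and then uses the containment $y' \subseteq B_u(r)$ via the inequality $r/2 \ge 2r^2$ (i.e., $r \le 1/4$) together with the $r^2$ area fraction and Observation~\ref{obs:nodes-nbrs-uniformly-distributed} to guarantee a $G(1)$-neighbor in each adjacent square. Your write-up is slightly more explicit about the $L_\infty$ computation and the Chernoff/union-bound step, but the structure and the key ideas are identical.
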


Now, in phase two of the algorithm, each node in a square $y \in H_i$, $2 \leq i \leq \NUMPHASES-2$ forwards the message to a node in each square in $H_{i+1}$ that lies within $y$. Note that since $1/r$ is an integer, each square in $H_{i+1}$ lies completely within some square in $H_i$. Furthermore, since $r/2 \geq r^2$, it is easy to see that if node $u$ lies within a square $y \in H_i$, then all squares in $H_{i+1}$ that lie within $s$ are within $B_u(r^i)$. From Observation~\ref{obs:nodes-nbrs-uniformly-distributed}, we see that since $\Omega(\log n)$ neighbors of $u$ are taken uniformly at random from $B_u(r^i)$, for each of the $1/r^2$ squares of $H_{i+1}$ that lie within $y$, there is at least one neighbor of $u$ in $G(i)$ located in that square. Thus, we have the following lemma.

\begin{lemma}\label{lem:broadcast-phase-two-correctness}
    Phase two of algorithm~$\BROADCAST$ results in exactly one node in each square of $H_{\NUMPHASES}$ receiving the message.
\end{lemma}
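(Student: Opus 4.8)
The plan is to prove Lemma~\ref{lem:broadcast-phase-two-correctness} by induction on the $\NUMPHASES-2$ stages of phase two, maintaining the invariant that, with high probability, after stage $i$ exactly one node in each square of $H_{i+2}$ holds the message. The base case is precisely the state entering phase two, which by Lemma~\ref{lem:broadcast-phase-one-correctness} has exactly one message-holder per square of $H_2$. It then suffices to show that one refinement stage, which takes the (unique) holder in each square $y\in H_{i+1}$ and seeds the $1/r^2$ subsquares of $H_{i+2}$ contained in $y$, preserves the ``exactly one per square'' property.

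For the inductive step I would split ``exactly one'' into a uniqueness part and an existence part. Uniqueness is purely combinatorial: since $1/r$ is an integer the grids nest, so every square of $H_{i+2}$ lies inside a \emph{unique} square $y\in H_{i+1}$; by the induction hypothesis $y$ holds exactly one node $u$, and $u$ designates exactly one neighbor per child square it owns. Hence no child square is fed by two different parents and no parent seeds two nodes in one child square, giving at most one seeded node per square. Existence is the probabilistic core and rests on one geometric fact and one concentration fact. Geometrically, because $r\le 1/4$ (so $r/2\ge r^2$), wherever $u$ sits inside $y$ the entire square $y$, and therefore each child subsquare tiling it, lies inside the $\WEAVER$ sampling box $B_u(\cdot)$ at the relevant scale, and each such subsquare occupies a $\bigTh{1}$ fraction of that box's area. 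By Lemma~\ref{lem:node-concentration} and Lemma~\ref{lem:min-max-bounds-nodes-space} the box carries $\bigTh{r^{2j}n\log n}$ nodes and each child subsquare carries a $\bigTh{1}$ fraction of them, while by Corollary~\ref{cor:lowerboundnumneighbors} node $u$ has $\bigOm{\log n}$ neighbors that are (near-)uniform samples drawn from this box. Thus a fixed child subsquare receives at least one such neighbor with probability at least $1-\bigl(1-\bigTh{1}\bigr)^{\bigOm{\log n}}=1-n^{-\Omega(1)}$, and $u$ can forward to exactly one of them.

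To finish, I would union-bound the failure probability over the $\bigO{1}$ children of each parent, the $\bigO{1/r^{2i}}=\operatorname{poly}(n)$ parents at a given level, and the $\bigO{\log n}$ levels; since the per-subsquare failure can be driven to $n^{-c}$ for any desired constant $c$ by enlarging the constant hidden in $\bigOm{\log n}$, the whole phase succeeds with high probability, and after stage $\NUMPHASES-2$ the invariant yields exactly one holder per square of $H_{\NUMPHASES}$. The main obstacle is the geometric/near-uniformity step: I must honestly verify that in the worst case (the parent sitting near a corner of its square) every child subsquare is still \emph{fully} inside the parent's box \emph{and} is captured as a genuine constant fraction of the near-uniform sampling distribution, carefully tracking both the $\pm 1/N^3$ sampling error of the random walks and the $\bigTh{1}$-factor concentration of the node counts so that ``constant fraction'' holds at every scale $i$, and so that the resulting $n^{-c}$ hitting bound is strong enough to survive the union bound over all $\operatorname{poly}(n)$ (square, stage) pairs.
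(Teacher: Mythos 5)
Your proposal matches the paper's own argument essentially step for step: the nesting of the grids (since $1/r$ is an integer) gives the uniqueness half, the condition $r/2 \geq r^2$ gives containment of the child squares in the holder's sampling box, and the fact that each holder has $\Omega(\log n)$ neighbors chosen uniformly at random from that box (the paper's Observation~\ref{obs:nodes-nbrs-uniformly-distributed}, which you reconstruct from Corollary~\ref{cor:lowerboundnumneighbors} and Lemmas~\ref{lem:node-concentration} and~\ref{lem:min-max-bounds-nodes-space}) yields at least one neighbor per child square with high probability, applied inductively across the $\NUMPHASES-2$ stages. Your extra explicitness about the union bound over all (square, stage) pairs and the $\pm 1/n^3$ random-walk sampling error, and your flagged worst case of a holder sitting near its square's boundary, only spell out steps the paper's prose treats as immediate.
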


Since $G(\NUMPHASES)$ is a random geometric graph such that any two nodes within distance $r^\NUMPHASES$ are neighbors, it is easy to see that once each square in $H_{\NUMPHASES}$ has a node that received the message, by having all such nodes transmit the message to their neighbors and neighbors of neighbors, all nodes in $G^*$ have received the message.

\begin{lemma}\label{lem:broadcast-phase-three-correctness}
    Phase three of algorithm~$\BROADCAST$ results in all nodes in $G^*$ receiving the message.
\end{lemma}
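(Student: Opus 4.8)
The plan is to treat this as a pure reachability claim: phase three is ordinary flooding over $G(\NUMPHASES)$, so all nodes receive the message provided (a) the set of nodes already holding the message is non-empty and (b) the graph being flooded is connected. I would assemble the proof from three ingredients that are already available in the excerpt.

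First, I would recall the starting configuration of phase three. By Lemma~\ref{lem:broadcast-phase-two-correctness}, at the end of phase two exactly one node in each square of $H_{\NUMPHASES}$ holds the message; in particular the set of informed nodes is non-empty (non-emptiness is all that correctness requires here---the stronger one-per-square guarantee is what later yields the completion-time bound). These informed nodes are precisely the sources from which phase three begins flooding.

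Second, I would identify the graph over which flooding occurs and establish its connectivity. Phase three transmits only along edges of $G(\NUMPHASES)$, and by Theorem~\ref{thm:graph-conversion-alg-works}\ref{item:graphconv-gnr} these edges realize a random geometric graph $G(n,\rho)$ with $\rho = \Theta(\sqrt{(\log n)/n})$ inside $G^*$. Since this value of $\rho$ meets the connectivity threshold $\bigTh{\sqrt{\sfrac{\log n}{n}}}$ recorded just after Definition~\ref{def:rgg}, the graph $G(n,\rho)$, and hence the flooding subgraph $G(\NUMPHASES)$, is connected with high probability. Third, I would close the argument with the standard reachability property of flooding: in a connected graph, flooding initiated from any non-empty source set reaches every vertex. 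This follows by a short induction on BFS distance from the source set---a node at distance $k$ receives the message once some distance-$(k-1)$ neighbor has forwarded it, and connectivity guarantees every vertex lies at finite distance from some source. Combining the three ingredients, every node of $G^*$ receives the message.

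The only genuinely non-trivial step is establishing connectivity of $G(\NUMPHASES)$, and the subtlety worth flagging is that one \emph{cannot} argue locally square-by-square: since a square of $H_{\NUMPHASES}$ has side length $r^{\NUMPHASES} = \Theta(\sqrt{(\log n)/n}) = \rho$, its diagonal is $\sqrt{2}\,\rho > \rho$, so a node need not be a $G(n,\rho)$-neighbor of the seed sitting in its own square. The argument must therefore route through the \emph{global} connectivity of the random geometric graph rather than any per-square adjacency; once that connectivity is invoked via the threshold fact, the remaining reachability reasoning is routine.
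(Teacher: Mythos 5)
There is a genuine mismatch between the procedure you analyzed and the one the paper actually runs. Phase three of $\BROADCAST$ is \emph{not} unbounded flooding over $G(\NUMPHASES)$: the per-square seeds send the message to their $G(\NUMPHASES)$-neighbors, and those neighbors forward it once more, so the message travels at most two hops from each seed. The paper states this explicitly in the proof of Lemma~\ref{lem:broadcast-time-analysis} (``each node with the message sending that message at most two hops away''), and it is precisely what yields the $O(1)$ broadcast completion cost and $O(\NUMPHASES)$ completion time; the unbounded flooding you assume would take $\tilde{\Theta}(\sqrt{n})$ rounds, which is exactly the drawback of $\FLOODING$ that $\BROADCAST$ is designed to avoid. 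Consequently, your reachability argument via the global connectivity threshold of the random geometric graph proves correctness of a different algorithm, and your parenthetical claim that ``non-emptiness is all that correctness requires here'' is false for the actual phase three: the one-seed-per-square guarantee of Lemma~\ref{lem:broadcast-phase-two-correctness} is essential for \emph{correctness}, not merely for the time bound, since a two-hop flood from a sparse seed set would not cover the graph.

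Moreover, the subtlety you flagged --- that the diagonal $\sqrt{2}\,r^{\NUMPHASES} > r^{\NUMPHASES}$, so a node need not be a direct neighbor of its own square's seed --- is correct, but your conclusion that ``one cannot argue locally square-by-square'' is exactly backwards: this diagonal issue is why the paper uses neighbors \emph{and neighbors of neighbors}, and the local argument then goes through. Concretely, for a seed $s$ and any node $v$ in its square of $H_{\NUMPHASES}$, every node $w$ within distance $\bigl(1-\tfrac{\sqrt{2}}{2}\bigr)r^{\NUMPHASES}$ of the midpoint of the segment $\overline{sv}$ satisfies $d(s,w) \leq r^{\NUMPHASES}$ and $d(w,v) \leq r^{\NUMPHASES}$; this region has area $\Theta(r^{2\NUMPHASES}) = \Theta(\log n / n)$ and hence contains at least one node with high probability by Lemma~\ref{lem:node-concentration}, so every node lies within two $G(\NUMPHASES)$-hops of its square's seed. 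That local, two-hop argument is the content of the paper's (one-sentence) justification of this lemma, and it never needs the RGG connectivity threshold at all.
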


Now, we turn to the efficiency of the algorithm. Let us first analyze the number of hops needed by the message to reach any node, i.e., the broadcast completion time. The proof of the following lemma can be seen as the culmination of Lemmas~\ref{lem:broadcast-phase-one-correctness}, \ref{lem:broadcast-phase-two-correctness}, and \ref{lem:broadcast-phase-three-correctness}, and from simple geometric arguments.

\begin{lemma}\label{lem:broadcast-time-analysis}
    The algorithm ensures that all nodes receive the message in $O(\NUMPHASES)$ broadcast completion time and the broadcast completion cost is $O(1)$.
\end{lemma}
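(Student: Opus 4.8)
The plan is to analyze the three phases of $\BROADCAST$ separately, building on the correctness Lemmas~\ref{lem:broadcast-phase-one-correctness}--\ref{lem:broadcast-phase-three-correctness}, which already establish \emph{which} nodes hold the message at the end of each phase (one per square of $H_2$, then one per square of $H_{\NUMPHASES}$, then all nodes). Given those structural facts, the two quantities to control are the number of rounds (completion time) and the propagation cost accumulated along the path the message actually travels from $source$ to an arbitrary destination (completion cost). I would bound these two independently, summing a per-phase contribution for each.

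For the completion time, I would first observe that phase one operates entirely within $H_2$, which is a $1/r^2 \times 1/r^2$ grid; since $r$ is a fixed constant this grid has $O(1)$ squares, so the column-then-row propagation reaches every square of $H_2$ in $O(1/r^2) = O(1)$ rounds. Phase two consists of $\NUMPHASES - 2$ stages, and each stage advances the message exactly one hop (from a seed in a square of $H_{i+1}$ to seeds in the $H_{i+2}$ sub-squares it owns), contributing $O(\NUMPHASES)$ rounds. For phase three I would argue that, since one node per square of $H_{\NUMPHASES}$ (side length $\rho = r^{\NUMPHASES}$) already holds the message, every remaining node lies in a square containing a seed; using the node-density bound (Lemma~\ref{lem:min-max-bounds-nodes-space}, giving $\Theta(\log n)$ nodes per such square) together with the random-geometric-graph connection radius $\rho$, any node is reachable from its own square's seed in $O(1)$ flooding hops, so phase three finishes in $O(1)$ rounds. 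Summing gives $O(1) + O(\NUMPHASES) + O(1) = O(\NUMPHASES)$.

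For the completion cost I would bound the sum of Euclidean edge lengths along the source-to-destination path. In phase one each hop connects nodes in adjacent squares of $H_2$, so every such edge has length $O(r^2)$, and the longest phase-one path uses $O(1/r^2)$ such edges, contributing $O(1)$ (equivalently, the path is an essentially monotone column-then-row traversal of the unit square, of total length $O(1)$). In phase two the single edge used in stage $i$ lies inside a square of $H_{i+1}$ of side $r^{i+1}$, hence has length $O(r^{i+1})$; since any root-to-leaf path uses exactly one such edge per stage, its total length is the geometric sum $\sum_{i=1}^{\NUMPHASES-2} O(r^{i+1}) = O(r^2/(1-r)) = O(1)$, using that $r < 1$ is constant. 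Finally the $O(1)$ phase-three hops each have length $O(\rho) = O(\sqrt{\log n / n}) = o(1)$. Adding the three contributions yields completion cost $O(1)$.

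The main obstacle I expect is phase three: turning the fact that $G(\NUMPHASES)$ has large \emph{global} diameter into the statement that, once seeds are placed one per $\rho \times \rho$ square, flooding terminates in $O(1)$ rounds and adds only $o(1)$ cost. This requires a purely local-coverage argument---subdividing each square into $O(1)$ cells of side $\Theta(\rho)$, invoking the density bound to guarantee each cell is nonempty, and using radius-$\rho$ connectivity to walk from any node to its seed in a constant number of hops---rather than any reasoning about the graph's overall diameter. The remaining care is bookkeeping: verifying that the phase-two geometric sum converges to a genuine constant, and that the accumulated per-hop displacement (``jitter'') within squares in phase one does not inflate the phase-one path beyond $O(1)$.
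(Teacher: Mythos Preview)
Your proposal is correct and follows essentially the same phase-by-phase decomposition as the paper: $O(1)$ rounds and $O(1)$ cost for phase one, $\NUMPHASES-2$ single-hop stages with a convergent geometric cost sum $\sum_i O(r^{i+1})$ for phase two, and $O(1)$ rounds with $o(1)$ cost for phase three. The only difference is that you flag phase three as the main obstacle and plan a careful local-coverage argument, whereas the paper simply observes that the algorithm performs exactly two flooding rounds in $G(\NUMPHASES)$ and asserts (via Lemma~\ref{lem:broadcast-phase-three-correctness}) that this reaches every node, so no elaborate cell-subdivision is needed.
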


 \begin{proof}
     Phase one of the algorithm requires the message to be transmitted by a distance at most $1$ in the vertical direction and at most $1$ in the horizontal direction. When the message is passed vertically, for every two squares that the message moves through, a distance of at least $r^2$ is covered in the vertical direction. Thus, after $O(1/r^2) = O(1)$ hops, the message reaches one node per square in all squares belonging to the vertical that the source's square belongs to. A similar analysis shows that for each square $y \in H_2$, after $O(1)$ hops, the message reaches all squares in $y$'s row. Thus, after a total of $O(1)$ hops, the message is sent to one node in each square of $H_2$. Since each edge in $G(1)$ is used at most once in phase one, there is no congestion.

     Phase two consists of $\NUMPHASES - 2$ stages, where each stage consists of sending the message at most $1$ hop away. Once again, since each edge in $G(i+1)$ is used at most once in each stage, there is no congestion.

     Phase three consists of each node with the message sending that message at most two hops away. Again, since each edge in $G(\NUMPHASES)$ is used at most twice, there is no congestion.

     Thus, the message reaches all nodes in $O(\NUMPHASES)$ hops, i.e., the broadcast completion time is $O(\NUMPHASES)$.

     From the above analysis, we see that the total propagation cost along any given path from the source to any given destination is upper bounded in each phase as follows. In phase one, the message moves a maximum distance of $O(1)$ (maximum vertical distance and maximum horizontal distance). In phase two, the message moves through distance $O(\sum_{i=1}^{\NUMPHASES-2} r^{i+1}) = O(1)$. In phase three, the message moves through distance $O(r^{\NUMPHASES}) = O(1)$. Thus, the broadcast completion cost is $O(1)$.
 \end{proof}

Let us now analyze the broadcast propagation cost. We show that this algorithm, when run on the final obtained graph $G^*$, has a broadcast propagation cost that is asymptotically optimal up to polylog $n$ factors, i.e., the broadcast propagation cost of algorithm~$\BROADCAST$ on $G^*$ is $\Tilde{O}(\sqrt{n})$.

First we bound the broadcast propagation cost on $G(\NUMPHASES)$, as phase three of the algorithm involves a constant number of instances of broadcast on $G(\NUMPHASES)$.

\begin{lemma}\label{lem:broadcast-prop-cost-randomgeometricgraph}
    The broadcast propagation cost of algorithm  $\BROADCAST$ run on $G(\NUMPHASES)$ is $O(\sqrt{n \log^3 n})$.
\end{lemma}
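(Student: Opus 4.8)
The plan is to recognize that phase three of $\BROADCAST$ is nothing more than flooding over the random geometric graph $G(\NUMPHASES)$, so that its broadcast propagation cost is bounded, up to a constant factor, by the total edge weight of $G(\NUMPHASES)$, which I then estimate directly. This is exactly the estimate already carried out for $\FLOODING$, so the only real content is to verify that each edge contributes its weight only a bounded number of times and to plug in the degree and edge-length bounds for $G(\NUMPHASES)$.

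First I would observe that in phase three each node, upon receiving the message for the first time, forwards it to all of its neighbors in $G(\NUMPHASES)$; consequently every edge of $G(\NUMPHASES)$ is traversed at most twice (once from each endpoint). Hence, writing $W = \sum_{(x,y)\in E(G(\NUMPHASES))} d(x,y)$ for the total edge weight, the broadcast propagation cost of phase three is at most $2W$. Next I would bound $W$. By Theorem~\ref{thm:graph-conversion-alg-works} (via the final-phase analysis in Lemma~\ref{lem:min-max-bounds-nodes-space}), with high probability every node of $G(\NUMPHASES)$ has degree $\bigO{\log n}$, since $B_u(r^{\NUMPHASES})$ contains $\bigTh{r^{2\NUMPHASES} n\log n} = \bigTh{\log n}$ nodes. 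Moreover, every edge of $G(\NUMPHASES)$ joins two nodes lying in a common box of side-length $r^{\NUMPHASES}$, so its Euclidean length is at most the box diagonal $\sqrt{2}\,r^{\NUMPHASES} = \bigO{r^{\NUMPHASES}}$; and the construction requirement $r^{2\NUMPHASES}n = \bigTh{\log n}$ gives $r^{\NUMPHASES} = \bigTh{\sqrt{\log n/n}}$. Combining these,
\[
W \;\le\; \tfrac{1}{2}\sum_{u\in V}\deg(u)\cdot\bigO{r^{\NUMPHASES}} \;=\; \bigO{n\log n}\cdot\bigO{\sqrt{\tfrac{\log n}{n}}} \;=\; \bigO{\sqrt{n\log^3 n}},
\]
so the broadcast propagation cost is at most $2W = \bigO{\sqrt{n\log^3 n}}$, as claimed.

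The computation is routine, and I do not expect a genuine obstacle here. The only points requiring care are that the degree bound $\bigO{\log n}$ and the edge-length bound $\bigO{r^{\NUMPHASES}}$ hold simultaneously for all nodes and edges with high probability (which is already guaranteed by the union bound implicit in Theorem~\ref{thm:graph-conversion-alg-works}), and that flooding charges each edge only a constant number of times, so the propagation cost cannot exceed a constant multiple of $W$. Once phase three is identified with flooding on $G(\NUMPHASES)$, the bound reduces to the same edge-weight sum already established for $\FLOODING$.
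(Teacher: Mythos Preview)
Your proposal is correct and follows essentially the same approach as the paper: both bound the total edge weight of $G(\NUMPHASES)$ by multiplying the number of nodes $n$, the per-node degree bound $O(\log n)$ (from Lemma~\ref{lem:min-max-bounds-nodes-space}), and the maximum edge length $O(r^{\NUMPHASES}) = O(\sqrt{\log n/n})$, arriving at $O(\sqrt{n\log^3 n})$. Your write-up is slightly more careful in noting that each edge is used at most twice, but the argument is the same.
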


\begin{proof}
    From Lemma~\ref{lem:min-max-bounds-nodes-space}, we see that each of the $n$ nodes in $G(\NUMPHASES)$ has $O(\log n)$ neighbors. Since the length of each edge in $G(\NUMPHASES)$ is at most $O(r^\NUMPHASES)$, we see that the sum of edge weights in the graph $G(\NUMPHASES)$ is   $O(n \cdot \log n \cdot r^\NUMPHASES) = O(n \cdot \log n \cdot \sqrt{\log n/n} ) = O(\sqrt{n \log^3 n})$.
\end{proof}

We now bound the broadcast propagation cost of algorithm~$\BROADCAST$ on $G^*$.

\begin{lemma}\label{lem:broadcast-prop-cost-finalgraph}
    The broadcast propagation cost of algorithm~$\BROADCAST$ on $G^*$ is $O(\sqrt{n \log^3 n})$.
\end{lemma}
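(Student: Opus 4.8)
The plan is to bound the total broadcast propagation cost of $\BROADCAST$ on $G^*$ by summing the propagation cost contributed by each of its three phases and showing that the dominant contribution comes from phase three, whose cost is already bounded in Lemma~\ref{lem:broadcast-prop-cost-randomgeometricgraph}. Since every edge used by the algorithm belongs to some $G(i)$, it suffices to account, phase by phase, for the number of edges used and the Euclidean length of each.

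First I would dispose of phase one. Because $r$ is a fixed constant, the partition $H_2$ consists of $1/r^4 = O(1)$ squares, and the message travels along $O(1)$ edges of $G(1)$, each connecting two nodes lying in a common box of side-length $r$ and hence of length $O(r) = O(1)$. Thus phase one contributes only $O(1)$ to the propagation cost.

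The main computation is for phase two, which I would analyze stage by stage. By Lemma~\ref{lem:broadcast-phase-two-correctness}, at the start of stage $i$ (for $1 \le i \le \NUMPHASES - 2$) exactly one node sits in each of the $1/r^{2(i+1)}$ squares of $H_{i+1}$, and each such node forwards the message to one node in each of the $1/r^2$ squares of $H_{i+2}$ contained in its square; hence stage $i$ uses $r^{-(2i+2)} \cdot r^{-2} = r^{-(2i+4)}$ edges of $G(i+1)$. Each such edge joins two nodes lying in a common box of side-length $r^{i+1}$, so its length is $O(r^{i+1})$, and the propagation cost of stage $i$ is $O\!\left(r^{-(2i+4)} \cdot r^{i+1}\right) = O\!\left(r^{-(i+3)}\right)$. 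Since $0 < r < 1$, these per-stage costs form a geometric series that is increasing in $i$, so the sum over all $\NUMPHASES - 2$ stages is dominated (up to a constant factor) by the final stage $i = \NUMPHASES - 2$, giving total cost $O(r^{-\NUMPHASES})$. Invoking the protocol invariant $r^{2\NUMPHASES}n = \bigTh{\log n}$, i.e.\ $r^{-2\NUMPHASES} = \bigTh{n/\log n}$, this equals $O(\sqrt{n/\log n})$.

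Finally, phase three is a constant number of flooding rounds over $G(\NUMPHASES)$, so its propagation cost is $O(\sqrt{n \log^3 n})$ by Lemma~\ref{lem:broadcast-prop-cost-randomgeometricgraph}. Adding the three contributions, $O(1) + O(\sqrt{n/\log n}) + O(\sqrt{n \log^3 n}) = O(\sqrt{n \log^3 n})$, since $\sqrt{n/\log n} \le \sqrt{n \log^3 n}$ and so phase three dominates. The main obstacle I anticipate is the bookkeeping in phase two: correctly counting the number of edges used per stage (one per sub-square, yielding the factor $r^{-(2i+4)}$) together with the per-edge length $O(r^{i+1})$, and then verifying both that the resulting geometric series is controlled by its last term and that this last term, $O(\sqrt{n/\log n})$, lies asymptotically below the phase-three cost.
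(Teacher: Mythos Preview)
Your proposal is correct and follows essentially the same approach as the paper: a phase-by-phase accounting in which phase one contributes $O(1)$ (the paper writes $O(1/r^3)$, equivalent since $r$ is constant), phase two contributes a geometric sum of per-stage costs $O(r^{-(i+3)})$ dominated by the last stage and equal to $O(\sqrt{n/\log n})$, and phase three contributes $O(\sqrt{n\log^3 n})$ via Lemma~\ref{lem:broadcast-prop-cost-randomgeometricgraph}, with the latter dominating.
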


\begin{proof}
    From Lemma~\ref{lem:broadcast-phase-one-correctness}, at the end of phase one, there are $O(1/r^4)$ nodes that contain the message. Each of these nodes passed the message to at most $4$ neighbors, and the length of each edge used is at most $O(r)$, since $G(1)$ is used. Thus, the sum of edge weights used in the first phase is $O(1/r^4 \cdot 4 \cdot r) = O(1/r^3)$.

    From Lemma~\ref{lem:broadcast-phase-two-correctness}, at the beginning of each stage $i$, $1 \leq i \leq \NUMPHASES - 2$, of phase two, there are $O(1/r^{(2i+2)})$ nodes that contain the message. Each of these nodes sends the message to $1/r^2$ neighbors, and the length of each edge used is at most $O(r^{(i+1)})$. Thus, the sum of edge weights in the second phase is $O(\sum_{i=1}^{\NUMPHASES -2} 1/r^{(2i+2)} \cdot 1/r^2 \cdot r^{(i+1)}) = O(1/r^{\NUMPHASES + 2}) = O(\sqrt{n/\log n})$.

    In phase three, we use $G(\NUMPHASES)$ to perform broadcast twice. From Lemma~\ref{lem:broadcast-prop-cost-randomgeometricgraph}, we see that sum of edge weights in the third phase while accounting for multiplicities of edges is $O(\sqrt{n \log^3 n})$.

    Thus, the broadcast propagation cost is $O(1/r^3 + \sqrt{n/\log n} + \sqrt{n \log^3 n}) = O(\sqrt{n \log^3 n})$.
\end{proof}
}

\subsection{An Efficient Routing Protocol}
\label{subsec:reconfig-analysis-greedy-routing}
In this section, we present an efficient routing algorithm, Algorithm~$\GREEDYROUTING$ (pseudocode in Algorithm~\ref{alg:greedy-routing}),  which allows us to
route a packet from any source $S$ to any destination $F$ in $\bigO{\log n}$ hops using $G^*$ such
that the path taken has propagation cost $\bigO{d(S,F)}$, where $d(S,F)$ is the Euclidean distance between $S$ and $F$.
An important property of this routing protocol is that it is localized and greedy: any node needs only local
information (of itself and its neighbors) to route a given message to its final destination. 
\longOnly{A brief description of the algorithm is given below. } 

\begin{algorithm}[h]
    \footnotesize
    \caption{Greedy Routing --- forwarding a message $m$ from node $S$ to node $F$.}
    \label{alg:greedy-routing}
    \begin{algorithmic}[1]
        \State $\texttt{current} \gets S$
        \State $\texttt{dist} \gets \infty$
        \While{$\texttt{current} \neq F$}
            \State Send a message to every \texttt{neighbor} of $\texttt{current}$ requesting $d(\texttt{neighbor}, F)$
            \ForEach{\texttt{neighbor} of \texttt{current}}
                \State $\texttt{new-dist} \gets d(\texttt{neighbor}, F)$
                \If{$\texttt{new-dist} < \texttt{dist}$}
                    \State $\texttt{dist} \gets \texttt{new-dist}$
                    \State $\texttt{closest-neighbor} \gets \texttt{neighbor}$
                \EndIf
            \EndForEach
            \State forward message to \texttt{closest-neighbor} (which then becomes $\texttt{current}$)
        \EndWhile
	\end{algorithmic}
\end{algorithm}

\shortOnly{Due to a lack of space, we give only the theorem statement below. A full analysis may be found in the full version.}

\longOnly{%
\noindent \textbf{Brief Description.} Let us assume that we are given a source node $S$ containing a message that must be sent to a destination node $F$. We append the ID of $F$ to the message and call the combined message $m$. Node $S$ looks at its neighbors in $G^*$ and chooses the neighbor that is closest, distance-wise, to $F$, call it node $u_1$, and sends $m$ to $u_1$. This process is repeated by $u_1$, resulting in $m$ being sent to another closer node, call it $u_2$. This process is repeated until $m$ is finally delivered to $F$.

\noindent \textbf{Analysis.}
First, we make a simple observation.
\begin{observation}\label{obs:each-node-enough-nbrs}
    For all $i$, $1 \leq i \leq \NUMPHASES$, each node $u$ chooses $\Omega( \log n)$  edges in $G_u(i)$ as a result of random walks.
\end{observation}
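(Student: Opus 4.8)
The plan is to treat this observation as a direct corollary of the random-walk success bounds already established for the $\WEAVER$ protocol, splitting into the two regimes that the protocol itself distinguishes: the first $\NUMPHASES - 1$ phases, in which each node initiates $\Theta(\log n)$ walks, and the final phase $\NUMPHASES$, in which each node initiates $\Theta(\log^2 n)$ walks.

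For a phase $i$ with $1 \leq i \leq \NUMPHASES - 1$, I would invoke Lemma~\ref{lem:each-phase-at-least-enough-neighbors}, which guarantees that, for a fixed node $u$, at least $72\log n$ of its walks terminate inside $B_u(r^i)$ with probability at least $1 - \sfrac{1}{(n^2\NUMPHASES)}$. Since the protocol has $u$ sample $b\log n$ of these successful walks (without replacement) to form its outgoing edges in $G_u(i)$, and since $b$ is a fixed constant that we may take to satisfy $b \leq 72$, the node always has enough successful walks to draw from and hence chooses $b\log n = \Omega(\log n)$ edges. Applying a union bound over all $n$ nodes and all $\NUMPHASES = \Theta(\log n)$ phases --- precisely the argument already packaged in Corollary~\ref{cor:lowerboundnumneighbors} --- this holds simultaneously for every node in every non-final phase with high probability.

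For the final phase $i = \NUMPHASES$, the relevant edges are those formed when $u$ connects to the nodes that its $\Theta(\log^2 n)$ walks discover inside $B_u(r^{\NUMPHASES})$. Here I would combine Lemma~\ref{lem:min-max-bounds-nodes-space}, which gives that the number of nodes in $B_u(r^{\NUMPHASES})$ is $\Theta(r^{2\NUMPHASES}n\log n) = \Omega(\log n)$ with high probability, with Lemma~\ref{lem:all-nodes-connected-final-phase}, which states that $u$ connects to all nodes in $G_u(\NUMPHASES)$ with high probability. Together these yield that $u$ chooses $\Omega(\log n)$ edges in the final phase as well.

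I do not anticipate a genuine obstacle, since the observation is essentially a repackaging of the success-count lemmas; the only real bookkeeping is the final union bound, which is routine because there are only $n \cdot \NUMPHASES = O(n\log n)$ events, each failing with probability at most $\operatorname{poly}(1/n)$. The one point requiring care is aligning the constant $b$ used in the edge-sampling step with the $72\log n$ lower bound of Lemma~\ref{lem:each-phase-at-least-enough-neighbors}, so that the sampling step never runs short of successful walks and the bound $\Omega(\log n)$ is attained in every phase.
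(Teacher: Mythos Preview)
Your proposal is correct. The paper states this as an observation without an explicit proof, treating it as an immediate consequence of the protocol design together with the earlier success-count lemmas; your argument, which invokes Lemma~\ref{lem:each-phase-at-least-enough-neighbors} and Corollary~\ref{cor:lowerboundnumneighbors} for phases $1 \leq i \leq \NUMPHASES - 1$ and Lemmas~\ref{lem:min-max-bounds-nodes-space} and~\ref{lem:all-nodes-connected-final-phase} for the final phase, supplies exactly the justification the paper leaves implicit.
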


The following observation follows as each node $u$ chooses neighbors within $G(i)$ as a result of random walks that are run long enough to allow mixing within each $B_u(r^i)$.

\begin{observation}\label{obs:nodes-nbrs-uniformly-distributed}
    For all $i$, $1 \leq i \leq \NUMPHASES$, for each node $u$, the neighbors of $u$ that it chose in $G(i)$ are chosen uniformly at random within $B_u(r^i)$.
\end{observation}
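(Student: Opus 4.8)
The plan is to follow the distribution of the location of a single neighbor that $u$ selects through the three sources of randomness that produce it --- the mixing of the lazy walk over $G_u(i-1)$, the conditioning on the walk being \emph{successful}, and the random embedding of the nodes --- and then to observe that retaining any fixed-size subset of such neighbors preserves their marginal uniformity. Fix a node $u$ and a phase $i$ with $1 \leq i \leq \NUMPHASES - 1$ (the final phase is handled separately). Condition on the high-probability event of Lemma~\ref{lem:graph-in-each-phase-expander} that $G_u(i-1)$ is an expander; its node set is $V_u(i-1)$, of size $n_{i-1} = \bigTh{r^{2(i-1)}n\log n}$ by Lemma~\ref{lem:min-max-bounds-nodes-space}. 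Since an expander on $n_{i-1}$ nodes has mixing time $\bigO{\log n}$ and each lazy walk is run for $2\mixingtime = \bigTh{\log n}$ steps, which exceeds the mixing time, the endpoint of a walk started at $u$ lands at each fixed $v \in V_u(i-1)$ with probability $\frac{1}{n_{i-1}} \pm \frac{1}{n_{i-1}^{c}}$ for an arbitrarily large constant $c$; this is precisely the near-uniform sampling guarantee quoted from~\cite{randomwalk} in Section~\ref{subsec:reconfig-alg}. Thus each walk returns a node that is \emph{near-uniform over the index set} $V_u(i-1)$.

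Next I would condition on success, i.e.\ on the endpoint lying in the inner box $B_u(r^i)$, whose node set is $V_u(i) \subseteq V_u(i-1)$. Restricting a near-uniform distribution on $V_u(i-1)$ to the subset $V_u(i)$ and renormalizing leaves a distribution that is still uniform on $V_u(i)$ up to the same additive $\pm 1/\operatorname{poly}(n)$ error, so a successful walk returns a node drawn near-uniformly from $V_u(i)$. It remains to pass from uniformity over node indices to spatial uniformity over the region $B_u(r^i)$: by Definition~\ref{def:embedded} the coordinates are i.i.d.\ uniform in the unit square, hence conditioned on a node lying in $B_u(r^i)$ its position is uniform in $B_u(r^i)$, and conditioned on $V_u(i)$ these positions are mutually independent. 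Composing the two steps, the location of the endpoint of a successful walk is near-uniform in $B_u(r^i)$ (the walk picks an index near-uniformly, and the point carrying that index sits uniformly in the box). Finally, $u$ keeps an arbitrary size-$b\log n$ subset of the $\bigOm{\log n}$ successful endpoints guaranteed by Corollary~\ref{cor:lowerboundnumneighbors} and Lemma~\ref{lem:each-phase-at-least-enough-neighbors} (sampling without replacement); since the successful endpoints are exchangeable and each is marginally near-uniform in $B_u(r^i)$, every retained neighbor also has a near-uniform marginal location, which is the claim. For the final phase $i = \NUMPHASES$, node $u$ connects to \emph{all} nodes of $B_u(r^{\NUMPHASES})$, whose positions are uniform in that box by the embedding alone, so the statement holds there a fortiori.

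The main obstacle is the conditioning step, where two points need care. First, renormalizing a near-uniform law after restricting to the sub-box $V_u(i)$ must not amplify the additive error; this is fine because $|V_u(i)| = \bigTh{r^{2i}n\log n} = \bigOm{\log n}$ (Lemma~\ref{lem:min-max-bounds-nodes-space}), so dividing by the success probability $\bigTh{1}$ (a walk lands in the inner box with probability $r^2$, as computed in Lemma~\ref{lem:each-phase-at-least-enough-neighbors}) keeps the relative error at $1/\operatorname{poly}(n)$. Second, I condition on the expander event for $G_u(i-1)$, which is defined in terms of the very node positions whose law I am describing; I would argue that this conditioning perturbs the position distribution only negligibly, since expansion holds with probability $1 - 1/\operatorname{poly}(n)$ and the mixing-to-uniform conclusion is a statement about node \emph{indices} that is insensitive to the precise placement of points inside the box. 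Controlling these two error contributions, and verifying that they compose across the chain (mixing, success conditioning, embedding) without accumulating beyond $1/\operatorname{poly}(n)$, is the crux of a rigorous version of this otherwise intuitive observation.
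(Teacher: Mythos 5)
Your proposal is correct and follows essentially the same route as the paper, which states no formal proof and simply asserts that the observation ``follows as each node $u$ chooses neighbors \ldots as a result of random walks that are run long enough to allow mixing within each $B_u(r^i)$''; your chain --- mixing of the lazy walk in $G_u(i-1)$, conditioning on success in the inner box $B_u(r^i)$, passing from uniform node indices to uniform positions via the i.i.d.\ embedding, and subsampling preserving marginals --- is exactly the intended argument, made explicit. Your added care about the $\pm 1/\operatorname{poly}(n)$ near-uniformity (rather than exact uniformity, which is all that mixing can give and all the paper uses elsewhere) and about conditioning on the expander event is a faithful, more rigorous rendering of the paper's one-line justification rather than a departure from it.
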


The following lemmas show that there exists a node $u$ that can be reached from $S$ in $O(1)$ hops such that $F$ lies within $B_u(r)$.

\begin{lemma}\label{lem:distance-move-g1}
    If the message is currently at node $w$, and $w'$ is chosen as per Algorithm~\ref{alg:greedy-routing} such that $F$ does not lie within $B_{w'}(r)$, then $d(w',F) \leq d(w,F) - r/8$.
\end{lemma}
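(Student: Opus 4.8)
The plan is to produce, among the neighbors of $w$ in $G^*$, an explicit witness $v$ with $d(v,F) \le d(w,F) - r/8$; since Algorithm~\ref{alg:greedy-routing} selects $w'$ to be the neighbor minimizing the distance to $F$ \emph{over all} neighbors in $G^*$, it then follows immediately that $d(w',F) \le d(v,F) \le d(w,F) - r/8$. The witness $v$ will be one of the $\Omega(\log n)$ edges that $w$ created in $G(1) \subseteq G^*$, which by Observations~\ref{obs:each-node-enough-nbrs} and~\ref{obs:nodes-nbrs-uniformly-distributed} are placed (nearly) uniformly at random inside $B_w(r)$ via distinct random walks. The whole argument is therefore a statement holding with high probability over the construction of $G^*$, and a union bound over the at most $n^2$ choices of the pair $(w,F)$ makes it simultaneously valid for every greedy step.

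First I would record that the hypothesis $F \notin B_{w'}(r)$ forces $d(w,F) \ge r/4$; this is needed both for the witness construction to make sense and because the claimed inequality cannot hold when $d(w,F) < r/8$. I argue the contrapositive: if $d(w,F) < r/4$, then the Euclidean disk of radius $r/4$ about $w$ is a constant fraction of $B_w(r)$ (even after intersecting with the unit square), so with high probability one of $w$'s $\Omega(\log n)$ uniform neighbors $v$ lands in it, whence $d(v,F) \le d(v,w) + d(w,F) < r/4 + r/4 = r/2$. Since $w'$ is the neighbor closest to $F$, $d(w',F) \le d(v,F) < r/2$, and as $\|F-w'\|_\infty \le d(w',F) < r/2$ this gives $F \in B_{w'}(r)$, contradicting the hypothesis.

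With $D := d(w,F) \ge r/4$ in hand, the core is a short geometric argument. Let $p$ be the point at Euclidean distance $r/4$ from $w$ along the segment $wF$, so $d(p,F) = D - r/4$, and let $\Delta'$ be the disk of radius $r/8$ centred at $p$. By the triangle inequality every $q \in \Delta'$ satisfies $d(q,F) \le d(q,p) + d(p,F) \le r/8 + (D - r/4) = D - r/8$. Moreover every point of $\Delta'$ lies within Euclidean (hence $L_\infty$) distance $r/4 + r/8 < r/2$ of $w$, so $\Delta'$ is contained in the side-$r$ square about $w$; intersecting with the unit square, $\Delta' \cap [0,1]^2$ is contained in $B_w(r)$, consists entirely of points within $D - r/8$ of $F$, and still has area $\Omega(r^2)$, a constant fraction of $|B_w(r)|$ (which is at most $r^2$). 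Hence with high probability one of $w$'s $\Omega(\log n)$ uniform $G(1)$-neighbors $v$ falls in this region, and this $v$ is the desired witness.

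The only delicate point is the constant-fraction area bound when $B_w(r)$ or the disk $\Delta'$ is truncated by the boundary of the unit square: I would argue that, because $p$ itself lies in the unit square (it is on a segment joining the two interior points $w$ and $F$), at least a quarter of $\Delta'$ survives the intersection, which still beats the bound $|B_w(r)| \le r^2$ by a constant factor and keeps the hitting probability bounded below by an absolute constant $c_0 > 0$; then $(1-c_0)^{\Omega(\log n)} = n^{-\Omega(1)}$ delivers the high-probability guarantee, and the final union bound over all pairs $(w,F)$ completes the proof. The main obstacle is thus the careful bookkeeping of these boundary cases, together with verifying that the chosen constants (distance $r/4$ for the placement of $p$ and radius $r/8$ for $\Delta'$) fit together to yield exactly the $r/8$ progress.
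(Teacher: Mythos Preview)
Your proof is correct and follows essentially the same approach as the paper's: place a constant-area region (you use a disk of radius $r/8$ centered at the point $p$; the paper uses an $r/8 \times r/8$ rectangle with one side through the analogous point $m$) at distance $r/4$ from $w$ along the segment $wF$, argue that one of $w$'s $\Omega(\log n)$ uniformly placed $G(1)$-neighbors lands there with high probability, and conclude via the triangle inequality that any such neighbor is at least $r/8$ closer to $F$. Your treatment is in fact more careful than the paper's in that you explicitly derive $d(w,F) \ge r/4$ from the stated hypothesis $F \notin B_{w'}(r)$ (the paper instead tacitly assumes $F \notin B_w(r)$ from the surrounding context) and you handle boundary truncation and the final union bound over node pairs explicitly.
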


 \begin{proof}
     Let us assume that the message is currently at some node $w$ such that $F$ lies outside $B_w(r)$. Consider the line $\overline{wF}$ and let $m$ be the point on the line such that $d(w,m) = r/4$. Consider the line perpendicular to $\overline{wF}$ that passes through $m$. Let $w_1$ and $w_2$ be the points that lie on this perpendicular line at a distance of $r/8$ from $m$. Let $m'$ be the point  $\overline{wF}$ such that $d(w,m') = 3r/8$ and consider the points $w_1'$ and $w_2'$ such that $m'$ lies on $\overline{w_1'w_2'}$ and $w_1, w_2, w_1', w_2'$ form a rectangle. See Figure~\ref{fig:step-one} for an illustration.

     \begin{figure}[t]
         \centering
         \includegraphics[scale=0.4]{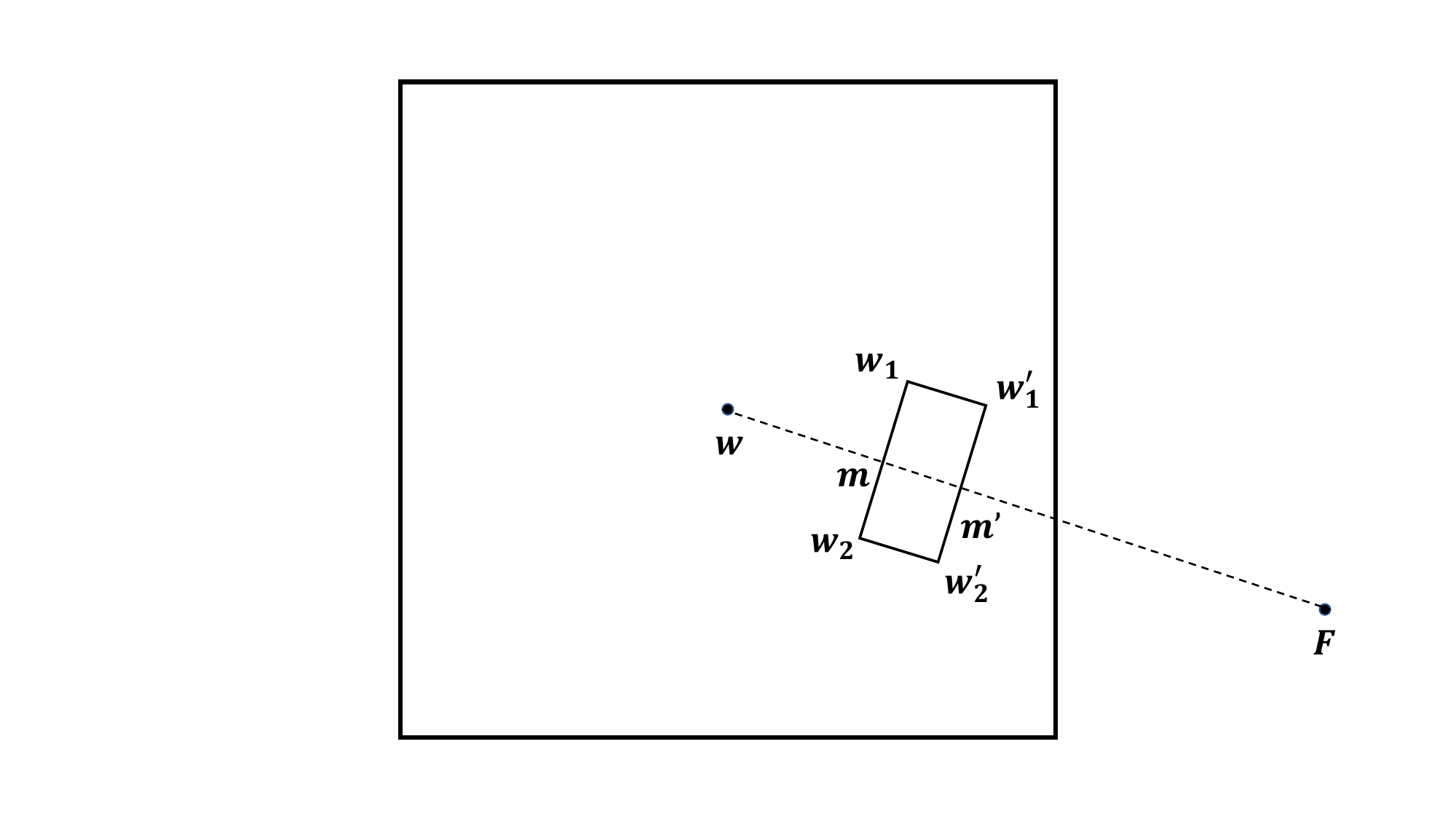}
         \caption{\footnotesize The region encompassed by the rectangle $w_1,w_2,w_1',w_2'$ is known as $A$. $d(w,m) = r/4$, $d(m,m') = r/8$, and $d(m,w_1) = d(m,w_2) = r/8$.}\label{fig:step-one}
     \end{figure}

     Notice that all points within the rectangle lie within a distance of $r/2$ from $w$ 
     and hence the entire region of the rectangle lies within $B_w(r)$.\footnote{Notice that the farthest points in the rectangle from $w$ are $w_1'$ and $w_2'$. $d(w,w_1') = d(w,w_2') = \sqrt{(\sfrac{r}{8})^2 + (\sfrac{3r}{8})^2} = \sfrac{\sqrt{10}}{8}r < \sfrac{r}{2}$.} Call this region $A$. Since the area of $A$ is a constant fraction of $B_w(r)$ ($A$ is at least $1/32$ of the area of $B_w(r)$) and $w$ has $\Omega(\log n)$ neighbors on expectation in $B_w(r)$, we see that with high probability, there is at least one neighbor of $w$, call it $w'$ in $A$.

     Now, we want to upper bound the distance $d(w',F)$, assuming $w'$ is some point in $A$. Recall that the algorithm has $w$ forward the message to its neighbor $v$ that is closest (by Euclidean distance) to $F$. As such, the distance $d(w',F)$ acts as an upper bound on the actual distance $d(v,F)$.

     Now, notice that within $A$, the worst possible location for $w'$ (resulting in the largest value of $d(w',F)$) is when $w'$ is either $w_1$ or $w_2$. In such a situation, 
         \begin{align*}d(w',F)
         &\leq d(w',m) + d(m,F) \hbox{ by the triangle inequality} \\
         &= \sfrac{r}{8} + d(w,F) - \sfrac{r}{4}  \hbox{ since $d(w,F) \geq \sfrac{r}{4}$}\\
         &\leq d(w,F) - \sfrac{r}{8}.\qedhere
     \end{align*}
 \end{proof}

Using Lemma~\ref{lem:distance-move-g1}, we prove the following lemma.

\begin{lemma}\label{lem:getting-near-v}
    Within $O(1)$ hops in $G(1)$, using Algorithm~\ref{alg:greedy-routing}, there exists a node $u$ that is reachable from $S$ such that $F$ lies within $B_u(r)$.
\end{lemma}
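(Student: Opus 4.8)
The plan is to run a potential argument on the Euclidean distance to the destination $F$, using Lemma~\ref{lem:distance-move-g1} as the per-hop progress guarantee. Let $S = w_0, w_1, w_2, \dots$ denote the sequence of nodes visited by Algorithm~\ref{alg:greedy-routing} restricted to $G(1)$, where $w_{j+1}$ is the neighbor of $w_j$ in $G(1)$ that is closest to $F$. First I would observe that as long as $F \notin B_{w_j}(r)$, the destination lies outside the side-$r$ box centered at $w_j$, so $\|F - w_j\|_\infty > r/2$ and hence $d(w_j, F) > r/2 \geq r/4$. This is exactly the regime in which the construction of Lemma~\ref{lem:distance-move-g1} applies: the region $A$ it builds contains, with high probability, a neighbor of $w_j$ (by Observations~\ref{obs:each-node-enough-nbrs} and~\ref{obs:nodes-nbrs-uniformly-distributed}, $w_j$ has $\Omega(\log n)$ neighbors distributed uniformly in $B_{w_j}(r)$, and $A$ is a constant fraction of $B_{w_j}(r)$). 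Since the greedy rule selects the neighbor closest to $F$, the actual next node $w_{j+1}$ is at least as good as the point produced in the lemma, so each such hop satisfies $d(w_{j+1}, F) \leq d(w_j, F) - r/8$.

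Next I would let $k$ be the first index at which $F \in B_{w_k}(r)$ and bound $k$ by telescoping. For every $j < k$ we have $F \notin B_{w_j}(r)$, so applying the inequality above $k$ times gives $0 \leq d(w_k, F) \leq d(S, F) - k\,(r/8)$. Because $S$ and $F$ both lie in the unit square, $d(S, F) \leq \sqrt{2}$, whence $k \leq 8\sqrt{2}/r$. As $r$ is a fixed constant, this is $O(1)$, and every edge traversed belongs to $G(1)$. Setting $u = w_k$ then yields a node reachable from $S$ in $O(1)$ hops of $G(1)$ with $F \in B_u(r)$, which is exactly the claim.

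The only points requiring care are probabilistic and geometric. Since the number of hops is $O(1)$, a union bound over these steps keeps the overall failure probability of Lemma~\ref{lem:distance-move-g1} polynomially small, so the progress bound holds at every step with high probability. I would also confirm that the region $A$ lies inside $B_{w_j}(r)$ (the intersection of the side-$r$ box with the unit square) even when $w_j$ is near the boundary of the grid; this holds because $A$ is built along the segment $\overline{w_j F}$ pointing toward a destination interior to the square, and every point of $A$ is within distance $r/2$ of $w_j$. I expect the main obstacle to be nothing more than setting up the telescoping cleanly while verifying the lower bound $d(w_j, F) \geq r/4$ at each step, which is what legitimizes the fixed decrement $r/8$ throughout; once that is secured, the constant bound on the number of hops follows immediately.
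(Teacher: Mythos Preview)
Your proposal is correct and follows essentially the same approach as the paper: invoke Lemma~\ref{lem:distance-move-g1} to get a fixed $r/8$ decrease in $d(\cdot,F)$ per hop while $F$ is outside the current side-$r$ box, then telescope against the diameter bound $d(S,F)\le\sqrt{2}$ to conclude the number of hops is at most $\lceil 8\sqrt{2}/r\rceil = O(1)$. The extra care you take (verifying $d(w_j,F)\ge r/4$, the union bound over $O(1)$ steps, and the boundary check for the region $A$) is not spelled out in the paper's proof but is consistent with its argument.
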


\begin{proof}
    From Lemma~\ref{lem:distance-move-g1}, we see that until the message reaches a node $u$ such that $F$ lies within $B_u(r)$, 
    the distance to $F$ (from the current node) decreases by at least $r/8$ every time the message is forwarded to a new node. More formally, if the message is currently at node $w$, and $w'$ is chosen as per \ref{alg:greedy-routing} such that $F$ does not lie within $B_{w'}(r)$, then $d(w',F) \leq d(w,F) - r/8$. Since $d(S,F) \leq \sqrt{2}$ and $r$ is a constant, we see that in $\lceil 8 \sqrt{2}/r \rceil = O(1)$ hops, we reach a node $u$ such that $F$ lies within $B_u(r)$.
\end{proof}

The above lemma shows that it takes $O(1)$ hops using $G(1)$ for a message to travel from $S$ to $u$. However, these are upper bounds on the number of hops taken by the greedy approach. If there exists a shorter path (e.g. via $G(0)$), both in hops and by extension propagation cost, from $S$ to some node $u'$ that satisfies the property that $F$ lies within $B_{u'}(r)$, that path will be taken instead.

The following two lemmas show that, once the message has reached node $u$ from Lemma~\ref{lem:getting-near-v}, the algorithm can find a path to $F$ using at most $O(1)$ edges from each of the graphs $G(i)$, $2 \leq i \leq \NUMPHASES$. The proof of the following lemmas is a straightforward generalization of the proofs of Lemma~\ref{lem:distance-move-g1} and Lemma~\ref{lem:getting-near-v}.

\begin{lemma}\label{lem:big-box-to-small-box-distance}
    If the message is currently at node $u$ such that $F$ lies within $B_u(r^i)$ but outside $B_u(r^{i+1})$, and the node that the message is forwarded to via Algorithm~\ref{alg:greedy-routing} is some $u'$, then $d(u',F) \leq d(u,F) - r^{i+1}/8$.
\end{lemma}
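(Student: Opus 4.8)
The plan is to mimic the geometric argument of Lemma~\ref{lem:distance-move-g1}, rescaled from the side-length $r$ of $B_w(r)$ to the side-length $r^{i+1}$ of $B_u(r^{i+1})$, and to draw the required neighbor from the edges of $G(i+1)$ rather than $G(1)$. Since $F$ lies outside the box $B_u(r^{i+1})$, its $L_\infty$ distance from $u$ exceeds $\sfrac{r^{i+1}}{2}$, so $d(u,F) > \sfrac{r^{i+1}}{2} > \sfrac{r^{i+1}}{4}$, and the point $m$ on the segment $\overline{uF}$ with $d(u,m) = \sfrac{r^{i+1}}{4}$ is well-defined and lies strictly between $u$ and $F$.

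First I would construct the analog of the rectangle $A$ from Lemma~\ref{lem:distance-move-g1}, scaling every length by $r^{i+1}$: take $m$ at distance $\sfrac{r^{i+1}}{4}$ from $u$ along $\overline{uF}$, take $u_1,u_2$ on the perpendicular through $m$ at distance $\sfrac{r^{i+1}}{8}$ from $m$, take $m'$ on $\overline{uF}$ at distance $\sfrac{3r^{i+1}}{8}$ from $u$, and let $u_1',u_2'$ complete the rectangle $A$ with corners $u_1,u_2,u_1',u_2'$. The farthest corners from $u$ are $u_1'$ and $u_2'$, at distance $\sqrt{(\sfrac{r^{i+1}}{8})^2 + (\sfrac{3r^{i+1}}{8})^2} = \sfrac{\sqrt{10}}{8}\,r^{i+1} < \sfrac{r^{i+1}}{2}$, so $A$ is contained in the Euclidean ball of radius $\sfrac{r^{i+1}}{2}$ about $u$ and hence in $B_u(r^{i+1})$. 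A direct computation gives $\operatorname{area}(A) = \sfrac{r^{i+1}}{8}\cdot\sfrac{r^{i+1}}{4} = \sfrac{r^{2(i+1)}}{32}$, i.e.\ exactly $\sfrac{1}{32}$ of the area of $B_u(r^{i+1})$, just as in the base case.

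Next I would invoke Observations~\ref{obs:each-node-enough-nbrs} and~\ref{obs:nodes-nbrs-uniformly-distributed}: in phase $i+1$ node $u$ forms $\Omega(\log n)$ edges in $G(i+1)$ to nodes chosen uniformly at random in $B_u(r^{i+1})$. Since $A$ occupies a constant ($\sfrac{1}{32}$) fraction of $B_u(r^{i+1})$, with high probability at least one such neighbor, call it $u''$, lies in $A$. Because Algorithm~\ref{alg:greedy-routing} forwards to the neighbor of $u$ closest to $F$ over all of $u$'s edges in $G^*$, the chosen node $u'$ satisfies $d(u',F)\le d(u'',F)$, so it suffices to bound $d(u'',F)$. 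The worst case is $u''\in\{u_1,u_2\}$, and the triangle inequality then gives
\begin{align*}
d(u'',F) &\le d(u'',m) + d(m,F)\\
&= \sfrac{r^{i+1}}{8} + \left(d(u,F) - \sfrac{r^{i+1}}{4}\right)\\
&= d(u,F) - \sfrac{r^{i+1}}{8},
\end{align*}
which yields $d(u',F)\le d(u,F)-\sfrac{r^{i+1}}{8}$.

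The main obstacle is not the algebra, which is a verbatim rescaling of Lemma~\ref{lem:distance-move-g1}, but ensuring the probabilistic sampling step transfers cleanly to the correct graph: one must use that the neighbors contributed by $G(i+1)$ (not $G(i)$) are uniform in $B_u(r^{i+1})$ and number $\Omega(\log n)$, so that the constant-fraction region $A$ is hit with high probability. A secondary point to address is the boundary: when $u$ is within $\sfrac{r^{i+1}}{2}$ of an edge of the unit square, $B_u(r^{i+1})$ is only the intersection with the grid, but since $m$ lies toward $F$ (which is inside the unit square) and $A$ is built toward $F$, the region $A$ still lies inside the clipped box and the argument is unaffected. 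The hypothesis $F\in B_u(r^i)$ is used only to certify that we are in the phase-$(i+1)$ regime and does not otherwise enter the distance bound.
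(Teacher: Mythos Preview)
Your proposal is correct and matches the paper's approach: the paper does not write out a separate proof but simply states that Lemma~\ref{lem:big-box-to-small-box-distance} is ``a straightforward generalization'' of Lemma~\ref{lem:distance-move-g1}, and your argument is precisely that generalization --- the same rectangle construction rescaled by $r^{i+1}$, the same $\sfrac{1}{32}$-area-fraction sampling argument applied to the $\Omega(\log n)$ uniform neighbors in $G(i+1)$, and the same triangle-inequality finish. Your added remarks on the boundary case and on why the hypothesis $F\in B_u(r^i)$ is otherwise unused are accurate and go slightly beyond what the paper spells out, but do not depart from its method.
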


\begin{lemma}\label{lem:big-box-to-small-box-hops}
    For a node $u$ and some value of $i$, $1 \leq i < \NUMPHASES$, such
    that $F$ lies within $B_u(r^i)$ but outside $B_u(r^{i+1})$, within $O(1)$ hops in $G(i+1)$, using Algorithm~\ref{alg:greedy-routing}, there exists a node $u'$ that is reachable from $u$ such that $F$ lies within $B_{u'}(r^{i+1})$.
\end{lemma}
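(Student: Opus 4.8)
The plan is to mirror the base-case argument used for Lemma~\ref{lem:getting-near-v}, but with the per-hop progress guarantee of Lemma~\ref{lem:distance-move-g1} replaced by its scale-$(i+1)$ analogue, Lemma~\ref{lem:big-box-to-small-box-distance}, and then to telescope the resulting distance decrements. First I would record the starting distance: since $F \in B_u(r^i)$, the $L_\infty$ distance from $u$ to $F$ is at most $r^i/2$, so the Euclidean distance satisfies $d(u,F) \leq \sqrt{2}\,r^i/2 = r^i/\sqrt{2}$.

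Next I would run Algorithm~$\GREEDYROUTING$ from $u$ and track the current holder $w$ of the message. As long as $F$ has not yet entered the current node's small box $B_w(r^{i+1})$, the hypotheses needed by Lemma~\ref{lem:big-box-to-small-box-distance} are met at $w$: its proof (a straightforward generalization of that of Lemma~\ref{lem:distance-move-g1}) only requires that $F \notin B_w(r^{i+1})$ and that $w$ has $\bigOm{\log n}$ neighbors placed uniformly at random inside $B_w(r^{i+1})$, both of which hold by Observations~\ref{obs:each-node-enough-nbrs} and~\ref{obs:nodes-nbrs-uniformly-distributed}. Hence the greedily chosen next node $w'$ satisfies $d(w',F) \leq d(w,F) - r^{i+1}/8$. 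A greedy step might instead select a shorter edge from some other $G(j)$, but that only decreases the distance further, so the per-hop decrement is at least $r^{i+1}/8$ in every case.

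Then I would telescope. Because each qualifying hop strictly decreases the nonnegative quantity $d(\cdot,F)$ by at least $r^{i+1}/8$, the number of hops taken while $F$ remains outside the current small box is at most
\[ \frac{d(u,F)}{r^{i+1}/8} \;\le\; \frac{r^i/\sqrt{2}}{r^{i+1}/8} \;=\; \frac{8}{\sqrt{2}\,r} \;=\; \bigO{1}, \]
since $r$ is a fixed constant. After the last such hop the loop condition must fail, so the current node is a node $u'$ reachable from $u$ with $F \in B_{u'}(r^{i+1})$, which is exactly the claim. Note that the termination of the decrementing process, rather than any containment argument, is what certifies that $F$ finally lands inside $B_{u'}(r^{i+1})$.

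The main obstacle will be bookkeeping the high-probability guarantees and maintaining the preconditions along the path, not any new inequality. Lemma~\ref{lem:big-box-to-small-box-distance} supplies the per-hop decrement only with high probability (it relies on a constant-fraction subrectangle of $B_w(r^{i+1})$ containing one of $w$'s uniformly placed neighbors), so I would take a union bound over the $\bigO{1}$ nodes on the routing path to ensure progress holds simultaneously at every step. The one point to state carefully is that I invoke only the small-box exclusion $F \notin B_w(r^{i+1})$ at intermediate nodes, so no $L_\infty$-versus-$L_2$ slack is needed beyond the single initial distance bound on $d(u,F)$; this is what keeps the argument clean despite the current node's larger box $B_w(r^i)$ not provably containing $F$ at every intermediate step.
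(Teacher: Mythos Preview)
Your proposal is correct and follows essentially the same approach as the paper, which simply states that the proof is a straightforward generalization of the proofs of Lemmas~\ref{lem:distance-move-g1} and~\ref{lem:getting-near-v}. Your write-up merely makes explicit the initial distance bound $d(u,F)\le r^i/\sqrt{2}$, the telescoping of the $r^{i+1}/8$ decrements, and the observation that the per-hop lemma's proof only needs $F\notin B_w(r^{i+1})$ at the current node---all of which are the details the paper leaves implicit.
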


Recall that for every node $u$, all of the nodes that lie within $B_u(r^{\NUMPHASES})$ are neighbors of $u$ in $G(\NUMPHASES)$. Thus, once a node $v$ is reached such that $F$ lies within $B_u(r^{\NUMPHASES})$, the message can be directly forwarded to $F$ as $F$ is a neighbor of $v$ in $G(\NUMPHASES)$. Thus, we see that Algorithm~\ref{alg:greedy-routing} results in the message being routed from $S$ to $F$. Each subgraph is used to route the message for $O(1)$ hops and there are $\NUMPHASES$ such subgraphs. Thus, we have the following lemma.

\begin{lemma}\label{lem:greedy-routing-hops}
    Consider the graph $G^*$ obtained at the end of Algorithm~\ref{alg:reconfig}. For any
    source node $S$ and any destination node $F$, routing a packet from $S$ to
    $F$ using \ref{alg:greedy-routing} takes $\bigO{\NUMPHASES}$ hops.
\end{lemma}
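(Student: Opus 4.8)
The plan is to prove the hop bound by chaining together the per-phase progress lemmas established above, using an induction on the phase index $i$ that tracks how tightly the current node ``boxes in'' the destination $F$. The high-level picture is that greedy routing passes through a sequence of nodes $S = u_0, u_1, \dots, u_{\NUMPHASES}$ where $u_i$ satisfies $F \in B_{u_i}(r^i)$, and each transition $u_{i} \to u_{i+1}$ costs only $O(1)$ hops; since there are $\NUMPHASES$ phases, the total is $O(\NUMPHASES) = O(\log n)$.

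First I would establish the base case. By Lemma~\ref{lem:getting-near-v}, starting from $S$ and following Algorithm~\ref{alg:greedy-routing}, the message reaches in $O(1)$ hops a node $u_1$ with $F \in B_{u_1}(r) = B_{u_1}(r^1)$. Then I would run the inductive step: assuming the message is at a node $u_i$ with $F \in B_{u_i}(r^i)$ for some $1 \le i < \NUMPHASES$, I invoke Lemma~\ref{lem:big-box-to-small-box-hops} to conclude that within $O(1)$ further hops (using edges of $G(i+1)$) greedy routing reaches a node $u_{i+1}$ with $F \in B_{u_{i+1}}(r^{i+1})$. Iterating from $i=1$ up to $i = \NUMPHASES - 1$ shrinks the enclosing box geometrically until $F$ lies in $B_{u_{\NUMPHASES}}(r^{\NUMPHASES})$. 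Finally, by the construction of the last phase, every node is connected in $G(\NUMPHASES)$ to all nodes inside its own $r^{\NUMPHASES}$-box, so $F$ is a direct neighbor of $u_{\NUMPHASES}$; greedy then forwards the message straight to $F$ in a single hop. Summing, the hop count is $O(1) + (\NUMPHASES-1)\cdot O(1) + 1 = O(\NUMPHASES)$.

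The step I expect to be the main obstacle, and the one requiring the most care, is justifying that the \emph{unrestricted} greedy rule (which always forwards to the neighbor closest to $F$ over \emph{all} of $G^*$) actually realizes the per-phase progress promised by the lemmas, which are phrased in terms of a specific subgraph $G(i+1)$. The key observation is that the lemmas guarantee the \emph{existence} of a suitably closer neighbor in $G(i+1) \subseteq G^*$ (via the distance-contraction estimate of Lemma~\ref{lem:big-box-to-small-box-distance}, which relies on a constant fraction of the $r^i$-box being populated by $\Omega(\log n)$ uniformly sampled neighbors). Since greedy picks the globally closest neighbor, it can only do at least as well, so the distance to $F$ strictly decreases at every hop and no local minimum can trap the walk. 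This monotone contraction simultaneously guarantees correctness (the walk never stalls and must eventually enter the final box) and furnishes the $O(1)$-hop-per-phase upper bound; any shortcut edges from other subgraphs only help, so the stated bound remains a valid upper bound. The remaining care is purely bookkeeping: the existence guarantees hold with high probability (per the box-population and neighbor-sampling lemmas), so a union bound over the $O(\log n)$ phases preserves the high-probability guarantee for the entire route.
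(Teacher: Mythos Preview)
Your proposal is correct and follows essentially the same approach as the paper: chain Lemma~\ref{lem:getting-near-v} with repeated applications of Lemma~\ref{lem:big-box-to-small-box-hops} to shrink the enclosing box phase by phase, then use the direct $G(\NUMPHASES)$ edge for the last hop, summing to $O(\NUMPHASES)$. The paper's argument is terser but identical in structure, and your observation that unrestricted greedy can only do at least as well as the per-phase guarantees mirrors the paper's remark that the $O(1)$-per-phase counts are upper bounds.
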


Now we argue about the propagation cost of the path taken to route the message from $S$ to $F$.

\begin{lemma}\label{lem:greedy-routing-stretch}
    Consider the graph $G^*$ obtained at the end of Algorithm~\ref{alg:reconfig}. For any source node $S$ and any destination node $F$, the propagation cost of the routed path from $S$ to $F$ due to Algorithm~\ref{alg:greedy-routing} is $O(d(S,F))$, where $d(S,F)$ is the Euclidean distance between $S$ and $F$.
\end{lemma}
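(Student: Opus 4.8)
The plan is to bound the total propagation cost $\sum_k d(w_k,w_{k+1})$ of the greedy path $S=w_0,w_1,\dots,w_m=F$ by a geometric series in the box side-lengths $r^i$, and --- crucially --- to show that this series effectively begins at the scale $r^{j^*}=\Theta(d(S,F))$ rather than at $r^1$. A naive summation over all $\NUMPHASES$ phases (using that each $G(i)$ contributes $O(1)$ hops of edge-length $O(r^i)$, by Lemmas~\ref{lem:getting-near-v} and~\ref{lem:big-box-to-small-box-hops}) only yields $\sum_{i\ge 1}O(r^i)=O(1)$, which matches $O(d(S,F))$ merely when $d(S,F)=\Omega(1)$. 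The refinement I would make is to observe that when $S$ and $F$ are close, the greedy walk never needs to use the coarse subgraphs $G(1),\dots,G(j^*)$ for macroscopic movement at all.

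First I would record two structural facts. (i) The distance to $F$ is monotonically non-increasing along the path: by Lemmas~\ref{lem:distance-move-g1} and~\ref{lem:big-box-to-small-box-distance} the current node always has a neighbor strictly closer to $F$, and Algorithm~\ref{alg:greedy-routing} forwards to the closest such neighbor, so $d(w_{k+1},F)\le d(w_k,F)$. (ii) Consequently, by the triangle inequality, every edge satisfies $d(w_k,w_{k+1})\le d(w_k,F)+d(w_{k+1},F)\le 2\,d(w_k,F)$. Hence it suffices to bound $\sum_k d(w_k,F)$, and this bound is insensitive to which subgraph $G(j)$ actually supplies the greedy edge.

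Next I would group the hops by scale. Let $j^*$ be the largest integer with $r^{j^*}\ge 2\,d(S,F)$; since $L_\infty(S,F)\le d(S,F)$ this forces $F\in B_S(r^{j^*})$, so the message starts already ``inside'' scale $j^*$, and maximality gives $2d(S,F)\le r^{j^*}<(2/r)\,d(S,F)$, i.e.\ $r^{j^*}=\Theta(d(S,F))$. For each $i\ge j^*$ consider the contiguous block of hops with $r^{i+1}<d(w_k,F)\le r^i$ (contiguous because $d$ is monotone). Using $r\le 1/4$ to convert between $L_2$- and $L_\infty$-balls, such a node has $F\in B_{w_k}(r^{i-1})\setminus B_{w_k}(r^{i+2})$, so Lemma~\ref{lem:big-box-to-small-box-distance} guarantees per-step progress $\Omega(r^{i})$ while the block shrinks $d$ by at most $r^i(1-r)$; thus each block is only $O(1)$ hops. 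Within the block $d(w_k,F)=O(r^i)$, so by fact (ii) it contributes $O(1)\cdot O(r^i)=O(r^i)$. Summing the geometric series from scale $j^*$ downward,
\[
\sum_k d(w_k,w_{k+1})\;\le\;\sum_{i\ge j^*}O(r^i)\;=\;O(r^{j^*})\;=\;O(d(S,F)),
\]
with the constants depending only on the fixed constant $r$; the final direct hop to $F$ (an edge of $G(\NUMPHASES)$ of length $O(r^{\NUMPHASES})=O(d(S,F))$) is absorbed into the same bound.

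The hard part is exactly this scale-matching: one must prove that the greedy walk does not pay the full $O(1)$ cost of the coarse subgraphs when $d(S,F)$ is small. This rests on the observation that $F\in B_S(r^{j^*})$ with $r^{j^*}=\Theta(d(S,F))$, so no movement through boxes coarser than $\Theta(d(S,F))$ ever occurs, combined with the monotonicity of $d(\cdot,F)$ that prevents the walk from drifting back to a coarser scale. A secondary, more routine nuisance is that Algorithm~\ref{alg:greedy-routing} picks the globally closest neighbor---possibly an edge of some $G(j)$ with $j\ne i+1$---so the per-scale edge-length control must come from the triangle-inequality estimate $d(w_k,w_{k+1})\le 2d(w_k,F)$ rather than from the nominal $O(r^{i+1})$ length of a $G(i+1)$ edge, and the $L_\infty$-ball hypotheses of the referenced lemmas must be matched to the $L_2$-distance bookkeeping (harmless, since these balls differ only by constant factors when $r$ is constant).
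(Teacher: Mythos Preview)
Your proof is correct and follows essentially the same approach as the paper: both identify the starting scale $j^*$ with $r^{j^*}=\Theta(d(S,F))$, note that each scale contributes $O(1)$ hops of length $O(r^i)$, and sum the resulting geometric series. Your handling of the edge-length bound via monotonicity and the triangle inequality ($d(w_k,w_{k+1})\le 2\,d(w_k,F)$) is in fact slightly more careful than the paper's, which informally attributes each hop to a specific $G(i)$ and bounds its length by $\sqrt{2}\,r^i$ without addressing the possibility that greedy selects an edge from a coarser subgraph.
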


\begin{proof}
    It is either the case that (i) $F$ lies outside $B_S(r)$, (ii) $F$ lies within $B_S(r^i)$ but outside $B_S(r^{i+1})$ for some $1 \leq i < \NUMPHASES$, or (iii) $F$ lies in $B_S(r^{\NUMPHASES})$. We analyze each case separately and show that in each case, the propagation cost is $O(d(S,F))$.

    \noindent \textbf{Case (i) $F$ lies outside $B_S(r)$:} If $F$ lies outside $B_S(r)$, then $SF \geq r$. From Lemma~\ref{lem:getting-near-v} and Lemma~\ref{lem:big-box-to-small-box-hops}, we see that the message travels at most a constant number of hops in each $G(i)$, $1 \leq i \leq \NUMPHASES$. Let $c$ be the largest constant among all such constants. 
    Consider a straight line drawn from $S$ to $F$.
    Each of the at most $c$ hops in $G(i)$, $1 \leq i \leq \NUMPHASES$, say between some nodes $u$ and $v$ in $G(i)$ results in an at most $\sqrt{2} r^i$ additional cost to the propagation cost. Thus, the total propagation cost in a path taken by the message is $\leq c \sqrt{2}  \sum_{j=1}^{\NUMPHASES} r^j \leq c \sqrt{2}r/(1-r) = O(d(S,F))$.

    \noindent \textbf{Case (ii) $F$ lies within $B_S(r^i)$ but outside $B_S(r^{i+1})$ for some $1 \leq i < \NUMPHASES$:} If $F$ lies outside $B_S(r^{i+1})$ for some $1 \leq i < \NUMPHASES$, then $SF \geq r^{i+1}$. Using a similar analysis to Case (i), we see that the total propagation cost in a path taken by the message is $\leq c \sqrt{2}  \sum_{j=i+1}^{\NUMPHASES} r^j \leq c \sqrt{2}r^{i+1}/(1-r) = O(d(S,F))$.

    \noindent \textbf{Case (iii) $F$ lies in $B_S(r^{\NUMPHASES})$:} If $F$ lies within $B_S(r^{\NUMPHASES})$, then $F$ is an immediate neighbor of $S$ and as such, the propagation cost of the path is exactly $d(S,F)$.
\end{proof}

From Lemma~\ref{lem:greedy-routing-hops} and Lemma~\ref{lem:greedy-routing-stretch}, we arrive at our main theorem.
}

\begin{theorem}
    Consider the graph $G^*$ obtained at the end of Algorithm~\ref{alg:reconfig}. For any source node $S$ and any destination node $F$, routing a packet from $S$ to $F$ using Algorithm~\ref{alg:greedy-routing} takes $O(\log n)$ hops and the propagation cost of the routed path is $O(d(S,F))$, where $d(S,F)$ is the Euclidean distance between $S$ and $F$.
\end{theorem}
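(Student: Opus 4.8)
The plan is to exploit the nested box structure of $G^*$. Recall that $G^*$ is the union of the per-phase graphs $G(1), \dots, G(\NUMPHASES)$, where in $G(i)$ each node $u$ holds $\bigOm{\log n}$ neighbors placed uniformly at random inside the side-$r^i$ box $B_u(r^i)$ about $u$ (Observations~\ref{obs:each-node-enough-nbrs} and~\ref{obs:nodes-nbrs-uniformly-distributed}), and by Theorem~\ref{thm:graph-conversion-alg-works} the finest graph $G(\NUMPHASES)$ contains a random geometric graph of radius $\rho = \bigTh{\sqrt{(\log n)/n}} = \bigTh{r^{\NUMPHASES}}$, so that \emph{every} node inside $B_u(r^{\NUMPHASES})$ is already a neighbor of $u$. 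Since Algorithm~\ref{alg:greedy-routing} forwards to the neighbor of minimum Euclidean distance to $F$, to bound the route it suffices to exhibit at each step \emph{some} neighbor making sufficient progress toward $F$; the greedy choice can only be at least as close. The strategy is to show the message ``zooms in'' on $F$ one scale at a time: from a node whose scale-$i$ box contains $F$, a constant number of hops reaches a node whose scale-$(i+1)$ box contains $F$.

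The crux is a single-scale progress argument. Suppose the message is at a node $w$ with $F \in B_w(r^i)$ but $F \notin B_w(r^{i+1})$, so $d(w,F) = \bigOm{r^{i+1}}$. I would carve out a small rectangle $A$ straddling the segment $\overline{wF}$, centered roughly at distance $r^{i+1}/4$ from $w$ and of dimensions $\bigTh{r^{i+1}} \times \bigTh{r^{i+1}}$, chosen so that (a) $A$ lies entirely inside $B_w(r^{i+1})$ --- a one-line check, since its farthest corner sits at distance $(\sqrt{10}/8)\,r^{i+1} < (1/2)\,r^{i+1}$ from $w$ --- and (b) by the triangle inequality every point of $A$ is within $d(w,F) - \bigOm{r^{i+1}}$ of $F$. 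As $A$ occupies a constant fraction of $B_w(r^{i+1})$ and $w$ has $\bigOm{\log n}$ uniformly placed neighbors there, with high probability some neighbor falls in $A$, so the greedy hop decreases $d(\cdot,F)$ by $\bigOm{r^{i+1}}$ (in particular the route never stalls). Since $B_w(r^i)$ has diameter $\sqrt{2}\,r^i = \bigO{r^{i+1}}$ (as $r$ is a constant), only $\bigO{1}$ such hops are needed before $F$ enters a scale-$(i+1)$ box; the initial descent from $S$, when $F$ lies outside $B_S(r)$, is the same argument with $r$ in place of $r^{i+1}$.

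Assembling these pieces yields both bounds. Each of the $\NUMPHASES = \bigO{\log n}$ scales contributes $\bigO{1}$ hops, and once a node $v$ with $F \in B_v(r^{\NUMPHASES})$ is reached, $F$ is a direct neighbor in the random geometric graph $G(\NUMPHASES)$ and is reached in one final hop; hence $\bigO{\log n}$ hops overall. For the propagation cost, let $i_0$ be the largest index with $F \in B_S(r^{i_0})$, so $F \notin B_S(r^{i_0+1})$ and $d(S,F) = \bigOm{r^{i_0+1}}$; the descent then only uses edges of $G(j)$ for $j \geq i_0+1$, each of length at most $\sqrt{2}\,r^{j}$, and with $\bigO{1}$ hops per scale the total cost telescopes as $\bigO{\sum_{j \geq i_0+1} r^{j}} = \bigO{r^{i_0+1}} = \bigO{d(S,F)}$; the boundary case $F \in B_S(r^{\NUMPHASES})$ is immediate, with cost exactly $d(S,F)$. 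I expect the main obstacle to be the single-scale progress argument: one must fix $A$ precisely enough to be simultaneously a constant fraction of the box, contained in it, and worth an $\bigOm{r^{i+1}}$ decrease, and then take a union bound over all $\bigO{\log n}$ hops (and over all nodes) so that the ``good neighbor exists'' guarantee, and hence the hop-length-by-scale accounting, holds along the entire route with high probability.
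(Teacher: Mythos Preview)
Your proposal is correct and follows essentially the same argument as the paper: the same rectangle-along-$\overline{wF}$ construction (with the identical $\sqrt{10}/8$ computation) to guarantee a neighbor yielding $\bigOm{r^{i+1}}$ progress, the same $\bigO{1}$-hops-per-scale count giving $\bigO{\NUMPHASES}=\bigO{\log n}$ total hops, and the same geometric-series bound $\sum_{j\ge i_0+1} r^j = \bigO{r^{i_0+1}}=\bigO{d(S,F)}$ for the propagation cost. Your remark about the union bound over all hops and nodes is a welcome bit of extra care that the paper leaves implicit.
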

\section{Conclusion and Future Work}
\label{sec:conclusion}
We consider this work as a theoretical step towards the design and analysis of P2P topologies and associated communication
protocols. While our theoretical framework is only a rough approximation to real-world P2P networks, it provides a
rigorous model for the design and analysis of P2P protocols that takes into account propagation delays that depend on
not only the graph topology but also on the distribution of nodes
    across the Internet. Our model is inspired by several studies on the Internet, particularly
the Vivaldi system~\cite{vivaldi}, which posits how nodes on the Internet can be assigned coordinates in a
low-dimensional, even 2-dimensional, Euclidean space, that quite accurately captures the point-to-point latencies
between nodes. We have additionally performed empirical research, via simulation, on
the Bitcoin P2P network that suggests that the model is a reasonable approximation to a
real-world P2P network. We have also performed preliminary simulations of our routing and broadcast protocols
which broadly support our theoretical bounds. We leave a detailed experimental study for future work.

\newpage
\bibliographystyle{plainurl}
\bibliography{references}

\end{document}